\newif\ifarxiv\arxivtrue
\newif\ifnotarxiv
\newcommand{\arxiv}[1]{#1}            							
\newcommand{\arxiv}[1]{}
\renewcommand{\subsection}[1]{\paragraph{{#1}.}}
\renewcommand{\subsubsection}[1]{\paragraph{{#1}.}}
\pgfplotsset{compat=1.5}
\newenvironment{proof}{\noindent{\bf Proof : \ }}{\hfill$\Box$\par\medskip}
\newtheorem{theorem}{Theorem}[section]
\renewcommand*{\@opargbegintheorem}[3]{\trivlist
      \item[\hskip \labelsep{\bfseries #1\ #2}] \textbf{(#3)}\ \itshape}
\newtheorem{corollary}[theorem]{Corollary}
\newtheorem{lemma}[theorem]{Lemma}
\newtheorem{proposition}[theorem]{Proposition}
\newtheorem{definition}[theorem]{Definition}
\newtheorem{example}[theorem]{Example}
\newtheorem{fact}[theorem]{Fact}
\newenvironment{proofof}[1]{\begin{trivlist} \item {\bf Proof
#1:~~}}
  {\qed\end{trivlist}}
\newcommand{\namedref}[2]{\hyperref[#2]{#1~\ref*{#2}}}
\newcommand{\best}{\ensuremath{o_1}}
\newcommand{\gval}{\ensuremath{g}}
\newcommand{\smallhval}{\ensuremath{g_+}}
\newcommand{\bigHval}{\ensuremath{g_1}}
\newcommand{\tval}{\ensuremath{t}}
\newcommand{\smallrval}{\ensuremath{t_+}}
\newcommand{\bigRval}{\ensuremath{t_1}}
\newcommand{\margain}[2]{f\left( #1\, \middle| \, #2 \right)}
\newcommand{\marden}[2]{\rho\left( #1\, \middle| \, #2 \right)}
\renewcommand{\O}[1]{\ensuremath{\mathcal{O}\left(#1\right)}}
\newcommand{\Oin}[1]{\ensuremath{\mathcal{O}(#1)}}
\newcommand{\eps}{\epsilon}
\newcommand{\LB}{\mathsf{LB}}
\newcommand{\clast}{c^*}
\newcommand{\cbest}{c\left(\best\right)}
\newcommand{\calG}{\mathcal{G}}
\newcommand{\calT}{\mathcal{T}}
\newcommand{\opt}{\mathsf{OPT}}
\newcommand{\mdef}[1]{{\ensuremath{#1}}\xspace}  
\DeclareMathOperator*{\argmax}{argmax}
\DeclareMathOperator*{\eexp}{exp}
\newcommand{\superscript}[1]{\ensuremath{^{\mbox{\tiny{\textit{#1}}}}}\xspace}
\def \th {\superscript{th}}     
\newcommand{\set}[1]{\mdef{\left\{#1\right\}}}                        
\newcommand{\ignore}[1]{}
\newif\ifnotes\notestrue 
\newcommand{\samson}[1]{\textcolor{purple}{{\bf (Samson:} {#1}{\bf ) }} \marginpar{\tiny\bf
             \begin{minipage}[t]{0.5in}
               \raggedright S:
            \end{minipage}}}            							
\newcommand{\samson}[1]{}
\renewcommand*{\@fnsymbol}[1]{\textcolor{blue}{\ensuremath{\ifcase#1\or *\or \dagger\or \ddagger\or
 \mathsection\or \triangledown\or \mathparagraph\or \|\or **\or \dagger\dagger
   \or \ddagger\ddagger \else\@ctrerr\fi}}}
\providecommand{\email}[1]{\href{mailto:#1}{\nolinkurl{#1}\xspace}}
\newcommand{\COMMENTED}[1]{{}}
\def \Rp       {\mdef{\mathbb{R}^{+}}}                   
\renewcommand{\exp}[1]{\ensuremath{e^{#1}}}
\newcommand{\gmax}{\textsc{Greedy+Max}\xspace}
\newcommand{\gormax}{\textsc{GreedyOrMax}\xspace}
\newcommand{\smax}{\textsc{Sieve+Max}\xspace}
\newcommand{\dmax}{\textsc{Distributed Sieve+Max}\xspace}
\newcommand{\greedy}{\textsc{Greedy}\xspace}
\newcommand{\sieve}{\textsc{Sieve}\xspace}
\newcommand{\sormax}{\textsc{SieveOrMax}\xspace}
\newcommand{\partenum}{\textsc{PartialEnum+Greedy}\xspace}
\newcommand{\branching}{\textsc{BranchingMRT}\xspace}
\newcommand{\machines}{{m}}
\newcommand{\minnk}{\tilde K}
\newcommand{\half}{\nicefrac{1}{2}}
\begin{document}
\title{\textsc{``Bring Your Own Greedy''+Max}: Near-Optimal $\half$-Approximations for Submodular Knapsack}
\author{Dmitrii Avdiukhin\thanks{Indiana University, Bloomington. \email{davdyukh@iu.edu}.}
\and
Grigory Yaroslavtsev\thanks{Indiana University, Bloomington \& The Alan Turing Institute. \email{gyarosla@iu.edu}.}
\and
Samson Zhou\thanks{Carnegie Mellon University \& Indiana University, Bloomington. \email{samsonzhou@gmail.com}.}
}
\maketitle

\begin{abstract}
The problem of selecting a small-size representative summary of a large dataset is a cornerstone of machine learning, optimization and data science. 
Motivated by applications to recommendation systems and other scenarios with query-limited access to vast amounts of data, we propose a new rigorous algorithmic framework for a standard formulation of this problem as a submodular maximization subject to a linear (knapsack) constraint. 
Our framework is based on augmenting all partial \greedy solutions with the best additional item. 
It can be instantiated with negligible overhead in any model of computation, which allows the classic \greedy algorithm and its variants to be implemented. 
We give such instantiations in the offline (\gmax), multi-pass streaming (\smax) and distributed (\dmax) settings. 
Our algorithms give ($\half-\eps$)-approximation with most other key parameters of interest being near-optimal. 
Our analysis is based on a new set of first-order linear differential inequalities and their robust approximate versions. 
Experiments on typical datasets (movie recommendations, influence maximization) confirm scalability and high quality of solutions obtained via our framework. 
Instance-specific approximations are typically in the 0.6-0.7 range and frequently beat even the $(1-1/e) \approx 0.63$ worst-case barrier for polynomial-time algorithms.
\end{abstract}

\ignore{
\listoftodos
\arxiv{
\newpage

\tableofcontents
\newpage
}
}

\section{Introduction}
A fundamental problem in many large-scale machine learning, data science and optimization tasks is finding a small representative subset of a big dataset. 
This problem arises from applications in recommendation systems~\cite{LeskovecKGFVG07,ElAriniG11,BogunovicMSC17,MitrovicBNTC17,YuXC18,AvdiukhinMYZ19}, exemplar-based clustering~\cite{GomesK10}, facility location~\cite{LindgrenWD16}, image processing~\cite{IyerB19}, viral marketing~\cite{HartlineMS2008}, principal component analysis~\cite{KhannaGPK15}, and document summarization~\cite{LinB11,WeiLKB13,SiposSSJ12}
and can often be formulated as constrained monotone submodular optimization under various constraints such as cardinality~\cite{BadanidiyuryMKK14,BateniEM18,KazemiMZLK19}, knapsack~\cite{HuangKY17}, matchings~\cite{ChakrabartiK14}, and matroids~\cite{CalinescuCPV11, AnariHNP19} due to restrictions demanded by space, budget, diversity, fairness or privacy. 
As a result, constrained submodular optimization has been recently and extensively studied in various computational models, including centralized~\cite{NemhauserWF78}, distributed~\cite{MirzasoleimanKSK13,KumarMVV15,BarbosaENW15,MirrokniZ15,MirzasoleimanZK16,BarbosaENW16,LiuV19}, streaming~\cite{BadanidiyuryMKK14,BuchbinderFS15,NorouziFardTMZMS18,AgarwalSS18,KazemiMZLK19}, and adaptive~\cite{GolovinK11,BalkanskiS18,BalkanskiRS19,FahrbachMZ19,EneN19,ChekuriQ19} among others. 

In this paper we focus on monotone submodular maximization \emph{under a knapsack constraint}, which captures the scenario when the representative subset should have a small cost or size. 
While a number of algorithmic techniques exist for this problem, there are few that robustly scale to large data and can be easily implemented in various computing frameworks. 
This is in contrast with a simpler \emph{cardinality-constrained} version in which only the number of elements is restricted. 
In this setting the celebrated \greedy algorithm of~\cite{NemhauserWF78} enjoys both an optimal approximation ratio and a simplicity that allows easy adaptation in various environments. 
For knapsack constraints, such a simple and universal algorithm is unlikely. 
In particular, \greedy does not give any approximation guarantee.

We develop a framework that augments solutions constructed by \greedy and its variations and gives almost $\nicefrac{1}{2}$-approximations\footnote{Algorithm gives an $\alpha$-approximation if it outputs $S$ such that $f(S)\ge\alpha f(\opt)$\arxiv{, where $\opt$ is optimum solution}.} in various computational models. 
For example, in the multi-pass streaming setting we achieve optimal space and almost optimal number of queries and running time. 
We believe that our framework is robust to the choice of the computational model as it can be implemented with essentially the same complexity as that of running \greedy and its variants.

\subsection{Preliminaries and our contributions}\label{sec:results}
A set function $f \colon 2^U \to \mathbb R$ is \emph{submodular} if for every $S \subseteq T \subseteq U$ and $e \in U$ it holds that $f(e \cup T) - f(T)\le f(e \cup S)-f(S)$.  
Moreover, $f$ is \textit{monotone} if for every $S \subseteq T \subseteq U$ it holds that $f(T) \ge f(S)$. 
\arxiv{Intuitively, elements in the universe contribute non-negative utility, but their resulting gain is diminishing as the size of the set increases.} 
In the monotone submodular maximization problem subject to a knapsack constraint, each item $e$ has cost $c(e)$.
Given a parameter $K>0$, the task is to maximize a non-negative monotone submodular function $f(S)$ under the constraint $c(S):=\sum_{e\in S}c(e)\le K$. 
Without loss of generality, we assume that $\min_{e\in S}c(e) \ge 1$, which can be achieved by rescaling the costs and taking all items with cost $0$. 
Then $\minnk = \min(n, K)$ is an upper bound on the number of elements in any feasible solution.

Any algorithm for submodular maximization requires \emph{query access} to $f$. 
As query access can be expensive, the number of queries is typically considered one of the performance metrics. 
Furthermore, in some critical applications of submodular optimization such as recommendation systems, another constraint often arises from the fact that only queries to feasible sets are allowed (e.g. when click-through rates can only be collected for sets of ads which can be displayed to the users). 
Practical algorithms for submodular optimization hence typically only make such queries, an assumption commonly used in the literature (see e.g.~\cite{NorouziFardTMZMS18}). 
For \emph{any} algorithm that only makes queries on feasible sets, it is easy to show that $\Omega(n^2)$ queries are required to go beyond $\nicefrac12$-approximation under various assumptions on $f$ (Theorem~\ref{thm:query-lb}).  
Hence it is natural to ask whether we can get a $\nicefrac12$-approximation, while keeping other performance metrics of interest nearly optimal and hence not compromising on practicality. 
We answer this question positively.

We first state the following simplified result in the most basic \emph{offline model} (i.e. when an algorithm can access any element at any time) to illustrate the main ideas and then improve parameters in our other results. 
In this model, we are given an integer knapsack capacity $K\in\mathbb{Z}^+$ and a set $E$ of elements $e_1,\ldots, e_n$ from a finite universe $U$.\footnote{W.l.o.g. for all $e$ we have $1\le c(e)\le K$  as one can rescale the capacity and costs and filter out all items with cost more than $K$ (in all our results this means replacing $K$ with the aspect ratio $K/\min_{e \in E} c(e)$).}

\begin{theorem}[Offline \gmax]\label{thm:gmax} \\
Let $\minnk = \min(n, K)$. There exists an offline algorithm \gmax (Algorithm~\ref{alg:gmax}) that gives a $\half$-approximation for the submodular maximization problem under a knapsack constraint with query complexity and running time $\O{\minnk n}$ (Theorem~\ref{thm:offline}).
\end{theorem}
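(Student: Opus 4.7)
The plan is to specify \gmax precisely and then bound its complexity and approximation ratio separately. Starting from $S_0 = \emptyset$, at iteration $i$ the algorithm selects a feasible element $e_i \in \argmax_e f(e \mid S_{i-1})/c(e)$ (density greedy) and sets $S_i = S_{i-1} \cup \{e_i\}$. Simultaneously, it records a feasible element $b_i \in \argmax_e f(e \mid S_i)$ maximizing the raw marginal gain over $S_i$. The algorithm iterates until no feasible addition exists and returns the best among $S_i \cup \{b_i\}$ over all iterations $i \ge 0$, which in particular covers the singleton $\{b_0\}$, i.e.\ the best single element feasible with respect to $\emptyset$.

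For the complexity bound, each iteration performs a single sweep over the $n$ candidates, evaluating $f(e \mid S_{i-1})$ once per element and simultaneously selecting both $e_i$ and $b_{i-1}$. Since each selected $e_i$ has cost at least $1$, the number of iterations is at most $K$; since $|S_i| \le n$, it is also at most $n$; hence at most $\minnk = \min(n, K)$. This gives $\O{\minnk \cdot n}$ queries and running time.

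For the approximation ratio, let $O^*$ be an optimum with value $\opt$; the goal is to exhibit some iteration $i^*$ with $f(S_{i^*}) + f(b_{i^*} \mid S_{i^*}) \ge \opt/2$. I would split the analysis at the first iteration $i^*$ at which some element of $O^*$ becomes infeasible (taking $i^*$ equal to the terminal iteration if this never happens, in which case $O^* \subseteq S_{i^*}$ and the claim is trivial). In the pre-critical phase, where every $o \in O^* \setminus S_{i-1}$ is a feasible addition to $S_{i-1}$, the density-greedy choice together with submodularity yields the discrete first-order linear inequality $\opt - f(S_i) \le (\opt - f(S_{i-1}))(1 - c(e_i)/K)$, which telescopes to $f(S_i) \ge (1 - \prod_{j \le i}(1 - c(e_j)/K))\,\opt$. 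At the critical iteration $i^*$, the overflow element $o \in O^*$ together with the density inequality bounds $f(S_{i^*}) + f(o \mid S_{i^*})$ from below by a large fraction of $\opt$; comparing $f(o \mid S_{i^*})$ against $f(b_j \mid S_j)$ at earlier iterations $j \le i^*$ (for which $o$ was itself a feasible candidate augmenter) yields the desired $\opt/2$ bound on the best augmented partial greedy solution.

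The main obstacle I anticipate is handling the critical iteration where an element of $O^*$ first becomes infeasible: a naive bound on $f(S_{i^*}) + f(b_{i^*} \mid S_{i^*})$ from the standard greedy inequality gives only a $(1-1/e)/2$ factor, so pushing this to $\half$ requires the ``robust approximate'' refinement of the linear differential inequality promised in the abstract. This refinement must track the greedy trajectory and the augmenter value jointly across all iterations so that the approximation gap is closed by the best augmenter along the entire trajectory, not only at the critical point; in particular, it must absorb the cost of the overflow element without losing more than a constant in the final ratio.
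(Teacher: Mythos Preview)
Your proposal correctly describes the algorithm and its complexity, but by your own admission the approximation argument has a gap: you see that the standard greedy telescoping yields only $(1-1/e)$-type bounds and that ``pushing this to $\half$ requires [a] refinement,'' without supplying it.

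The missing piece is as follows. Normalize $K=1$ and $f(O^*)=1$, let $\best = \argmax_{o \in O^*} c(o)$, and let $1-c(\best)-\clast$ be the greedy cost at the last iteration before that cost first exceeds $1-c(\best)$. Writing $g(x)$ for the (piecewise-linear) greedy value as a function of accumulated cost and $g_1(x) = g(x) + \margain{\best}{\calG_{i-1}}$ for the value of the partial greedy solution augmented specifically by $\best$, the key inequality (the paper's Lemma~\ref{lem:hEq}) is not the standard $g(x)+g'(x)\ge 1$ but
\[
g_1(x) + (1-c(\best))\,g'(x) \ge 1,
\]
the factor $(1-c(\best))$ arising because once $\best$ is inside the augmented set, the remaining uncovered part of $O^*$ has total cost at most $1-c(\best)$ (and every such element still fits, since $\best$ was the largest). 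Applied at $x=1-c(\best)-\clast$ this gives a clean dichotomy: either $g_1(x)\ge\tfrac12$ and the augmented solution at this step already suffices, or $g'(x) > \tfrac{1}{2(1-c(\best))}$. In the second case one uses that $g'$ is non-increasing (so $g(x)\ge x\,g'(x)$) and that the \emph{next} greedy element has cost at least $\clast$, giving
\[
g(x) + \clast\,g'(x) \;\ge\; \frac{1-c(\best)-\clast}{2(1-c(\best))} + \frac{\clast}{2(1-c(\best))} \;=\; \frac12,
\]
so the plain greedy value one iteration later is already $\ge\tfrac12$. Your sketch never isolates $\best$ as the designated augmenter, never derives the $(1-c(\best))$ coefficient, and never invokes the concavity-type bound $g(x)\ge x\,g'(x)$; without these ingredients the argument does not close, and your plan of ``comparing $f(o\mid S_{i^*})$ against $f(b_j\mid S_j)$ at earlier $j$'' does not by itself recover the lost factor.
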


In the \emph{single-pass streaming model}, the algorithm is given $K$ and a stream $E$ consisting of elements $e_1,\ldots, e_n \in U$, which arrive sequentially. 
The objective is to minimize the auxiliary space used by algorithm throughout the execution. 
In the \emph{multi-pass streaming model}, the algorithm is further allowed to make multiple passes over $E$. 
This model is typically used for modeling storage devices with sequential access (e.g. hard drives) while using a small amount of RAM. 
In this setting minimizing the number of passes becomes another key priority. 
Note that since $\Omega(\minnk)$ is a trivial lower bound on space and $\Omega(n)$ is a trivial lower bound on time and query complexity of any approximation algorithm that queries feasible sets, our next result is almost optimal in most parameters of interest.

\begin{theorem}[Multi-pass streaming algorithm \smax]\label{thm:smax}
Let $\minnk = \min(n, K)$. 
There exists a multi-pass streaming algorithm \smax (Algorithm~\ref{alg:smax}) that uses $\O{\minnk}$ space and $\O{1/\eps}$ passes over the stream and outputs a $(\half-\epsilon)$-approximation to the submodular maximization problem under a knapsack constraint, with query complexity and running time\footnote{Note that when $\frac{1}{\eps}\ll K$, in terms of running time our streaming algorithm is more efficient than our offline algorithm. Hence, in the offline setting one can use the best of the two algorithms depending on the parameters.} $\O{n(1/\eps+\log \minnk)}$ \ifarxiv(see Theorem~\ref{thm:streaming})\fi.
\end{theorem}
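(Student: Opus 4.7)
The plan is to adapt the offline \gmax template (Theorem~\ref{thm:gmax}) to the streaming setting by replacing exact greedy with a threshold-based \sieve. A sieve with threshold $\tau$ scans the stream once and adds element $e$ to the working set $S$ exactly when $f(e\mid S)/c(e)\ge\tau$ and $c(S)+c(e)\le K$. Alongside the sieve I would track the single best element $e^*$ seen so far and compare the augmented solution $S\cup\{e^*\}$ (falling back on $\{e^*\}$ when $S\cup\{e^*\}$ is infeasible) against the final sieve output, returning whichever is larger. A preliminary pass computes $f_{\max}:=\max_{e\in E}f(\{e\})$; monotonicity and submodularity give $f_{\max}\le\opt\le\minnk\cdot f_{\max}$, thereby localizing the ideal threshold $\tau^*\approx\opt/(2K)$ within a window of multiplicative width $\minnk$.

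Next, I would show that any $\tau$ within a $(1+O(\eps))$-factor of $\tau^*$ yields a $(\half-\eps)$-approximation. The argument mirrors the offline proof: if the sieve nearly fills the knapsack then $f(S)\ge\tau\cdot c(S)$ already gives an essentially $\half$-approximation; otherwise every rejected element has marginal density below $\tau$, so submodularity gives $f(\opt)-f(S)\le\tau K+\margain{e^*}{S}$, from which $f(S\cup\{e^*\})\ge(\half-\eps)\opt$. The robust first-order differential inequalities promised in the abstract are what propagate the $(1+\eps)$ threshold slack into a vanishing loss in the approximation ratio, and they also dictate the appropriate $(1+\eps)$-geometric spacing between candidate thresholds.

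To locate a good $\tau$ in $\O{\minnk}$ space, I would lay down a geometric grid of $\O{(\log\minnk)/\eps}$ candidate thresholds in $[f_{\max}/(K\minnk),\,f_{\max}/K]$ and test them across $\O{1/\eps}$ passes via a coarse-to-fine search that geometrically shrinks the bracket per pass; each pass maintains a single sieve and so uses only $\O{\minnk}$ space. The sieve performs $O(1)$ oracle queries per element per pass, yielding total query and running time $\O{n(1/\eps+\log\minnk)}$, where the $\log\minnk$ accounts for bracketing $\opt$ in the preliminary pass and the $1/\eps$ for the subsequent refinement passes. I expect the main obstacle to be the robustified approximation analysis: the offline \gmax proof relies on an exact per-step inequality that greedy enforces, whereas a sieve whose threshold is only approximately $\opt/(2K)$ can simultaneously over-reject profitable elements and under-reject wasteful ones. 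Controlling the compounding of this slack through the sieve's density trajectory---rather than through a single clean exchange argument---is where the delicate work sits, and it is ultimately what determines both the $(1+\eps)$ threshold spacing and the $\O{1/\eps}$ pass count.
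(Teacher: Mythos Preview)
Your proposal has a genuine gap in the augmentation strategy. You augment only the \emph{final} sieve set $S$ with the single best element $e^*$ (falling back to $\{e^*\}$ when it does not fit), which is precisely the \sormax/\gormax template. As the paper notes, \gormax is known to be stuck below $0.44$; the inequality you write, $f(\opt)-f(S)\le\tau K+\margain{e^*}{S}$, breaks exactly at the crux of the knapsack problem: an element $o_1\in\opt$ of large cost may be rejected not because its marginal density is below $\tau$ but because it no longer fits into $S$. In that case $\margain{o_1}{S}$ can be $\Theta(\opt)$, yet $S\cup\{e^*\}$ is infeasible and $\{e^*\}$ alone can be far below $\opt/2$. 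Your single-threshold exchange argument does not recover from this.

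The paper's fix is different in kind, not degree: it augments \emph{every} prefix $\calT_i$ of the thresholding solution with the best element that still fits there, and returns the best such pair. The analysis then focuses on the last prefix at which the heaviest optimal item $o_1$ still fits and proves the approximate differential inequality $\bigRval(x)+(1+\eps)(1-c(o_1))\,\tval'(x)\ge 1$ (Lemma~\ref{lem:APXhEq}), from which the $\tfrac12-\eps$ bound follows as in the offline case. Implementing this requires one additional pass that, for each stream element $e$, binary-searches the largest $j$ with $c(\calT_j)+c(e)\le K$ and updates the candidate augmenter $s_j$. Separately, the threshold schedule in the paper is not a coarse-to-fine search over independent sieves: a single set $T$ is grown \emph{cumulatively} across $\O{1/\eps}$ passes while the threshold decreases geometrically from $\lambda/(\alpha K)$ to $\lambda/(2K)$, where $\lambda$ is a constant-factor estimate of $f(\opt)$ obtained in one preliminary pass costing $\O{n\log\minnk}$ queries. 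This cumulative growth is what makes $\tval'$ an $(1+\eps)$-approximation of the greedy density and is essential for the differential inequality to hold.
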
 

We also give an algorithm in the massively-parallel computation (MPC) model \cite{KarloffSV10} used to model MapReduce/Spark-like systems. 
We use the most restrictive version, which only allows linear total memory, running time and communication per round~\cite{AndoniNOY14}. 
In this model, the input set $E$ of size $n$ is arbitrarily distributed across $m$ machines, each with $s = \Oin{n/m}$ memory so that the overall memory is $\O{n}$. 
A standard setting of parameters for submodular optimization is $m = \sqrt{n/\minnk}$ and $s = \Oin{\sqrt{n\minnk}}$ (see e.g.~\cite{LiuV19,AvdiukhinMYZ19}). 
One of the machines is designated as the \emph{central machine} and outputs the solution in the end. 
The machines communicate to each other in a number of synchronous rounds. 
In each round, each machine receives an input of size $\Oin{\sqrt{n \minnk}}$, performs a local linear-time computation, and sends an output of size $\Oin{\sqrt{n \minnk}}$ to other machines before the next round begins. 
The primary objective in this model is minimizing the number of rounds. 
Our main result in this model is given below. 
\begin{theorem}[MPC algorithm \dmax]\label{thm:dmax}
Let $\minnk = \min(n, K)$. There exists an MPC algorithm 
\ifarxiv\dmax (Algorithm~\ref{alg:dmax})\fi 
that runs in $\O{1/\eps}$ rounds on $\sqrt{n/\minnk}$ machines, each with $\Oin{\sqrt{n\minnk}}$ memory. 
Each machine uses query complexity and runtime $\Oin{\sqrt{n \minnk}}$ per round. 
The algorithm outputs a $(\half-\eps)$-approximation to the submodular maximization problem under a knapsack constraint\ifarxiv (see Theorem~\ref{thm:distributed})\fi.
\end{theorem}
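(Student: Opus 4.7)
The plan is to adapt the multi-pass streaming algorithm \smax to the MPC setting by exploiting the fact that each ``pass'' of \smax over the stream can be simulated by a constant number of MPC rounds. The elements of $E$ are partitioned arbitrarily across the $\sqrt{n/\minnk}$ machines, each holding $\O{\sqrt{n\minnk}}$ of them. In parallel, I would maintain $\O{\log\minnk / \eps}$ candidate solutions $S_\lambda$, one per sieve threshold $\lambda$ chosen from a geometric grid covering the relevant range of $\opt/K$ values (bracketed using the best single element, which can be found in one initial round of communication).

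In each subsequent MPC round, the central machine broadcasts the current $\{S_\lambda\}$ to all machines; every machine then scans its local elements for candidates with marginal gain-to-cost ratio exceeding the corresponding threshold, i.e.\ $f(e \mid S_\lambda)/c(e) \ge \lambda$, and ships a bounded-size summary of its qualifying elements to the central machine. The central machine processes received candidates sequentially, re-checking the threshold after each insertion and respecting the knapsack capacity. After $\O{1/\eps}$ rounds the sieve analysis from Theorem~\ref{thm:smax} applies verbatim: each $S_\lambda$ is either knapsack-saturated, in which case $f(S_\lambda) \ge \lambda K$, or every remaining element has gain-to-cost below $\lambda$. A final Max-augmentation round, in which each machine reports its best local element $e_i^* = \argmax_{e \in E_i} f(S_\lambda \cup \{e\})$ and the center takes the best over all machines and thresholds, upgrades the bound from the $(1-\eps)$-factor guarantee of the sieve alone to $(\half - \eps)\opt$, since the correctly-guessed threshold $\lambda^* \approx \opt/(2K)$ is present in the grid.

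For the resource bounds, each machine stores $\O{\sqrt{n\minnk}}$ local elements, and the $\O{\log\minnk/\eps}$ shared solutions together have size $\O{\minnk \log\minnk/\eps}$, which fits in $\O{\sqrt{n\minnk}}$ memory in the regime where $\sqrt{n/\minnk} \ge \log\minnk/\eps$; outside this regime one can run the threshold guesses in batches across multiple super-rounds without increasing the asymptotic round count. Each round performs $\O{\sqrt{n\minnk}}$ oracle queries and local computation per machine. The main obstacle is the interaction between parallelism and the sieve invariant: candidates identified in parallel on different machines depend on a stale $S_\lambda$, so naively adding all of them simultaneously can overspend the budget or violate the marginal threshold. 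The central-machine sequential merge handles budget feasibility and threshold monotonicity, and a geometric-progress argument analogous to the one used for \smax shows that $\O{1/\eps}$ rounds suffice to drive all profitable elements into $S_\lambda$.
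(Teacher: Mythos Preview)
The central gap is the communication bound. You assert that each machine ``ships a bounded-size summary of its qualifying elements'' to the center, but you never establish what the bound is or why it holds. With respect to a stale $S_\lambda$, it is entirely possible that \emph{every} local element on a machine has marginal density above the threshold; aggregating $\Theta(\sqrt{n\minnk})$ such elements from each of the $\sqrt{n/\minnk}$ machines delivers $\Theta(n)$ items to the central machine in a single round, blowing its $\O{\sqrt{n\minnk}}$ memory. Your ``geometric-progress argument analogous to \smax'' speaks only to the \emph{number of rounds} needed, not to the per-round communication, so it does not close this hole.

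The paper's fix is precisely the missing idea: in each round it draws a uniform random sample $\Gamma$ of the full ground set (each element sampled with probability $4\sqrt{\minnk/n}$, so $|\Gamma| = \O{\sqrt{n\minnk}}$), broadcasts $\Gamma$ together with the current solution $T$ to all machines, and has every machine process $\Gamma$ \emph{before} its own local partition. An Azuma-style submartingale argument then shows that, with high probability, after absorbing $\Gamma$ at most $\O{\sqrt{n\minnk}}$ elements in the entire ground set still exceed the current density threshold, so the total traffic to the center fits. This is the Liu--Vondr\'ak sampling technique adapted to knapsack, and it is the crux of making the MPC simulation work within the stated memory.

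A secondary divergence: the paper runs a \emph{single} threshold, decreased by a $(1+\eps)$ factor each round (mirroring the multi-pass structure of \smax), rather than your $\O{\log\minnk/\eps}$ parallel thresholds. Your parallel-threshold design multiplies the shared-state size by $\log\minnk/\eps$, which you acknowledge may overflow and propose to batch; that is workable but unnecessary once one follows the paper's single-threshold-per-round schedule.
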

In particular, our algorithm uses execution time $\Oin{\sqrt{n \minnk}/\eps}$ and total communication, CPU time and number of queries $\O{n/\eps}$. 
\ifarxiv\else Details are given in the supplementary material.\fi

\subsection{Relationship to previous work}\label{sec:previouswork}
The classic version of the problem considered in this work sets $c(e) = 1$ for all $e \in U$ and is known as monotone submodular maximization under a cardinality constraint and has been extensively studied. 
The celebrated result of~\cite{NemhauserWF78} gives a $1 - \nicefrac{1}{e}\approx 0.63$-approximation using \greedy, which is optimal unless $P \neq NP$,~\cite{Feige98}. 
The problem of maximizing a monotone submodular function under a knapsack constraint was introduced by~\cite{Wolsey82}, who gave an algorithm with $\approx0.35$-approximation. 
\cite{KhullerMN99} gave a simple \gormax algorithm with $1 - \nicefrac{1}{\sqrt{e}}\approx0.39$-approximation as well as a more complicated algorithm \partenum which requires a partial enumeration over an initial seed of three items and hence runs in $\O{\minnk n^4}$ time. 
\partenum was later analyzed by~\cite{Sviridenko04} who showed a $(1-\nicefrac{1}{e})\approx 0.63$-approximation, matching the hardness of~\cite{Feige98}. 
\ifarxiv
The subsequent search for more efficient algorithms has motivated a number of further studies.
\cite{BadanidiyuruV14} and~\cite{EneN17} give algorithms with approximation $1 - \nicefrac{1}{e}-\eps$. 
However while these algorithms are theoretically interesting, they are self-admittedly impractical due to their exponential dependence on large polynomials in $1/\eps$. 
\else
While faster algorithms with $(1 - \nicefrac{1}{e}-\eps)$-approximation exist~\cite{BadanidiyuruV14,EneN17}, they are self-admittedly impractical due to their exponential dependence on large polynomials in $1/\eps$.
\fi

Compared to the well-studied cardinality-constrained case, streaming literature on monotone submodular optimization under a knapsack constraint is relatively sparse.
A summary of results in the streaming setting is given in Figure~\ref{fig:streaming}. 
Prior to our work, the best results in streaming are by~\cite{HuangKY17,HuangK18}. 
While the most recent work of \cite{HuangK18} achieves the $(\nicefrac{1}{2}-\eps)$-approximation, its space, runtime and query complexities are far from optimal and depend on large polynomials of $1/\eps$, making it impractical for large data. 
Compared to this result, our Theorem~\ref{thm:smax} gives an improvement on all main parameters of interest, leading to near-optimal results. 
On the other hand, for the cardinality-constrained case, an optimal single-pass $(\nicefrac{1}{2}-\eps)$-approximation has very recently been achieved by~\cite{KazemiMZLK19}. 
While using different ideas, our multi-pass streaming result matches theirs in terms of approximation, space and improves slightly on the number of queries and runtime (from $\O{{n \log \minnk}/{\eps}}$ to $\O{n (\nicefrac{1}{\eps} + \log \minnk)}$) only at the cost of using a constant number of passes for constant $\eps$.

\begin{figure*}[!htb]
    \centering
    \rowcolors{1}{}{gray!10}
    \begin{tabular}{|c|c|c|c|c|}\hline
     Reference & Approx. & Passes & Space & Runtime and Queries \\\hline\hline
     \cite{HuangKY17} & $\nicefrac{1}{3}-\eps$ & 1 & $\O{\frac{1}{\eps}K\log K}$ & $\O{\frac{1}{\eps}n\log K}$ \\[1.05ex]
     \cite{HuangKY17} & $\nicefrac{4}{11}-\eps$ & 1 & $\O{\frac{1}{\eps}K\log K}$ & $\O{\frac{1}{\eps}n\log K}$ \\[1.05ex]
     \cite{HuangKY17} & $\nicefrac{2}{5}-\eps$ & 3 & $\O{\frac{1}{\eps^2}K\log^2 K}$ & $\O{\frac{1}{\eps}n\log K}$ \\[1.05ex]
     \cite{HuangK18} & $\nicefrac{1}{2}-\eps$ & $\O{\nicefrac{1}{\eps}}$ & $\O{\frac{1}{\eps^7}K\log^2 K}$ & \O{\frac{1}{\eps^8}n\log^2 K} \\[1.05ex]
     \smax (Alg.~\ref{alg:smax}) & $\nicefrac{1}{2}-\eps$ & $\O{\nicefrac{1}{\eps}}$ & $\O{K}$ & $\O{n\left(\frac{1}{\eps}+\log K\right)}$\\[1.05ex]\hline
    \end{tabular}
    \caption{Monotone submodular maximization under a knapsack constraint in the streaming model.}
    \label{fig:streaming}
\end{figure*}

In the distributed setting, \cite{MirzasoleimanKSK13} give an elegant two round protocol for monotone submodular maximization subject to a knapsack constraint that achieves a subconstant guarantee. 
\cite{KumarMVV15} later give algorithms for both cardinality and matroid constraints that achieve a constant factor approximation, but the number of rounds is $\Theta(\log\Delta)$, where $\Delta$ is the maximum increase in the objective due to a single element, which is infeasible for large datasets since $\Delta$ even be significantly larger than the size of the entire dataset.
\cite{BarbosaENW15,BarbosaENW16} subsequently give a framework for both monotone and non-monotone submodular functions under cardinality, matroid, and $p$-system constraints. 
Specifically, the results of~\cite{BarbosaENW16} achieves almost $\nicefrac{1}{2}$-approximation using two rounds, a result subsequently matched by Liu and Vondr\'{a}k without requiring the duplication of items, as well as a $(1-\nicefrac{1}{e}-\eps)$ approximation using $\O{\nicefrac{1}{\eps}}$ rounds. 
\cite{BarbosaENW15} also gives a two-round algorithm for a knapsack constraint that achieves roughly $0.17$-approximation in expectation. 

For extensions to other constraints, non-monotone objectives and other generalizations see e.g.~\cite{ChakrabartiK14,ChekuriGQ15,ChanHJKT17,ElenbergDFK17,EpastoLVZ17,MirzasoleimanJK18,FeldmanKK18,ChekuriQ19}.

\subsection{Our techniques}\label{sec:techniques}
Let $\margain{e}{S} = f(e \cup S) - f(S)$ be the \emph{marginal gain} and $\marden{e}{S} = \margain{e}{S}/c(e)$ be the \emph{marginal density} of $e$ with respect to $S$. 
\greedy starts with an empty set $G$ and repeatedly adds an item that maximizes $\marden{e}{G}$ among the remaining items that fit. 
While by itself this does not guarantee any approximation, the classic result of~\cite{KhullerMN99} shows that \gormax algorithm, which takes the best of the greedy solution and the single item with maximum value, gives a 0.39-approximation but cannot go beyond 0.44-approximation. 
Our algorithm \gmax (Algorithm~\ref{alg:gmax}) instead attempts to augment every partial greedy solution with the item giving the largest marginal gain. 
For each $i$, let $\calG_i$ be the set of the first $i$ items taken by greedy. 
We augment this solution with the item $s_i$ which maximizes $\margain{s_i}{\calG_i}$ among the remaining items that fit. 
\gmax then outputs the best solution among such augmentations.

Our main technical contribution lies in the analysis of this algorithm and its variants, which shows a $\half$-approximation (this analysis is tight\ifarxiv{, see Example~\ref{ex:tight-example}} \else, see the supplementary material\fi).
Let $\best$ be the item from $\opt$ with the largest cost.
The main idea is to consider the last partial greedy solution such that $\best$ still fits.
Since $\best$ has the largest cost in $\opt$, we can augment the partial solution with any element from $\opt$, and all of them have a non-greater marginal density than the next selected item.
While \gmax augments partial solutions with the best item, for the sake of analysis it suffices to consider only augmentations with $\best$ (note that the item itself is unknown to the algorithm). 

To simplify the presentation, in the analysis we rescale $f$ and the costs so that $f(\opt) = 1$ and $K = 1$.
Suppose that at some point, the partial greedy solution has collected elements with total cost $x \in [0,1]$. 
We use a continuous function $\gval(x)$ to track the performance of \greedy. 
We also introduce a function $\bigHval(x)$ to track the performance of augmentation with $\best$ and then show that $\gval$ and $\bigHval$ satisfy a differential inequality $\bigHval(x)+ (1 - \cbest)\gval'(x)\geq 1$~(Lemma~\ref{lem:hEq}),
where $\gval'$ denotes the right derivative. 
To give some intuition about the proof, consider the case when there exists a partial greedy solution of cost exactly $1 - \cbest$.
If $\bigHval(1 - \cbest) \ge \half$, then the augmenation with $\best$ gives a $\half$-approximation. Otherwise, by the differential inequality, $\gval'(1 - \cbest) \ge \nicefrac 1 {2 (1 - \cbest)}$.
Since $\gval(0)=0$ and $\gval'$ is non-increasing, $\gval(1 - \cbest) \ge (1 - \cbest) \gval'(1 - \cbest) \ge \half$. See full analysis for how to handle the cases when there is no partial solution of cost exactly $1 - \cbest$. 

Our streaming algorithm \smax and distributed algorithm \dmax approximately implement \gmax in their respective settings.
\smax makes $\O{\nicefrac{1}{\eps}}$ passes over the data, and for each pass it selects items with marginal density at least a threshold $\frac{c f (\opt)}{K (1 + \eps)^i}$ in the $i$-th pass for some constant $c > 0$. 
This requires having a constant-factor approximation of $f(\opt)$ which can be computed using a single pass. 
\dmax combines the thresholding approach with the sampling technique developed by~\cite{LiuV19} for the cardinality constraint. 
The differential inequality which we develop for \gmax turns out to be robust to various sources of error introduced through thresholding and sampling. 
As we show, it continues to hold with functions and derivatives replaced with their $(1+\eps)$-approximations, which results in $(\half-\eps)$-approximation guarantees for both algorithms.
\section{Algorithms and analysis}
\subsection{Offline algorithm \gmax}\label{sec:offline}
We introduce the main ideas by first describing our offline algorithm \gmax which is then adapted to the streaming and distributed settings. 
\ifarxiv{As this algorithm is a modification of the standard \greedy algorithm we describe \greedy first.}
\else
Recall that
\fi
\greedy starts with an empty set $G$ and in each iteration selects an item $e$ with the highest marginal density $\marden{e}{G}$ that still fits into the knapsack. 
\arxiv{We refer to the resulting solution as the \emph{greedy solution} and denote it as $G$. } 
\gmax is based on augmenting each partial solution constructed by \greedy with the item of the largest marginal value (as opposed to density) and taking the best among such augmentations. 
Recall that $\calG_i$ is the set of the first $i$ items in the greedy solution.
\gmax finds for each $i$ an augmenting item $s_i$ which maximizes $f(s_i \cup \calG_i)$ among all items that still fit\arxiv{, i.e. $c(s_i \cup G_i) \le K$}. The final output is the best among all such augmented solutions.
Implementation is given as Algorithm~\ref{alg:gmax}.
\begin{algorithm}
\textbf{Input}: Set of elements $E=e_1,\ldots, e_n$, knapsack capacity $K$, cost function $c(\cdot)$, non-negative monotone submodular function $f$\;
\textbf{Output}: $\frac{1}{2}$-approximation for submodular maximization under knapsack constraint\;
$G\gets\emptyset,S\gets\emptyset$\;
\While{$E\neq\emptyset$}{
$s\gets\argmax_{e\in E}\margain{e}{G}$\;
\If{$f(S)<f(G\cup s)$}{
$S\gets G\cup s$;
}
$a\gets\argmax_{e\in E}\marden{e}{G}$\;
$G\gets G\cup a$\;
$K\gets K-c(a)$\;
Remove all elements $e\in E$ with $c(e)>K$\;
}
\Return $S$
\caption{Offline algorithm $\gmax$}\label{alg:gmax}
\end{algorithm}
In the rest of this section we 
\ifarxiv
show
\else 
give an outline of the key technical lemmas behind the proof 
\fi
that \gmax gives $\nicefrac12$-approximation. \arxiv{This analysis is tight as illustrated by the following example:
\begin{example}\label{ex:tight-example}
Let $e_1, e_2, e_3$ be three items such that $f(e_1) = f(e_2) = \frac12$ and $f(e_3) = \frac12 + \eps$ for any $\eps > 0$.
Let $c(e_1) = c(e_2) = \frac12$ and $c(e_3) = \frac{1 + \eps}{2}$.
Let $f$ be a linear function, i.e. $f(S) = \sum_{e \in S} f(e)$. Then $\opt = \{e_1, e_2\}$ has value $1$ while \gmax outputs  $\{e_3\}$ of value $\frac12 + \eps$.
\end{example}}
\ifarxiv
As discussed in Section~\ref{sec:techniques}, our analysis is based on a number of differential inequalities for functions tracking the performance of our algorithm. We assume that these functions are continuous and piecewise smooth, and by $\xi'(x)$ we denote the right-hand derivative of $\xi$ at point $x$. All these inequalities are of the form $\xi(x) + \alpha \xi'(x) \ge \beta$ for some function $\xi$,  applied in a certain range $[u,v]$ and have some initial condition $\xi(u)$. We frequently need to integrate these inequalities to get a lower bound on $\xi(v)$ which can be done as follows:

Our proof proceeds by case analysis on whether $\best$, the item of the largest cost in $\opt$, is included in the greedy solution $G$ or not.  We first show that if $\best\in G$, then $f(G)$ is at least a $\nicefrac{1}{2}$-approximation.
\fi

\arxiv{Let $\opt$ be the optimal solution, i.e. the maximizer of $f(\opt)$ under $c(\opt)\le K$. Let $\best$ be the element of the largest cost in $\opt$. }
W.l.o.g. and only for \arxiv{the sake of} analysis of approximation we rescale the function values and costs so that $f(\opt) = 1$ and $c(\opt) = K = 1$\footnote{Note that if $c(\opt) < K$ then we can set $K = c(\opt)$ first as this does not affect $f(OPT)$.}. 
We first define a greedy performance function $\gval(x)$ which allows us to track the performance of the greedy solution in a continuous fashion. 
Let \arxiv{$G$ be the greedy solution computed by Algorithm~\ref{alg:gmax} and let} $g_1, g_2, \ldots, g_m$ be the elements in $G$ in the order they were added and recall that $\calG_i = \{g_1,\ldots,g_i\}$. 
For a fixed $x$, let \arxiv{its \emph{greedy index}} $i$ be the smallest index such that $c(\calG_i) > x$. 

\begin{definition}[Greedy performance function]
For $x \in [0,1]$ we define $\gval(x)$ as: $$\gval(x)=f(\calG_{i-1})+(x-c(\calG_{i-i}))\marden{g_i}{\calG_{i-1}}.$$
\end{definition}

\arxiv{Note that $\gval$ is a continuous and monotone piecewise-linear function such that $\gval(0)=0$. 
Since an important role in the analysis is played by the derivative of this function we further define $\gval'$ to be the right derivative for $\gval$ so that $\gval'$ is defined everywhere on the interval $[0,c(G))$ and is always non-negative. }

\ifarxiv
We now define a function $\smallhval(x)$ which tracks the performance of $\gmax$ when the greedy solution collects a set of cost $x$. 
Note that the cost of the last item which \gmax uses to augment the solution does not count in the argument of this function. 
\begin{definition}[$\gmax$ performance function]
For any fixed $x$, let $i$ be the smallest index such that $c(\calG_i)>x$. 
We define $\smallhval(x)=\gval(x)+\margain{v}{\calG_{i - 1}}$, where 
\[v=\argmax_{e\in E \setminus \calG_{i - 1}: c(e\cup\calG_{i - 1})\le K} \margain{e}{\calG_{i - 1}}\]
is the element with the largest marginal gain with respect to the current partial greedy solution $\calG_{i - 1}$.
\end{definition}

For technical reasons which we describe below instead of working directly with $\smallhval$ it is easier to work with a lower bound on it $\bigHval$ which has some nicer properties.
\fi
For 
\ifarxiv
\else
the function
\fi
$\bigHval$ we only consider adding $\best$, the largest item from $\opt$, to the current partial greedy solution. \arxiv{Note that hence $\bigHval$ is only defined while this item still fits.}
Consider the last item added by the greedy solution before the cost of this solution exceeds $1 - c(\best)$.
We define $\clast$ so that $1 - c(\best) - c^*$ is the cost of the greedy solution before this item is taken.

\begin{definition}[$\gmax$ performance lower bound]
For $x \in [0, 1 - \cbest - \clast] $ we define $\bigHval(x) = \gval(x)+\margain{\best}{\calG_{i - 1}}$ so that $\bigHval(x)\le\smallhval(x)$.
\end{definition}

\begin{lemma}[\gmax inequality]\label{lem:hEq}
Let $g'$ denote the right derivative of $g$.
Then for all $x \in [0, 1-c\left(\best\right)-\clast]$, the following differential inequality holds:
\begin{align*}
\bigHval(x)+(1-c\left(\best\right))\gval'(x)\geq 1
\end{align*}
\end{lemma}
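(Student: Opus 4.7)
The plan is to fix $x \in [0, 1 - c(\best) - \clast]$ and let $i$ be the greedy index at $x$, so $c(\calG_{i-1}) \le x < c(\calG_i)$ and $\gval'(x) = \marden{g_i}{\calG_{i-1}}$. The first step will be to observe that every $e \in \opt$ still fits when added to $\calG_{i-1}$: we have $c(e) \le c(\best)$ by definition of $\best$, and $c(\calG_{i-1}) \le x \le 1 - c(\best) - \clast \le 1 - c(\best)$, so $c(\calG_{i-1}) + c(e) \le 1$. This means the greedy choice of $g_i$ could have been any element of $\opt$, so by the density-maximizing property of \greedy we get $\marden{e}{\calG_{i-1}} \le \gval'(x)$ for every $e \in \opt$.

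Next I would apply monotonicity and submodularity to bound $f(\opt)$ from above:
\begin{align*}
1 \;=\; f(\opt) \;\le\; f(\calG_{i-1} \cup \opt) \;\le\; f(\calG_{i-1}) + \sum_{e \in \opt} \margain{e}{\calG_{i-1}},
\end{align*}
where the last inequality unfolds the telescoping sum across $\opt \setminus \calG_{i-1}$ and uses submodularity to replace each conditioning set by $\calG_{i-1}$. I would then isolate $\best$ in the sum and bound the remaining terms by converting marginal gains into marginal densities times costs, using the greedy inequality from the previous step:
\begin{align*}
\sum_{e \in \opt \setminus \{\best\}} \margain{e}{\calG_{i-1}} \;=\; \sum_{e \in \opt \setminus \{\best\}} c(e)\,\marden{e}{\calG_{i-1}} \;\le\; \gval'(x) \cdot (c(\opt) - c(\best)) \;\le\; (1 - c(\best))\,\gval'(x).
\end{align*}

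Combining these two displays yields
\begin{align*}
1 \;\le\; f(\calG_{i-1}) + \margain{\best}{\calG_{i-1}} + (1 - c(\best))\,\gval'(x).
\end{align*}
To finish, I would replace $f(\calG_{i-1})$ by $\gval(x)$ using the fact that, on the linear segment $[c(\calG_{i-1}), c(\calG_i))$, one has $\gval(x) = f(\calG_{i-1}) + (x - c(\calG_{i-1}))\gval'(x) \ge f(\calG_{i-1})$. Since $\bigHval(x) = \gval(x) + \margain{\best}{\calG_{i-1}}$, this gives exactly $\bigHval(x) + (1 - c(\best))\gval'(x) \ge 1$.

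I do not anticipate a significant obstacle here; the argument is essentially a careful accounting of the standard greedy bound tailored to isolate the heaviest optimum element. The one point that needs attention is the range restriction on $x$: one must verify that throughout $[0, 1 - c(\best) - \clast]$ the element $\best$ genuinely remains feasible for $\calG_{i-1}$ so that the density-maximality of $g_i$ covers it, and that the right-derivative convention lines up with the linear segment determined by the greedy index $i$ at the left endpoints of the pieces. Both of these are built into the definition of $\clast$ and the choice of $\gval'$, so the verification is immediate.
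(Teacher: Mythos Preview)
Your proof is correct and follows essentially the same approach as the paper's. The only cosmetic differences are that the paper first reduces to the left endpoints $x = c(\calG_{i-1})$ (where $\bigHval(x) = f(\calG_{i-1}\cup\best)$ exactly) and conditions the submodularity expansion on $\calG_{i-1}\cup\best$ before dropping back to $\calG_{i-1}$, whereas you work at a general $x$ using $\gval(x)\ge f(\calG_{i-1})$ and condition on $\calG_{i-1}$ directly; both routes yield the same bound.
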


\begin{proof}
Similarly to the proof of the standard greedy inequality it suffices to show the statement only for points where $x = c(\calG_{i - 1})$ for some $i \ge 1$. 
Hence, we have $\bigHval(x) = \gval(c(\calG_{i -1})) + \margain{\best}{\calG_{i - 1}} = f(\calG_{i - 1} \cup \best)$.
Since we normalized $f(\opt)=1$, then by monotonicity, $1 = f(\opt)\le f(\calG_{i - 1} \cup \opt)$. 
Hence:
\begin{align*}
1&\le f(\calG_{i - 1} \cup \opt)\\
&= f(\calG_{i - 1} \cup \best) +  \margain{\opt\setminus(\best\cup \calG_{i - 1})}{\calG_{i - 1} \cup \best}\\
&\leq  \bigHval(x)    +  \sum_{e\in\opt\setminus(\best\cup \calG_{i - 1})}\margain{e}{\calG_{i - 1} \cup \best}\\
&=  \bigHval(x)    +  \sum_{e\in\opt\setminus(\best\cup \calG_{i - 1})} c(e)\marden{e}{\calG_{i - 1} \cup \best},
\end{align*}
where the second inequality is by submodularity and the definition of $\bigHval$ and the last equality is by the definition of marginal density. 
Since $x\le 1-c\left(\best\right)-\clast$, then all items in $\opt \setminus (\best \cup \calG_{i - 1})$ still fit, as $\best$ is the largest item in $\opt$. 
Since the greedy algorithm always selects the item with the largest marginal density, then $\max_{e \in \opt \setminus (\best \cup \calG_{i - 1})} \marden{e}{\calG_{i - 1} \cup \best}\le \gval'(x)$. Hence:
\begin{align*}
1&\le \bigHval(x)    +  \sum_{e\in\opt\setminus(\best\cup \calG_{i - 1})} c(e)\marden{e}{\calG_{i - 1}\cup \best}\\
&\le \bigHval(x)    +  \sum_{e\in\opt\setminus(\best\cup \calG_{i - 1})} c(e)\marden{e}{\calG_{i - 1}}\\
&\leq  \bigHval(x)    +  \sum_{e\in\opt\setminus(\best\cup \calG_{i - 1})} c(e) \gval'(x)\\
&=  \bigHval(x)    +  \gval'(x)\sum_{e\in\opt\setminus(\best\cup \calG_{i - 1})} c(e) \\
&=  \bigHval(x)    +  \gval'(x) c(\opt\setminus(\best\cup \calG_{i - 1})) \\
&\leq  \bigHval(x)    +  \gval'(x) (1-c\left(\best\right)),
\end{align*}
where the last inequality follows from the normalization of $c(\opt)\le 1$ and the fact that $\best \in \opt$.
\end{proof}

\begin{theorem}
\label{thm:offline}
Recall that $\minnk = \min(n, K)$ is an upper bound on the number of elements in feasible solutions.
Then \gmax gives a $\nicefrac{1}{2}$-approximation to the submodular maximization problem under a knapsack constraint and runs in $\O{\minnk n}$ time.
\end{theorem}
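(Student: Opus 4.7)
The plan is to combine \lemref{hEq} with a short case analysis pivoted on the critical point $x^{\star}:=1-\cbest-\clast$, together with two elementary ``best-candidate'' observations about the output $S$ of \gmax. First, whenever $\best$ still fits with a partial greedy solution $\calG_{i}$, the augmenter $s_{i+1}$ picked in the corresponding iteration maximizes $\margain{e}{\calG_{i}}$ over all fitting $e$, so $f(S)\ge f(\calG_{i}\cup s_{i+1})\ge f(\calG_{i}\cup\best)$. Second, the last iteration examines a candidate of value at least $f(\calG_{m-1}\cup g_{m})=f(G)$, giving $f(S)\ge f(G)$. With these two facts in hand, it suffices to show $\max(f(G),\bigHval(x^{\star}))\ge\half$.

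Let $j$ be the first index with $c(\calG_{j})>1-\cbest$, so that $c(\calG_{j-1})=x^{\star}$ and $c(g_{j})>\clast$. Applying \lemref{hEq} at $x^{\star}$ yields $\bigHval(x^{\star})+(1-\cbest)\gval'(x^{\star})\ge 1$. If $\bigHval(x^{\star})\ge\half$, the first observation finishes the argument, since $c(\calG_{j-1})+\cbest\le 1$ guarantees that $\best$ is still in $E$ at iteration $j$. Otherwise $\gval'(x^{\star})>\frac{1}{2(1-\cbest)}$. Because greedy selects items in non-increasing order of marginal density, $\gval'$ is non-increasing, hence $\gval$ is concave with $\gval(0)=0$, and in particular $\gval(x^{\star})\ge x^{\star}\gval'(x^{\star})$. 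Since $\gval$ is linear with slope $\gval'(x^{\star})$ on $[x^{\star},c(\calG_{j})]$,
\[
f(\calG_{j}) = \gval(x^{\star})+c(g_{j})\gval'(x^{\star}) \ge (x^{\star}+c(g_{j}))\gval'(x^{\star}) > (x^{\star}+\clast)\gval'(x^{\star}) = (1-\cbest)\gval'(x^{\star}) > \half,
\]
and the second observation yields $f(S)\ge f(G)\ge f(\calG_{j})>\half$.

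The only remaining situation is $c(G)\le 1-\cbest$, where such a $j$ does not exist. In that case \greedy terminated only because $E$ became empty, so every $e\notin G$ satisfies $c(e)>K-c(G)\ge\cbest$; as $\best$ is the most expensive element of $\opt$, this forces $\opt\subseteq G$, hence $f(G)\ge f(\opt)=1$ and again $f(S)\ge f(G)\ge\half$. For the runtime, each iteration of the main loop uses $O(n)$ oracle queries to compute the two arg-maxes plus $O(n)$ bookkeeping to filter newly infeasible items, and the number of iterations is at most $|G|\le\minnk$ since every chosen item has unit-normalized cost at least $1$, for a total of $\O{\minnk n}$. The step I expect to require the most care is the interface between the piecewise-linear functions $\gval,\bigHval$ and the discrete algorithm: one must confirm that \lemref{hEq} is applied exactly at the right endpoint $x^{\star}$ where $\bigHval$ is still defined, that the augmenter inequality in the first observation is invoked only at an iteration where $\best$ has not yet been filtered out of $E$, and that the slope identity on $[x^{\star},c(\calG_{j})]$ and the concavity bound $\gval(x^{\star})\ge x^{\star}\gval'(x^{\star})$ continue to hold in the degenerate case $\clast=0$.
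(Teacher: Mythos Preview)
Your proof is correct and follows essentially the same approach as the paper: apply \lemref{hEq} at $x^{\star}=1-\cbest-\clast$, split on whether $\bigHval(x^{\star})\ge\half$, and in the other case use the concavity bound $\gval(x^{\star})\ge x^{\star}\gval'(x^{\star})$ together with the fact that the next greedy item has cost at least $\clast$. If anything, your write-up is slightly more careful than the paper's, since you explicitly justify $f(S)\ge\bigHval(x^{\star})$ and $f(S)\ge f(G)$ and you handle the boundary case $c(G)\le 1-\cbest$ (where $\opt\subseteq G$), which the paper leaves implicit.
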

\begin{proof}
By applying Lemma~\ref{lem:hEq} at the point $x = 1 - \cbest - \clast$, we have:
\[\bigHval(1 - \cbest - \clast) + (1-c\left(\best\right))\gval'(1 - \cbest - \clast)\geq 1\]
If $\bigHval(1 - c(\best) - \clast) \ge \frac 1 2$, then we have $\frac 12$-approximation, because $\bigHval(1 - c(\best) - \clast)$ is a lower bound on the value of the augmented solution when the cost of the greedy part is $1 - c(\best) - \clast$. Otherwise: 
\begin{align*}
\gval'(1 - \cbest - \clast) &\geq \frac {1 - \bigHval(1 - \cbest - \clast)} {1 - \cbest}> \frac 1 {2 (1 - \cbest)}.
\end{align*}

Note that since $\gval(0) = 0$ and $\gval'$ is non-increasing by the definition of \greedy, for any $x \in [0, 1]$ we have $\gval(x) \ge \gval'(x)\cdot x$:
\[\gval(x) \ge \int_{\chi=0}^x \gval'(\chi) d\chi \ge \int_{\chi=0}^x \gval'(x) d\chi = \gval'(x) \cdot x,\]
Therefore, applying this inequality at $x = 1 - \cbest - \clast$:
\begin{align*}
\gval(1 - \cbest - \clast)
&\ge (1 - \cbest - \clast) \gval'(1 - \cbest - \clast)\\
&\ge \frac {1 - \cbest - \clast} {2 (1 - \cbest)}.
\end{align*}

Recall that $1 - \cbest - \clast$ was the last cost of the greedy solution when we could still augment it with $\best$; therefore, the next element $e$ that the greedy solution selects has the cost at least $(1 - \cbest) - (1 - \cbest - \clast) = \clast$. 
Thus, the function value after taking $e$ is at least
\begin{align*}
\gval(1 - \cbest - \clast) + \clast \gval'(1 - \cbest - \clast)\ge \frac {1 - \cbest - \clast} {2 (1 - \cbest)} + \frac {c^*} {2 (1 - \cbest)} = \frac 1 2
\end{align*}

Hence, Algorithm~\ref{alg:gmax} gives a $\frac{1}{2}$-approximation to the submodular maximization problem under a knapsack constraint.
It remains to analyze the running time and query complexity of Algorithm~\ref{alg:gmax}. 
Since $\minnk$ is the maximum size of a feasible set, Algorithm \ref{alg:gmax} makes at most $\minnk$ iterations. 
In each iteration, it makes $\O{n}$ oracle queries, so the total number of queries and runtime is $\O{\minnk n}$.
\end{proof}
\subsection{Streaming algorithm \smax}\label{sec:streaming}
Our multi-pass streaming algorithm is given as Algorithm~\ref{alg:smax}. 
To simplify the presentation, we first give the algorithm under the assumption that it is given a parameter $\lambda$, which is a constant-factor approximation of $f(\opt)$. 
We then show how to remove this assumption using standard techniques in
\ifarxiv
Theorem~\ref{thm:constant:stream}.
\else
the supplementary material. 
\fi 
As discussed in the description of our techniques \smax uses $\O{1/\eps}$ passes over the data to simulate the execution of \gmax approximately.

\begin{algorithm}
	\textbf{Input}: Stream $e_1,\ldots, e_n$, knapsack capacity $K$, cost function $c(\cdot)$, non-negative monotone submodular function $f$, $\lambda$ which is an $\alpha$-approximation of $f(\opt)$ for some fixed constant $\alpha$>0, $\eps>0$\;
	\textbf{Output}: $(\nicefrac{1}{2}-\epsilon)$-approx. for submodular maximization under a knapsack constraint\;
	$T\gets\emptyset$, $\tau\gets\frac{\lambda}{\alpha K}$\;
	\While(\tcp*[f]{Thresholding stage}){$\tau>\frac{\lambda}{2K}$}{    
		Take a new pass over the stream\; 
		\For{each read item $e$}{
			\If{$\marden{e}{T}\geq\tau$ and $c(e\cup T)\leq K$}{
				$T\gets T\cup \{e\}$\;
			}
		}
		$\tau\gets\tau/(1+\epsilon)$\;
	}
	For each $i$, let $G_i$ be the first $i$ selected in the construction of $T$ above and let $s_i=\emptyset$\;
	Take a pass over the stream\;
	\For(\tcp*[f]{Augmentation stage}){each read item $e$}{
		\If{$e\notin T$}{
			$j = \max\{i|c(G_i) + c(e) \le K\}$\;
			\If{$f(G_j\cup s_j)<f(G_j\cup e)$}{
				$s_j\gets \{e\}$\;
			}
		}
	}
	\Return $\argmax f(G_i\cup s_i)$
	\caption{Multi-pass streaming algorithm $\smax$}\label{alg:smax}
\end{algorithm}

\ifarxiv
In the analysis, which gives the proof of Theorem~\ref{thm:smax}, we define functions $\tval$\arxiv{, $\smallrval$,} and $\bigRval$ analogous to $\gval$\arxiv{, $\smallhval$,} and $\bigHval$ respectively, based on $\calT_i$, the first $i$ items collected by the thresholding algorithm.
We show that $\tval$ and $\bigRval$ satisfy the same differential inequalities as $\gval$ and $\bigHval$ respectively, up to $(1+\eps)$ factors\arxiv{, and similar to before, our analysis then proceeds by casework on whether $\best$, the largest item in $\opt$, is included in the thresholding solution $T$ or not}. 

We first show that if $\best\in T$, then $f(T)$ is at least a $\left(\frac{1}{2}-\eps\right)$-approximation.
\fi

Let $T$ be the set of items constructed $\smax$ (as in Algorithm~\ref{alg:smax}) and let $t_1,t_2,\ldots$ be the order that they are collected. We refer to the part of the algorithm which constructs $T$ as ``thresholding'' and the rest as ``augmentation'' below. 
We use $\calT_i$ to denote the set containing the $i$ items $\{t_1,t_2,\ldots,t_i\}$.
We again use $\best$ to denote the item with highest cost in $\opt$.
Similar to the above, we define two functions representing the values of our thresholding algorithm, and augmented solutions given the utilized proportion of the knapsack.

\begin{definition}[Thresholding performance function]
For any $x \in [0,1]$, let $i$ be the smallest index such that $c(\calT_i)>x$. 
We define $\tval(x)=f(\calT_{i-1})+(x-c(\calT_{i-i}))\marden{t_i}{\calT_{i-1}}$\arxiv{ and $t'(x)$ to be the right derivative of $t$}.
\end{definition}

We define a function $\bigRval(x)$ that lower bounds the performance of $\smax$ when the thresholding solution collects a set of cost $x$:

\begin{definition}[$\smax$ performance function and lower bound]
For any fixed $x$, let $i$ be the smallest index such that $c(\calT_i)>x$. 
Then we define $\bigRval(x) = \tval(x)+\margain{\best}{\calT_{i-1}}$, where $\best = \argmax_{e \in \opt} c(e)$.
\end{definition}

In order to analyze the output of the algorithm, we prove a differential inequality for $\bigRval$ rather than $\smallrval$. If $c(T) \ge 1 - c(\best)$ then let $\clast \ge 0$ be defined so that $1 - c(\best) - \clast$ is the cost of the thresholding solution before the algorithm takes the item which makes the cost exceed $1 - c(\best)$.

\begin{lemma}[\smax Inequality]
\label{lem:APXhEq}
If $c(T) \ge 1 - c(\best)$ then for all $x \in [0, 1-\cbest-\clast]$, then $\tval$ and $\bigRval$ satisfy the following differential inequality: 
\[\bigRval(x)+(1+\eps)(1-c\left(\best\right))\tval'(x)\geq 1.\]
\end{lemma}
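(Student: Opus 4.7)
The plan is to follow the proof of \lemref{hEq} almost verbatim, replacing the single step that invoked greedy's maximality of $\marden{g_i}{\calG_{i-1}}$ with a $(1+\eps)$-approximate analogue coming from the threshold structure of \smax. By piecewise-linearity of $\tval$ and $\bigRval$, it suffices to verify the inequality at the break points $x = c(\calT_{i-1})$ for each $i\ge 1$ with $c(\calT_{i-1}) \le 1 - c(\best) - c^*$; at such $x$ we have $\bigRval(x) = f(\calT_{i-1} \cup \best)$ and $\tval'(x) = \marden{t_i}{\calT_{i-1}}$. Starting from $1 = f(\opt) \le f(\calT_{i-1} \cup \opt)$ and decomposing via submodularity exactly as in the offline proof yields
\[
1 \le \bigRval(x) + \sum_{e \in \opt \setminus (\best \cup \calT_{i-1})} c(e)\,\marden{e}{\calT_{i-1}}.
\]
Since $\sum_{e} c(e) \le c(\opt) - c(\best) = 1 - c(\best)$ over that index set, it suffices to show $\marden{e}{\calT_{i-1}} \le (1+\eps)\,\tval'(x)$ for every such $e$.

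To establish this approximate density bound, let $\tau$ denote the threshold at the pass during which \smax selected $t_i$, so that $\tval'(x) \ge \tau$ by the algorithm's inclusion rule. The key step is to inspect the pass immediately preceding that one, which operated at threshold $(1+\eps)\tau$. At every snapshot during that previous pass, the running set $T$ is a subset of $\calT_{i-1}$, because by the end of that pass only items picked strictly before $t_i$ have been added, and each such item belongs to $\calT_{i-1}$ by definition. Combined with $c(\calT_{i-1}) \le 1 - c(\best) - c^*$ and $c(e) \le c(\best)$, the cost constraint $c(T_e^{\mathrm{prev}} \cup e) \le K$ is automatic at any snapshot $T_e^{\mathrm{prev}}$ at which $e$ was examined in that pass; since $e \notin \calT_{i-1}$ we conclude that $e$ was rejected on density grounds, giving $\marden{e}{T_e^{\mathrm{prev}}} < (1+\eps)\tau$. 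Submodularity, applied with $T_e^{\mathrm{prev}} \subseteq \calT_{i-1}$, then yields $\marden{e}{\calT_{i-1}} \le \marden{e}{T_e^{\mathrm{prev}}} < (1+\eps)\tau \le (1+\eps)\,\tval'(x)$, as required.

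The main obstacle is the corner case when $t_i$ is selected during the very first pass of the thresholding stage, where no preceding pass exists from which to harvest the $(1+\eps)\tau$ bound. I expect to handle this by exploiting that the initial threshold $\tau_0 = \lambda/(\alpha K) \ge f(\opt)/K$ (which holds since $\lambda \ge \alpha f(\opt)$), together with a dedicated case split on whether $e$ precedes or follows $t_i$ in the stream order: for $e$ preceding $t_i$, the identical rejection-by-density argument succeeds on the first-pass snapshot $T_e^1 \subseteq \calT_{i-1}$ at threshold $\tau_0$; for $e$ following $t_i$, one reasons from a subsequent pass in which $e$ was rejected on density grounds at a threshold that still exceeds $\tval'(x)/(1+\eps)$, or invokes a direct density bound using the initial-threshold relationship to $f(\opt)/K$. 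A routine secondary step extends the inequality at the break points to all interior $x \in [c(\calT_{i-1}), c(\calT_i))$, which is immediate from monotonicity of $\bigRval$ and piecewise-constancy of $\tval'$ on that interval.
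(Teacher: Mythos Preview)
For break points $x=c(\calT_{i-1})$ with $t_i$ selected in a pass after the first, your argument coincides with the paper's: decompose $1\le \bigRval(x)+\sum_e c(e)\,\marden{e}{\calT_{i-1}\cup\best}$, then invoke the $(1+\eps)$-approximate density optimality of $t_i$ (which you spell out via rejection in the preceding pass at threshold $(1+\eps)\tau$ with a snapshot contained in $\calT_{i-1}$) together with $c(\opt\setminus\best)\le 1-c(\best)$.

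Where you diverge is the first-pass case, and here your Case~B plan has a real gap. The paper does not split on stream order; it disposes of the whole range $x\in[0,p]$ (with $p$ the cost collected in pass one) in one line, asserting the inequality ``holds trivially since $\tval'(x)\ge1$'' because $\tau_0=\lambda/(\alpha K)\ge f(\opt)/K=1$ after normalization. Your Case~A (items $e$ preceding $t_i$) is fine, but Case~B cannot be completed as sketched: when $e$ follows $t_i$ in the stream, every snapshot $T$ at which $e$ is ever examined---in pass one or in any later pass---satisfies $T\supseteq\calT_i\supsetneq\calT_{i-1}$, and submodularity then yields only $\marden{e}{T}\le\marden{e}{\calT_{i-1}}$, the \emph{wrong direction} for deducing an upper bound on $\marden{e}{\calT_{i-1}}$ from a rejection inequality $\marden{e}{T}<\tau'$. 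The ``direct density'' fallback also fails: a light, high-density element of $\opt\setminus\best$ arriving just after a $t_1$ that barely clears $\tau_0$ can make $\marden{e}{\calT_0}$ arbitrarily larger than $(1+\eps)\tval'(0)$. So the per-element bound $\marden{e}{\calT_{i-1}}\le(1+\eps)\tval'(x)$ simply need not hold in Case~B; do not route the first-pass argument through it. Follow the paper's one-line treatment of $[0,p]$ instead---though be aware that, with the extra factor $(1-c(\best))$ compared to the standard thresholding inequality, even that step is stated rather tersely and deserves a closer look before you rely on it.
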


\ifarxiv
\begin{proof}
First, note that for $x \in [0,p]$ where $p$ is the total cost of items taken in the first pass the inequality holds trivially since $\tval'(x) \ge 1$ (as in the proof of the standard thresholding inequality).
Hence assume that $x \in [p, 1 - c(\best)-\clast]$ is fixed and consider any pass after the first one.
Similarly to other proofs it suffices to only consider left endpoints of the intervals of the form $[c(\calT_{i - 1}), c(\calT_i))$ so let $x = c(\calT_{i - 1})$.
Since we normalized $f(\opt)=1$, then by monotonicity, $1 = f(\opt)\le f(\calT_{i-1}\cup \opt)$. 
Hence:
\begin{align*}
1&\le f(\calT_{i-1}\cup \opt)\\
& = f((\calT_{i-1} \cup \best) \cup (\opt \setminus \best))\\
&= f(\calT_{i-1}\cup \best) +  \margain{\opt\setminus(\best\cup \calT_{i-1})}{\calT_{i-1}\cup \best}\\
&\leq  \bigRval(x)    +  \sum_{e\in\opt\setminus(\best\cup \calT_{i-1})}\margain{e}{\calT_{i-1}\cup \best}\\
&=  \bigRval(x)    +  \sum_{e\in\opt\setminus(\best\cup \calT_{i-1})} c(e)\marden{e}{\calT_{i-1}\cup \best},
\end{align*}
where the second inequality is by submodularity and the last line is by the definition of marginal density. 
Since $\best$ has the maximum cost in $\opt$. $x \le 1 - \cbest$, all items in $\opt \setminus (\best \cup \calT_{i - 1})$ still fit into the remaining knapsack capacity.
In all passes after the first one, the thresholding algorithm always selects an element which gives $\frac 1 {1 + \eps}$-approximation of the highest possible marginal density:
$$(1 + \eps)\tval(x) \ge \max_{e \in \opt \setminus (\best \cup \calT_{i - 1})} \marden{e}{\best \cup \calT_{i - 1}}.$$

Combining with the inequality above:
\begin{align*}
1&\le \bigRval(x)    +  \sum_{e\in\opt\setminus(\best\cup \calT_{i-1})} c(e)\marden{e}{\calT_{i-1}\cup \best}\\
&\leq  \bigRval(x)  +  (1+\eps)\tval'(x)\sum_{e\in\opt\setminus(\best\cup \calT_{i-1})} c(e) \\
&= \bigRval(x)  +  (1+\eps)\tval'(x) c(\opt\setminus(\best\cup \calT_{i-1})) \\
&\leq \bigRval(x)  +  (1+\eps)\tval'(x) (1-\cbest),
\end{align*}
where the last equality is by the normalization of $c(\opt)=1$ and the fact that $\best \in \opt$. 
\end{proof}
\fi

\begin{theorem}
\label{thm:streaming}
There exists an algorithm that uses $\O{\minnk}$ space and $\O{\nicefrac{1}{\eps}}$ passes over the stream, makes $\O{\nicefrac{n}{\eps}+n\log \minnk}$ queries, and outputs a $\left(\nicefrac{1}{2}-\epsilon\right)$-approximation to the submodular maximization problem under a knapsack constraint. 
\end{theorem}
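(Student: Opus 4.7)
The plan is to prove Theorem~\ref{thm:streaming} by mirroring the offline analysis of Theorem~\ref{thm:offline}, using Lemma~\ref{lem:APXhEq} in place of Lemma~\ref{lem:hEq}. I would first establish the approximation assuming the supplied $\lambda$ is an $\alpha$-approximation of $f(\opt)$ for a fixed constant $\alpha$, and then remove this assumption via a standard technique. The approximation argument splits on whether the thresholding solution $T$ satisfies $c(T) \ge 1 - \cbest$.

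Suppose first $c(T) \ge 1 - \cbest$. Apply Lemma~\ref{lem:APXhEq} at $x = 1 - \cbest - \clast$. If $\bigRval(x) \ge \half - \eps$, then the augmentation stage, which searches over $e \notin T$ for the best single item to add to each prefix $\calT_{i-1}$, outputs at least $\bigRval(x)$: if $\best \notin T$, then $\best$ itself is a valid augmenting candidate for $s_{i-1}$ (it fits by definition of $\clast$); if $\best \in T$, then some prefix $\calT_k \supseteq \calT_{i-1} \cup \best$ is itself a candidate in the output and has value $\ge \bigRval(x)$ by monotonicity. Otherwise $\tval'(x) > \frac{\half + \eps}{(1+\eps)(1-\cbest)}$. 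The key step, analogous to the offline inequality $\gval(x) \ge x \cdot \gval'(x)$, is an approximate near-monotonicity of $\tval'$: items added across different passes have strictly decreasing marginal density (an item not added in pass $p$ had density below $\tau_p$ at the end of that pass, and submodularity only lowers subsequent densities), while within a single pass $p$ the densities of added items all lie in $[\tau_p, (1+\eps)\tau_p)$ by the same reasoning applied to pass $p-1$. Integrating yields $\tval(x) \ge \frac{x \cdot \tval'(x)}{1+\eps}$. Combining with the gain $\clast \cdot \tval'(x)$ contributed by the next selected element gives $\tval(1-\cbest) \ge \frac{\half + \eps}{(1+\eps)^2} \ge \half - \eps$, and since $\tval$ is non-decreasing, $f(\calT_j) \ge \tval(1-\cbest)$ for the first prefix exceeding cost $1 - \cbest$ lower bounds the algorithm's output.

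Suppose instead $c(T) < 1 - \cbest$. Let $\tau_{\mathrm{last}} \le (1+\eps)\lambda/(2K)$ be the threshold used in the final pass. Every $e \notin T$, in particular $e \in \opt \setminus T$, satisfies $\marden{e}{T} < \tau_{\mathrm{last}}$, else it would have been added. Submodularity then yields
\[
f(\opt) - f(T) \le \sum_{e \in \opt \setminus T}\margain{e}{T} \le \tau_{\mathrm{last}} \cdot c(\opt \setminus T) \le \tfrac{(1+\eps)\lambda}{2} \le \tfrac{1+\eps}{2}\,f(\opt),
\]
so $f(T) \ge (\half - \eps/2)\,f(\opt)$, and $T$ itself is a candidate in the output.

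For the complexity, $T$ stores at most $\minnk$ items, giving $\O{\minnk}$ space. Thresholds decrease geometrically by $(1+\eps)$ over $[\lambda/(2K), \lambda/(\alpha K)]$, giving $\O{1/\eps}$ thresholding passes plus one augmentation pass, each pass issuing $\O{n}$ queries. The augmentation uses binary search on the prefix costs $c(\calT_i)$ to find, for each incoming element, the largest prefix that still accommodates it, adding $\O{n \log \minnk}$ to the runtime. To remove the $\lambda$ assumption, a preliminary pass computes $M = \max_e f(e)$ (so $M \le f(\opt) \le \minnk M$), after which one runs the algorithm with $\O{\log \minnk}$ geometrically-spaced $\lambda$-guesses in parallel over the shared stream, contributing $\O{n \log \minnk}$ additional queries without inflating the pass count. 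The main obstacle I anticipate is carefully tracking the $(1+\eps)$ slippage through the near-monotonicity step and through Lemma~\ref{lem:APXhEq} so that the final ratio is genuinely $(\half - \eps)$ (possibly after an inconsequential constant rescale of $\eps$), rather than a worse $O(\eps)$ degradation compounded across the case analysis.
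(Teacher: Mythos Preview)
Your approximation analysis is correct and follows the paper's approach: apply Lemma~\ref{lem:APXhEq} at $x=1-\cbest-\clast$, split on whether $\bigRval(x)$ is large, and in the remaining case use the approximate monotonicity bound $\tval(x)\ge x\,\tval'(x)/(1+\eps)$ together with the $\clast\cdot\tval'(x)$ gain from the next thresholded item. Your explicit handling of the case $c(T)<1-\cbest$ via the residual-density argument is cleaner than the paper's one-line appeal to ``$\tval'(x)<\tfrac12\Rightarrow \tval(x)>\tfrac12$'', and your case split on $\best\in T$ versus $\best\notin T$ in the augmentation is also correct.

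There is, however, a real gap in the complexity step where you remove the assumption on $\lambda$. Running $\O{\log\minnk}$ geometrically spaced guesses of $\lambda$ in parallel means maintaining $\O{\log\minnk}$ independent copies of the thresholding set and querying each item against each copy in every pass. That costs $\O{\minnk\log\minnk}$ space and $\O{(n/\eps)\log\minnk}$ queries, not the stated $\O{\minnk}$ and $\O{n/\eps+n\log\minnk}$. The paper avoids this blowup differently: it first runs a single-pass constant-factor approximation (Theorem~\ref{thm:constant:stream}, with its own $\eps$ fixed to a constant such as $1/6$) to obtain one value $\lambda$ using $\O{\minnk}$ space and $\O{n\log\minnk}$ queries, and only then runs \smax once with that fixed $\lambda$. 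This is where the additive $n\log\minnk$ term in the query bound actually originates; it is not from the binary search in the augmentation pass, which contributes $\O{n\log\minnk}$ only to runtime and $\O{n}$ to queries.
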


\ifnotarxiv
The proofs are similar to the proofs of Lemma~\ref{lem:hEq} and Theorem~\ref{thm:offline} and are provided in the supplementary material.
\fi

\ifarxiv
\begin{proof}
We can use existing algorithm from Theorem~\ref{thm:constant:stream} to obtain a constant factor approximation $\lambda$ to $f(\opt)$. 
We thus analyze the correctness of Algorithm \ref{alg:smax} given an input $\lambda$ that is a constant factor approximation to $f(\opt)$. 
The proof is similar to proof of Theorem~\ref{thm:offline}. 

By applying Lemma~\ref{lem:APXhEq} at the point $x = 1 - \cbest - \clast$, we have:
\[\bigRval(1 - \cbest - \clast) + (1+\eps)(1-\cbest)\tval'(1 - \cbest - \clast)\geq 1\]
If $\bigRval(1 - \cbest - \clast) \ge \frac 1 2$, then we have $\frac 12$-approximation, because $\bigRval(1 - c(\best) - \clast)$ is a lower bound on the value of the augmented solution when the cost of the thresholding solution is $1 - c(\best) - \clast$. Otherwise: 
\begin{align*}
\tval'(1 - \cbest - \clast) &\geq \frac {1 - \bigHval(1 - \cbest - \clast)} {(1 - \cbest)(1 + \eps)} \\
&> \frac 1 {2 (1 - \cbest)(1 + \eps)}
\end{align*}

Note that since $\tval(0) = 0$, for any $x \in [0, 1]$ we have $\tval(x) \ge \frac {\tval'(x)\cdot x} {1 + \eps}$:
\[\tval(x) \ge \int_{\chi=0}^x \tval'(\chi) d\chi \ge \int_{\chi=0}^x \frac {\tval'(x)} {1 + \eps} d\chi = \frac {\tval'(x) \cdot x} {1 + \eps},\]
where we used the fact that $\tval'$ is a $\frac 1 {1 + \eps}$ approximation of the maximum marginal density, which does not increase. 
Therefore, applying this at $x = 1 - \cbest - \clast$:
\begin{align*}
\tval(1 - \cbest - \clast) &\ge (1 - \cbest - \clast) \tval'(1 - \cbest - \clast)\\
&\ge \frac {1 - \cbest - \clast} {2 (1 - \cbest) (1 + \eps)}.
\end{align*}

Recall that $1 - \cbest - \clast$ was the last cost of the thresholding solution when we could still augment it with $\best$; therefore, the next element $e$ that the thresholding solution selects has the cost at least $(1 - \cbest) - (1 - \cbest - \clast) = \clast$.
Thus, the function value after taking $e$ is at least
\begin{align*}
\gval(1 - \cbest - \clast) + \clast \gval'(1 - \cbest - \clast) &\ge \frac {1 - \cbest - \clast} {2 (1 - \cbest) (1 + \eps)} + \frac {c^*} {2 (1 - \cbest) (1 + \eps)}\\
&= \frac 1 {2 (1 + \eps)} = \frac 12 -\frac \eps {2 (1 + \eps)} \ge \frac{1}{2}-\eps.
\end{align*}

Hence, Algorithm~\ref{alg:smax} gives a $\left(\frac{1}{2}-\eps\right)$-approximation to the submodular maximization problem under knapsack constraints, given a constant factor approximation to $f(\opt)$. 
Note that it suffices to consider only thresholds up to $\frac{\tau}{2K}$ since $\tval'(x) < \frac{1}{2}$ implies that $\tval(x)>\frac{1}{2}$ by Lemma~\ref{lem:APXgEq}. 

Using existing algorithms to obtain a constant factor approximation $\lambda$ (e.g., by setting $\eps=\frac{1}{6}$ in Theorem~\ref{thm:constant:stream}) that use additional $\O{n\log \minnk}$ queries, then correctness of Algorithm~\ref{alg:smax} follows. 
It remains to analyze the space and query complexity of Algorithm \ref{alg:smax}.
Since each item has cost at least $1$, at most $K$ items are stored by the thresholding algorithm, and at most $K$ items are stored by the augmented solution $S$. 
Hence, the space complexity of Algorithm~\ref{alg:smax} is $\O{\minnk}$. 
If $\tau$ is an $\alpha$-approximation to $f(\opt)$ for some constant $\alpha$, then the algorithm makes $\log_{1+\epsilon}\frac{1}{2\alpha}=\O{\frac{1}{\eps}}$ passes over the input stream. 
Each pass makes at most $n$ queries, so the number of queries is at most $\O{\frac{n}{\eps}}$. 
\end{proof}
\fi
\ifarxiv
\subsection{Distributed algorithm \dmax}\label{sec:distributed}
In this section, we assume that there are $\machines = \sqrt{\nicefrac{n}{\minnk}}$ machines $M_1,\ldots,M_{\machines}$, each with $O\left(\frac{n}{\machines}\right)=O(\sqrt{n \minnk})$ amount of local memory.
Our distributed algorithm \arxiv{(Algorithm~\ref{alg:dmax})} follows a similar thresholding approach as our streaming algorithm: at each round, machines collect items whose marginal densities exceed the threshold corresponding to the round. 
\begin{algorithm}
\textbf{Input}: Set of elements $E=e_1,\ldots, e_n$, knapsack capacity $K$, cost function $c(\cdot)$, non-negative monotone submodular function $f$, $\tau$ that is $\alpha$-approximation of $f(\opt)$ for some constant $\alpha>0$\;
\textbf{Output}: A set $S$ that is a $(\frac{1}{2}-\epsilon)$-approximation for submodular maximization with a knapsack constraint\;
$T\gets\emptyset$, $t\gets\frac{\tau}{\alpha K}, \minnk \gets \min(n, K)$\;
\While{$t>\frac{\tau}{2K}$}{
    Form $\Gamma$ by sampling each $e\in E$ with probability $4\sqrt{\minnk/n}$\;
    Partition $E$ randomly into sets $V_1, V_2, \ldots V_\machines$\;
    Send $V_i$ to machine $M_i$ for all $i$\;
    Send $\Gamma$ and $T$ to all machines including a central machine $C$\;
    \For{each machine $M_i$ (in parallel)}{
        $X_i = T$\;
        \For{each item $e\in \Gamma$}{
            \If{$\rho(e|X_i)>t$}{
                $X_i=X_i\cup\{e\}$\;
            }
        }
        \For{each item $e\in V_i$}{
            \If{$\rho(e|X_i)>t$}{
                $X_i=X_i\cup\{e\}$\;
            }
        }
        $X_i=X_i\setminus T$\;
        Send $X_i\setminus T$ to $C$\;
    }
    \For{each item $e\in\cup X_i$ (on central machine)}{
        \If{$\rho(e|T)>t$}{
            $T=T\cup\{e\}$\;
        }
    }
    $t=\frac{t}{1+\eps}$
}
Send $T$ to all machines\;
\For{each machine $M_i$ (in parallel)}{
    For each $i$, let $G_i$ denote the first $i$ items that a greedy algorithm would select from $T$ and initialize $s_i=\emptyset$\;
    \For{each item $e\in V_i \setminus T$}{
        $j = \max\set{i|c(e) + c(G_i) \le K}$\;
        \If{$f(G_j\cup s_j)<f(G_j\cup e)$}{
            $s_j\gets e$\;
        }
    }
    Send $\argmax f(G_i\cup s_i)\}$ to $C$\;
}
\Return $\argmax$ of solutions received in $C$
\caption{$\dmax$: A $O\left(\frac{1}{\epsilon}\right)$-round MapReduce algorithm for sub-modular maximization under knapsack constraints.}
\label{alg:dmax}
\end{algorithm} 
\fi

\ifarxiv
We require the following form of Azuma's inequality for submartingales.
\begin{theorem}[Azuma's Inequality]
\label{thm:azuma}
Suppose $X_0, X_1,\ldots,X_n$ is a submartingale and $|X_i-X_{i+1}|\le c_i$. 
Then
\[\Pr\left[X_n-X_0\le -t\right]\le\eexp\left(\frac{-t^2}{2\sum_i c_i^2}\right).\]
\end{theorem}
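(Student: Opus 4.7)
The plan is to follow the standard Chernoff/exponential-moment approach, adapted so that the submartingale assumption $\mathbb{E}[X_i - X_{i-1}\mid \mathcal{F}_{i-1}]\ge 0$ (rather than $=0$) is exploited to only strengthen the bound. First I would introduce the differences $D_i := X_i - X_{i-1}$, which satisfy $|D_i|\le c_i$ and $\mathbb{E}[D_i\mid \mathcal{F}_{i-1}]\ge 0$ where $\mathcal{F}_{i-1} = \sigma(X_0,\dots,X_{i-1})$. The target event can then be rewritten as $\{-\sum_{i=1}^n D_i \ge t\}$, and by Markov's inequality applied to $e^{-\lambda\sum_i D_i}$ for an arbitrary $\lambda>0$,
\[
\Pr\bigl[X_n - X_0 \le -t\bigr] \le e^{-\lambda t}\,\mathbb{E}\bigl[e^{-\lambda \sum_{i=1}^n D_i}\bigr].
\]
It then remains to control the moment generating function of the sum and optimize over $\lambda$.

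The main technical step is a per-step MGF bound: for each $i$,
\[
\mathbb{E}\bigl[e^{-\lambda D_i}\,\big|\,\mathcal{F}_{i-1}\bigr]\le e^{\lambda^2 c_i^2/2}.
\]
I would prove this via a convexity argument (Hoeffding's lemma, adapted to the submartingale case). Since $y\mapsto e^{-\lambda y}$ is convex, on the interval $[-c_i,c_i]$ it lies below the chord through its endpoints, giving
\[
e^{-\lambda D_i} \le \frac{c_i-D_i}{2c_i}e^{\lambda c_i} + \frac{c_i+D_i}{2c_i}e^{-\lambda c_i}.
\]
Taking conditional expectation and writing $p := \mathbb{E}[D_i\mid \mathcal{F}_{i-1}]/c_i \in [0,1]$, the right-hand side becomes $\cosh(\lambda c_i) - p\sinh(\lambda c_i)\le \cosh(\lambda c_i)$, where the inequality uses $p\ge 0$ (the submartingale property) and $\sinh(\lambda c_i)\ge 0$ for $\lambda\ge 0$. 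The elementary estimate $\cosh(z)\le e^{z^2/2}$ finishes the step. This is the heart of the argument and is the only place where the submartingale hypothesis enters; it is the step I expect to need the most care to state cleanly.

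Next I would chain the per-step bounds using the tower property. Conditioning on $\mathcal{F}_{n-1}$ and pulling out the $\mathcal{F}_{n-1}$-measurable factor,
\[
\mathbb{E}\bigl[e^{-\lambda \sum_{i=1}^n D_i}\bigr] = \mathbb{E}\Bigl[e^{-\lambda \sum_{i=1}^{n-1} D_i}\,\mathbb{E}\bigl[e^{-\lambda D_n}\mid \mathcal{F}_{n-1}\bigr]\Bigr] \le e^{\lambda^2 c_n^2/2}\,\mathbb{E}\bigl[e^{-\lambda \sum_{i=1}^{n-1} D_i}\bigr].
\]
Iterating yields $\mathbb{E}[e^{-\lambda\sum_i D_i}] \le \exp\bigl(\tfrac{\lambda^2}{2}\sum_{i=1}^n c_i^2\bigr)$.

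Finally, combining with the Markov bound gives $\Pr[X_n-X_0\le -t] \le \exp\bigl(-\lambda t + \tfrac{\lambda^2}{2}\sum_i c_i^2\bigr)$, and minimizing the exponent over $\lambda>0$ at $\lambda = t/\sum_i c_i^2$ produces the claimed tail bound $\exp\bigl(-t^2/(2\sum_i c_i^2)\bigr)$. The proof is otherwise standard; the only real obstacle is making sure the submartingale (rather than martingale) hypothesis is applied correctly in the convexity step, so that the extra $-p\sinh(\lambda c_i)$ term can be safely discarded.
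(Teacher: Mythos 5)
Your proof is correct: the Markov/exponential-moment setup, the chord bound from convexity of $y\mapsto e^{-\lambda y}$, the observation that the submartingale drift term $-p\sinh(\lambda c_i)$ with $p\ge 0$ can only decrease the conditional MGF, the estimate $\cosh(z)\le e^{z^2/2}$, the tower-property chaining, and the optimization at $\lambda=t/\sum_i c_i^2$ all go through as written. Note that the paper does not prove this statement at all --- it invokes Azuma's inequality for submartingales as a known external tool (used only in Lemma~\ref{lem:total:central}) --- so there is no in-paper argument to compare against; yours is the standard Hoeffding--Azuma proof, correctly adapted to the one-sided tail. The only cosmetic blemish, inherited from the paper's own statement, is the index offset between $c_i$ bounding $|X_i-X_{i+1}|$ and your $D_i=X_i-X_{i-1}$ with $|D_i|\le c_i$, which does not affect the result.
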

We first bound the total number of elements sent to the central machine.
\begin{lemma}
\label{lem:total:central}
In Algorithm~\ref{alg:dmax}, with probability $1-\exp{-\Omega(K)}$, the total number of elements sent to the central machine is $\sqrt{n \minnk}$. 
\end{lemma}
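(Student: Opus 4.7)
The plan is to control the contribution of each round separately and then union bound across the $O(1/\eps)$ rounds. Fix a round with threshold $t \ge \tau/(2K)$; since $\tau$ is a constant-factor approximation of $f(\opt)$, we have $t = \Omega(f(\opt)/K)$. Each machine $M_i$ sends $X_i \setminus T$ to $C$, and the quantity to bound is $\sum_{i=1}^{m} |X_i \setminus T|$ with $m = \sqrt{n/\minnk}$.

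\textbf{Step 1 (per-machine deterministic bound via a feasible prefix).} I would order the elements of $X_i \setminus T$ by time of addition as $e_1, \ldots, e_L$, and let $P=\{e_1,\ldots,e_{j^*}\}$ be the maximal prefix such that $T \cup P$ remains feasible. By the thresholding rule, $\margain{e_j}{X_i^{(j-1)}} > t\, c(e_j)$, so telescoping yields $f(T \cup P) - f(T) > t\, c(P)$; combined with $f(T \cup P) \le f(\opt)$ by feasibility and monotonicity, this gives $c(P) < f(\opt)/t = O(K)$ and hence $j^* = O(\minnk)$ since $c(e_j)\ge 1$. For the at most one additional element needed to first exceed capacity (beyond which no further additions can occur by the maximality of $P$ together with the monotone growth of the working set), the count is absorbed into the same $O(\minnk)$ bound.

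\textbf{Step 2 (aggregation with the Liu--Vondr\'ak sample).} Summing Step~1 over the $m = \sqrt{n/\minnk}$ machines already gives the claimed $O(\sqrt{n \minnk})$ in expectation. The role of $\Gamma$, sampled at rate $4\sqrt{\minnk/n}$ and processed first by every machine, is to guarantee concentration: most elements that a machine would otherwise add from $V_i$ have already been identified from $\Gamma$, so the residual additions from $V_i$ are few. Tracking the shrinking ``candidate pool'' $\{e \in E : \marden{e}{X_i} > t\}$ as $\Gamma$ is processed, the expected intersection of this pool with $V_i$ is $O(1/m)$ times its size, yielding per-round total $O(\sqrt{n\minnk})$ in expectation.

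\textbf{Step 3 (concentration via Azuma and union bound).} To obtain the $\exp(-\Omega(K))$ failure probability, I would expose the randomness one element at a time (both the Bernoulli indicator for inclusion in $\Gamma$ and the uniform index of $V_i$) to build a Doob martingale of length $O(n)$ with increments bounded by $O(1)$. Applying Theorem~\ref{thm:azuma} with deviation $\Theta(\sqrt{n\minnk})$ gives exponent $\Omega(n\minnk/n)=\Omega(\minnk)$, which is $\Omega(K)$ up to constants since $\minnk=\min(n,K)$. A final union bound over the $O(1/\eps)$ rounds preserves the high-probability bound.

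\textbf{Main obstacle.} The subtlest point is Step~2: making rigorous the claim that the candidate pool for the current $t$ is genuinely reduced by processing $\Gamma$, so that the residual $V_i$-contribution concentrates around the desired expectation. This requires a coupling between the centralized thresholding process and the distributed one (as in Liu--Vondr\'ak), together with careful handling of the dependence introduced by the fact that $X_i$ evolves as elements are added. A secondary technical point is verifying that $c(T)$ and $c(X_i)$ can be maintained at $O(K)$ throughout the execution so the feasible-prefix argument in Step~1 applies uniformly across rounds.
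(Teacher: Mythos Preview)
Your Step~1 contains a genuine error: Algorithm~\ref{alg:dmax} does \emph{not} enforce the knapsack constraint when building $X_i$ (contrast this with \smax, which tests $c(e\cup T)\le K$ before adding). The local loop on each machine adds any $e$ with $\marden{e}{X_i}>t$ regardless of the current cost, so once your feasible prefix $P$ is exhausted the machine simply keeps adding elements; the clause ``beyond which no further additions can occur by the maximality of $P$ together with the monotone growth of the working set'' is false for this algorithm. Moreover, the inequality $f(T\cup P)\le f(\opt)$ cannot be pushed past $P$, because the growing $X_i$ is infeasible and $f(X_i)$ is not controlled by $f(\opt)$. Hence Step~1 does not give a per-machine $O(\minnk)$ bound. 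Notice also that if Step~1 were valid it would be a \emph{deterministic} $O(\sqrt{n\minnk})$ bound on the total, rendering $\Gamma$ and your Steps~2--3 superfluous; the fact that you still feel the need to invoke $\Gamma$ ``to guarantee concentration'' is a symptom that Step~1 cannot carry the argument.

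The paper's proof is organized differently and does not attempt a per-machine bound. It partitions $\Gamma$ into roughly $3\minnk$ blocks of size $\sqrt{n/\minnk}$ and runs a dichotomy over the blocks: either at some block the pool of remaining high-density candidates has already fallen below $\sqrt{n\minnk}$ (so the residual sent from all the $V_i$ together is small), or before every block that pool is still at least $\sqrt{n\minnk}$, in which case each block contributes at least one new element to the working set with probability at least $1/2$. Azuma's inequality is then applied to the submartingale $Y_j=\sum_{\ell\le j}(Z_\ell-\tfrac12)$, where $Z_\ell$ is the indicator that block $\ell$ succeeds; this is a length-$\Theta(\minnk)$ process with increments bounded by $1$, which gives failure probability $e^{-\Omega(\minnk)}$ directly. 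Your Step~3, by contrast, proposes a length-$n$ Doob martingale exposing one element at a time, but its differences are not $O(1)$: flipping a single element's membership in $\Gamma$ cascades through the density tests and can change the final count by far more than a constant, so the deviation bound you quote does not follow.
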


\begin{proof}
Since each element is sampled with probability $\frac{\minnk}{n}$, the expected number of elements in $\Gamma_i$ is $4 \sqrt{n\minnk}$ for any round $i$.  
Hence $|\Gamma_i|\ge 3\sqrt{n\minnk}$ with probability at least $1-e^{-\Omega(\minnk)}$ by standard Chernoff bounds. 
Let $N_i$ denote the total number of elements with marginal density at least $\frac{f(\opt)}{(1+\eps)^i\minnk}$ with respect to $T_{i-1}$, so that the number of elements sent to the central unit in round $i$ is exactly $N_i + |\Gamma_i|$. 

Suppose $\Gamma_i$ is partitioned into at least $3\minnk$ chunks of size $\sqrt{\frac{n}{\minnk}}$ elements. 
If there are less than $\sqrt{n\minnk}$ remaining elements before each chunk whose marginal density with respect to $T_{i-1}$ exceeds $\frac{f(\opt)}{(1+\eps)^i\minnk}$, then certainly at most $\sqrt{n\minnk}$ elements are sent to the central machine. 

On the other hand, if there are at least $\sqrt{n \minnk}$ remaining elements before each chunk whose marginal density with respect to $T_{i-1}$ exceeds $\frac{f(\opt)}{(1+\eps)^i\minnk}$. 
Then an additional element is added to $N_i$ with probability at least $1 - \left(1 - \sqrt{\frac{\minnk}{n}}\right)^{\sqrt{\frac{n}{\minnk}}} > 1/2$. 
To use a martingale argument to bound the number of elements selected in $\Gamma_i$, we let $X_i$ be the indicator random variable for the event that at least one element is selected from the $i$\th block so that we have $E[X_i \mid X_1,\ldots,X_{i-1}] \geq\frac{1}{2}$. 
Let $Y_i = \sum_{j=1}^{i} (X_i - 1/2)$ so that the sequence $Y_1,Y_2,\ldots$ is a submartingale, i.e., $E[Y_i \mid Y_1,\ldots,Y_{i-1}] \geq Y_{i-1}$ and $|Y_{i} - Y_{i-1}| \leq 1$. 
By Azuma's inequality (Theorem~\ref{thm:azuma}), $\Pr[Y_{3\minnk} < -\frac12 \minnk] < e^{-\Omega(\minnk)}$, so that $\sum_{j=1}^{3\minnk} X_j = Y_K + \frac{3}{2}\minnk \geq \minnk$ with probability at least $1-e^{-\Omega(\minnk)}$, in which case no elements are sent to the central machine.  
\end{proof}

We now analyze the approximation guarantee and performance of Algorithm~\ref{alg:dmax}.
\fi
\ifarxiv
\begin{theorem}
\label{thm:distributed}
There exists an algorithm \dmax which uses $\O{\nicefrac{1}{\eps}}$ rounds of communication between $\sqrt{\nicefrac{n}{\minnk}}$ machines, each with $\O{\sqrt{n \minnk}}$ memory. 
With high probability, the total number of elements sent to the central machine is $\sqrt{n \minnk}$ and the algorithm outputs a $\left(\nicefrac{1}{2}-\epsilon\right)$-approximation to the submodular maximization problem with a knapsack constraint.
\end{theorem}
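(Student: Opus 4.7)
The plan is to follow the template already established for Theorem~\ref{thm:streaming}, adapting the thresholding analysis based on the differential inequality in Lemma~\ref{lem:APXhEq} to the distributed setting. The approximation argument, the resource bounds, and the communication bound can be handled in three largely independent parts.

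First, I would dispatch the resource bounds. The outer \texttt{while} loop runs as long as $t > \tau/(2K)$, and since $\tau$ is an $\alpha$-approximation to $f(\opt)$ and $t$ is divided by $(1+\eps)$ each iteration, the loop terminates after $\log_{1+\eps}(2/\alpha) = \O{1/\eps}$ rounds, plus one final augmentation round. Per-machine memory is dominated by $V_i$ of size $\O{n/\machines} = \O{\sqrt{n\minnk}}$, the sampled set $\Gamma$ (of expected size $4\sqrt{n\minnk}$ and tightly concentrated via a Chernoff bound), and the current thresholding set $T$ whose cost is at most $K$ and hence fits within memory. The communication bound to the central machine follows directly from Lemma~\ref{lem:total:central}; broadcasting $T$ and the machines' augmenting candidates adds only an $\O{\sqrt{n\minnk}}$ term that does not change the asymptotics.

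Second, for the approximation analysis, I would define analogues $\tval(x)$ and $\bigRval(x)$ of the streaming performance functions using the order in which the central machine's $T$ is populated, and prove the same differential inequality as in Lemma~\ref{lem:APXhEq}, namely
\[
\bigRval(x) + (1+\eps)(1-\cbest)\,\tval'(x) \ge 1
\]
for $x \in [0, 1-\cbest-\clast]$. The only nontrivial step is that the threshold $t$ used in the round at which a given element joins $T$ is a $1/(1+\eps)$-approximation to the maximum marginal density among feasible not-yet-selected elements with respect to the current state, so that the same argument controlling $\marden{e}{\calT_{i-1}\cup\best}$ by $(1+\eps)\tval'(x)$ goes through. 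Given this, the closing calculation from the proof of Theorem~\ref{thm:streaming} applies essentially verbatim: either $\bigRval(1-\cbest-\clast) \ge \half$ (in which case the local augmentation on the machine holding $\best$ produces a $\half$-approximation), or else $\tval(1-\cbest-\clast) + \clast\,\tval'(1-\cbest-\clast) \ge \tfrac{1}{2(1+\eps)} \ge \half - \eps$, which is achieved by $T$ itself after the next element is added.

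The hard part is justifying that the threshold property above really holds for \dmax despite random sampling and parallel processing. Following the Liu--Vondr\'ak sampling argument, whenever the pool of elements with marginal density at least $t$ with respect to the current $T$ is large, $\Gamma$ contains such an element with high probability; every machine sees $\Gamma$ and processes it before $V_i$, so the local state $X_i$ absorbs at least one such element, causing it to be forwarded to $C$ and merged into $T$. Conversely, when the pool is small, all such elements are likely to be routed to some $V_i$ and identified there. I would union-bound this event over the $\O{1/\eps}$ rounds, inheriting the $\exp(-\Omega(\minnk))$ failure probability from Lemma~\ref{lem:total:central}. Conditioned on success, $\bigRval$ behaves exactly as in the streaming analysis, and the final answer, taken as the best over the solutions each machine builds via its local augmentation together with $T$ itself, is a $(\half-\eps)$-approximation with high probability.
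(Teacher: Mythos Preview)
Your proposal is correct and follows the same approach as the paper: reduce correctness to the streaming analysis (the differential inequality of Lemma~\ref{lem:APXhEq} and the closing calculation of Theorem~\ref{thm:streaming}) and invoke Lemma~\ref{lem:total:central} for the communication/space bounds. The paper's own proof is extremely terse---literally two sentences asserting that thresholding proceeds as in Algorithm~\ref{alg:smax} and that space follows from Lemma~\ref{lem:total:central}---so your version simply fills in the details (the round count, the per-machine memory accounting, and the Liu--Vondr\'ak sampling justification) that the paper leaves implicit.
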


\begin{proof}
Correctness follows from the observation that the algorithm performs thresholding in the same manner as Algorithm~\ref{alg:smax}. 
The space bounds follow from Lemma~\ref{lem:total:central}.
\end{proof}
\else
\fi
\subsection{Query lower bound}\label{sec:query-lb}
We show a simple query lower bound under the standard assumption \cite{NorouziFardTMZMS18,KazemiMZLK19} that the algorithm only queries $f$ on feasible sets.
\begin{theorem}\label{thm:query-lb}
For $\alpha > \nicefrac{1}{2}$, any $\alpha$-approximation algorithm for maximizing a function $f$ under a knapsack constraint that succeeds with constant probability and only queries values of the function $f$ on feasible sets (i.e. sets of cost at most $K$) must make at least $\Omega(n^2)$ queries if $f$ is either: 1) non-monotone submodular, 2) monotone and submodular on the feasible sets, 3) monotone subadditive.
\end{theorem}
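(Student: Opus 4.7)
The plan is to prove the lower bound via Yao's minimax principle using a hidden-pair construction, adapting the cardinality-constraint lower bounds in \cite{KazemiMZLK19,NorouziFardTMZMS18} to the knapsack setting. First, set every cost to $c(e) = \lceil K/2 \rceil$ so that every feasible set has size at most two; the admissible queries are then $\emptyset$, the $n$ singletons, and the $\binom{n}{2}$ pairs. Let the hidden object be a uniformly random pair $T^{\ast} = \{i^{\ast}, j^{\ast}\} \in \binom{[n]}{2}$, and define $f_{T^{\ast}}$ on feasible sets by $f_{T^{\ast}}(\emptyset) = 0$, $f_{T^{\ast}}(\{i\}) = 1/2$ for every $i$, $f_{T^{\ast}}(\{i,j\}) = 1/2$ for every pair $\{i,j\} \neq T^{\ast}$, and $f_{T^{\ast}}(T^{\ast}) = 1$. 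Then $\mathrm{OPT} = 1$ is attained uniquely by $T^{\ast}$ while every other feasible set has value at most $1/2$, so any algorithm with approximation strictly better than $1/2$ must output exactly $T^{\ast}$.

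Next, extend $f_{T^{\ast}}$ to $2^{[n]}$ compatibly with each of the three function classes. For case (2) no extension beyond feasible sets is required; the only nontrivial submodularity inequality on feasible sets, $f(\{i,j\}) + f(\emptyset) \leq f(\{i\}) + f(\{j\})$, holds with equality at $T^{\ast}$ and with slack elsewhere, and monotonicity $0 \leq 1/2 \leq 1$ is immediate. For case (3), take $f_{T^{\ast}}(S) = 1$ whenever $S \supseteq T^{\ast}$ and $f_{T^{\ast}}(S) = 1/2$ for every other nonempty $S$; monotonicity is clear and subadditivity $f(A \cup B) \leq f(A) + f(B)$ follows since the left-hand side is at most $1$ while every nonempty set contributes at least $1/2$. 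For case (1), a scalar bump at a single pair is inherently supermodular when all singletons are symmetric, so the workaround is to exploit the absence of a monotonicity requirement and let $f$ drop on strict supersets of $T^{\ast}$, e.g., by defining $f$ as a cut-like function of an auxiliary bipartite structure encoding $T^{\ast}$, chosen so that every marginal $f(S \cup \{e\}) - f(S)$ is nonincreasing in $S$ globally.

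Finally, apply Yao's principle. Fix any deterministic algorithm making $q$ queries to feasible sets. Every query to $\emptyset$ or a singleton returns a value that is independent of $T^{\ast}$, and a pair query $S$ returns $1$ if $S = T^{\ast}$ and $1/2$ otherwise, so each query rules out at most one candidate pair. The probability that $T^{\ast}$ is among the first $q$ queries is at most $q/\binom{n}{2}$, and conditional on $T^{\ast}$ not being queried, it is uniform over at least $\binom{n}{2}-q$ remaining pairs, so the algorithm identifies $T^{\ast}$ with probability at most $q/\binom{n}{2} + 1/(\binom{n}{2}-q)$; requiring this to be bounded below by a constant forces $q = \Omega(n^2)$. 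The principal obstacle is the construction for case (1), since preserving global submodularity in the presence of a planted pair bonus requires a careful choice of how $f$ decays on supersets while leaving the feasible-set values indistinguishable on all queries other than $T^{\ast}$.
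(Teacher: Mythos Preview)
Your approach is essentially identical to the paper's: both set all costs to $K/2$ so that feasible sets have size at most two, plant a uniformly random special pair with value $1$ while all other nonempty feasible sets get value $1/2$, and invoke Yao's principle to force $\Omega(n^2)$ pair queries. The paper phrases this as distinguishing two distributions $\mathcal D_{1/2}$ and $\mathcal D_1$ rather than as identifying a hidden pair, but the content is the same. Your extension for case~(3) differs cosmetically from the paper's (you set $f(S)=1$ exactly when $S$ contains the planted pair, whereas the paper sets $f(S)=1$ for every set of size greater than two), but both versions are monotone subadditive and coincide on feasible sets, which is all that matters since the algorithm may only query feasible sets.

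Your hesitation about case~(1) is well placed, and in fact the paper does not do better here. The paper simply asserts that its construction (value $1/2$ on every nonempty set except the planted pair, which has value $1$) yields a non-monotone submodular function. But this is not submodular: taking $S$ to be the planted pair, $T = S \cup \{e_k\}$, and $e = e_\ell \notin T$, the marginal of $e$ at $S$ is $1/2 - 1 = -1/2$ while the marginal at the larger set $T$ is $1/2 - 1/2 = 0$, violating diminishing returns. So neither your proposal nor the paper supplies a concrete valid construction for case~(1); the obstacle you flag is real and is glossed over in the paper as well. Cases~(2) and~(3), which are the ones actually relevant to the rest of the paper, are handled correctly by your argument.
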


\arxiv{
\begin{proof}
Let $e_1, \dots, e_n$ be the set of elements and set $c(e_i) = K/2$ for all $i$. 
By Yao's principle it suffices to consider two hard distributions $\mathcal D_{1/2}$ and $\mathcal D_{1}$ such that the optimum for every instance in the support of these distributions is $1/2$ and $1$ respectively and then show that no algorithm making $o(n^2)$ deterministic queries can distinguish the two distributions with constant probability. The distributions $\mathcal D_{1/2}$ and $\mathcal D_1$ are as follows:
\begin{itemize}
\item $\mathcal D_{1/2}$ has $f(S) = 1/2$ for all $S \neq \emptyset$. 
\item $\mathcal D_1$ is constructed by picking two items $e_i \neq e_j$ uniformly at random and assigning $f(S) = 1$ for $S=\{e_i,e_j\}$. 
Otherwise, set $f(S) = 1/2$ for all $S \neq \emptyset$ and $S\neq\{e_i,e_j\}$.
\end{itemize}

Fix the set of deterministic queries $Q$ that the algorithm makes. Since the algorithm is only allowed to make queries to sets of cost at most $K$, all sets in $Q$ have size at most two. Furthermore, note that $f(e_i) = 1/2$ for all $i$ under both $\mathcal D_{1/2}$ and $\mathcal D_1$. Thus, only queries to sets of size exactly two can help the algorithm distinguish the two distributions. All such queries give value $1/2$ under both distributions except for a single query $(i,j)$ under $\mathcal D_1$ which gives value $1$. Since $(i,j)$ is chosen uniformly at random under $\mathcal D_1$ the probability that a fixed set $Q$ contains it is given as $|Q|/\binom{n}{2}$. Hence if the algorithm succeeds with a constant probability then it must be the case that $|Q| = \Omega(n^2)$.

Note that the construction of $f$ results in a non-monotone submodular function but $f$ is monotone when restricted to feasible sets of size at most two items. 
By changing $\mathcal D_1$ so that the functions in this distribution take value $1$ on all sets of size more than $2$ one can ensure monotonicity of $f$. However, $f$ is still submodular on the feasible sets and subadditive everywhere (recall that a subadditive function satisfies $f(S) + f(T) \ge f(S \cup T)$ for all $S, T \subseteq U$).
\end{proof}
}
\section{Experimental results}
We compare our offline algorithm $\gmax$ and our streaming algorithm $\smax$ with baselines, answering the following questions: (1) What are the approximation factors we are getting on real data? (2) How do the objective values compare? (3) How do the runtimes compare? (4) How do the numbers of queries compare?
\ifarxiv
We compare $\gmax$ to the following baselines:
\begin{compactenum}
\item \textbf{\textsc{PartialEnum+Greedy}}~\cite{Sviridenko04}. Given an input parameter $d$, this algorithm creates a separate knapsack for each combination of $d$ items, and then runs the \textsc{Greedy} algorithm on each of the knapsacks. 
At the end, the algorithm outputs the best solution among all knapsacks, so that the total runtime is $\Omega(Kn^{d+1})$. 
In fact, \partenum is only feasible for $d=1$ and our smallest dataset. 
\item \textbf{\textsc{Greedy}}. This algorithm starts with an empty knapsack and repeatedly adds the item with the highest marginal density with respect to the collected items in the knapsack, until no more item can be added to the knapsack. 
\item \textbf{\textsc{GreedyOrMax}}~\cite{KhullerMN99}. This algorithm compares the value of the best item with the value of the output of the \textsc{Greedy} algorithm and outputs the better of the two.
\end{compactenum}
\else
We compare $\gmax$ to the offline baselines \greedy, \gormax~\cite{KhullerMN99}, as well as \partenum~\citep{KhullerMN99,Sviridenko04}, which enumerates knapsacks containing all subsets of $d$ items, and then runs the \greedy algorithm upon each of those knapsacks. 
\partenum uses $\Omega(Kn^{d+1})$ runtime and queries, so only $d=1$ and the smallest \texttt{ego-Facebook} dataset are feasible in our experiments. 
\fi
In streaming we compare \smax to \sieve~\cite{BadanidiyuryMKK14} and \sormax~\cite{HuangKY17}, which are similar thresholding-based algorithms.  
\ifarxiv
\sieve starts with an empty knapsack and collects all items whose marginal density with respect to the items in the knapsack exceed a given threshold (which is initially equal to $\frac{1}{2}$), while \sormax uses a similar approach, but compares the items collected by the thresholding algorithm to the best single item, and outputs the better of the two solutions. 
\fi
We also implemented a single-pass \branching by~\cite{HuangKY17} that uses thresholding along with multiple branches and gives a $\nicefrac{4}{11}\approx0.36$-approximation. 
We did not implement~\cite{HuangK18} as their algorithms are orders of magnitude slower than \branching which is already several orders of magnitude slower than other algorithms.

Our code is available at \url{https://github.com/aistats20submodular/aistats20submodular}. 
\ifarxiv
\subsection{Objectives and Datasets}
\else
We used two types of datasets:
\fi

\textbf{Graph coverage.}
For a graph $G(V, E)$ and $Z\subset V$, the objective is to maximize the neighborhood vertex coverage function $f(Z) :=|Z\cup N(Z)|/|V|$, where $N(Z)$ is the set of neighbors of $Z$. 
\ifarxiv
The cost of each node is roughly proportional to the value of the node. Specifically, the cost of each node $v\in V$ is $c(v) = \frac \beta {|V|} (|N(v)| - \alpha)$, where $\alpha = \frac 1 {20}$ and $\beta$ is a normalizing factor so that $c(v) \ge 1$, so that the cost of each node is roughly proportional to the value of the node.
\fi
We ran experiments on two graphs from SNAP~\cite{snapnets}:
\ifarxiv
1) \texttt{ego-Facebook}(4K vertices, 81K edges), 2) \texttt{com-DBLP} (317K vertices, 1M edges).
\else
\texttt{ego-Facebook} and \texttt{com-DBLP}.
\fi

\textbf{Movie ratings.}
We also analyze a dataset of movies to model the scenario of movie recommendation. 
The objective function, defined as in~\cite{AvdiukhinMYZ19}, is maximized for a set of movies that is similar to a user's interests\arxiv{ and the cost of a movie is set to be roughly proportional to its value}.
\ifarxiv
Each movie is assigned a rating in the range $[1,5]$ by users. 
Let $r_{x,u}$ be the rating assigned by user $u$ to movie $x$ and $r_{avg}$ be the average rating across all movies. 
For each movie $x$, we normalize the ratings to produce a vector $v_x$ by setting $v_{x,u}=0$ if user $u$ did not rate movie $x$ and $v_{x, u} = r_{x, u} - r_{avg}$ otherwise. 
We then define the similarity between two movies $x_1$ and $x_2$ as the dot product $\langle v_{x_1}, v_{x_2} \rangle$ of their vectors. 
Given a set $X$ of movies, to quantify how representative a subset of movies $Z$ is, we consider a parameterized objective function $f_X(Z) = \sum_{x \in X} \max_{z \in Z} \langle v_z, v_x \rangle$. 
Hence, the maximizer of $f_X(Z)$ corresponds to a set of movies that is similar to the user's interests. 
\else
For more details about the settings, see the supplementary material. 
\fi
We analyze the \texttt{ml-20} MovieLens dataset~\cite{movielens}\arxiv{, which contains approximately $27K$ movies and $20M$ ratings}.

\ifarxiv
\subsection{Results}
\else
\paragraph{Approximation and runtimes.} 
\fi
We first give instance-specific approximation factors for different values of $K$  for offline (Fig.~\ref{fig:offline_opt_approx}) and streaming (Fig.~\ref{fig:streaming_opt_approx}) algorithms.
These approximations are computed using upper bounds on $f(\opt)$ which can be obtained using the analysis of \greedy. 
\gmax and \smax typically perform at least $20\%$ better than their $\nicefrac{1}{2}$ worst-case guarantees. 
\ifarxiv
In fact, our results show that the output value can be improved by up to $50\%$, both by \gmax upon \greedy (Figure~\ref{fig:offline_obj}) and by \smax upon \sieve (Figure~\ref{fig:streaming_obj}).
\else
In the supplementary material we show that the value can be improved by up to $50\%$, both by \gmax upon \greedy and by \smax upon \sieve. 
\fi
\begin{figure*}[!ht]
	\centering
	\subfloat[\texttt{com-dblp}]{\includegraphics[width=0.32\textwidth]{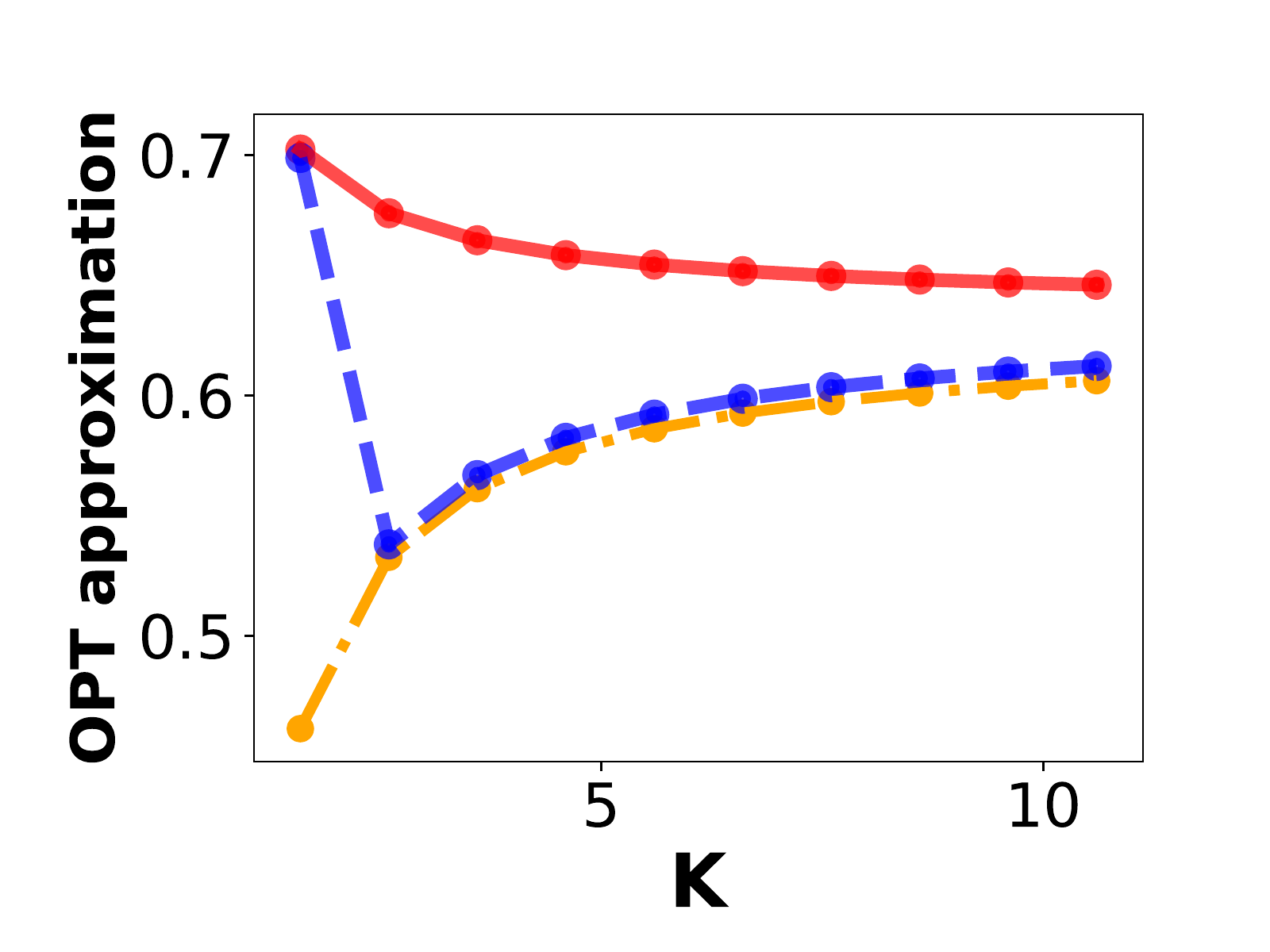}}
	\subfloat[\texttt{ego-Facebook}]{\includegraphics[width=0.32\textwidth]{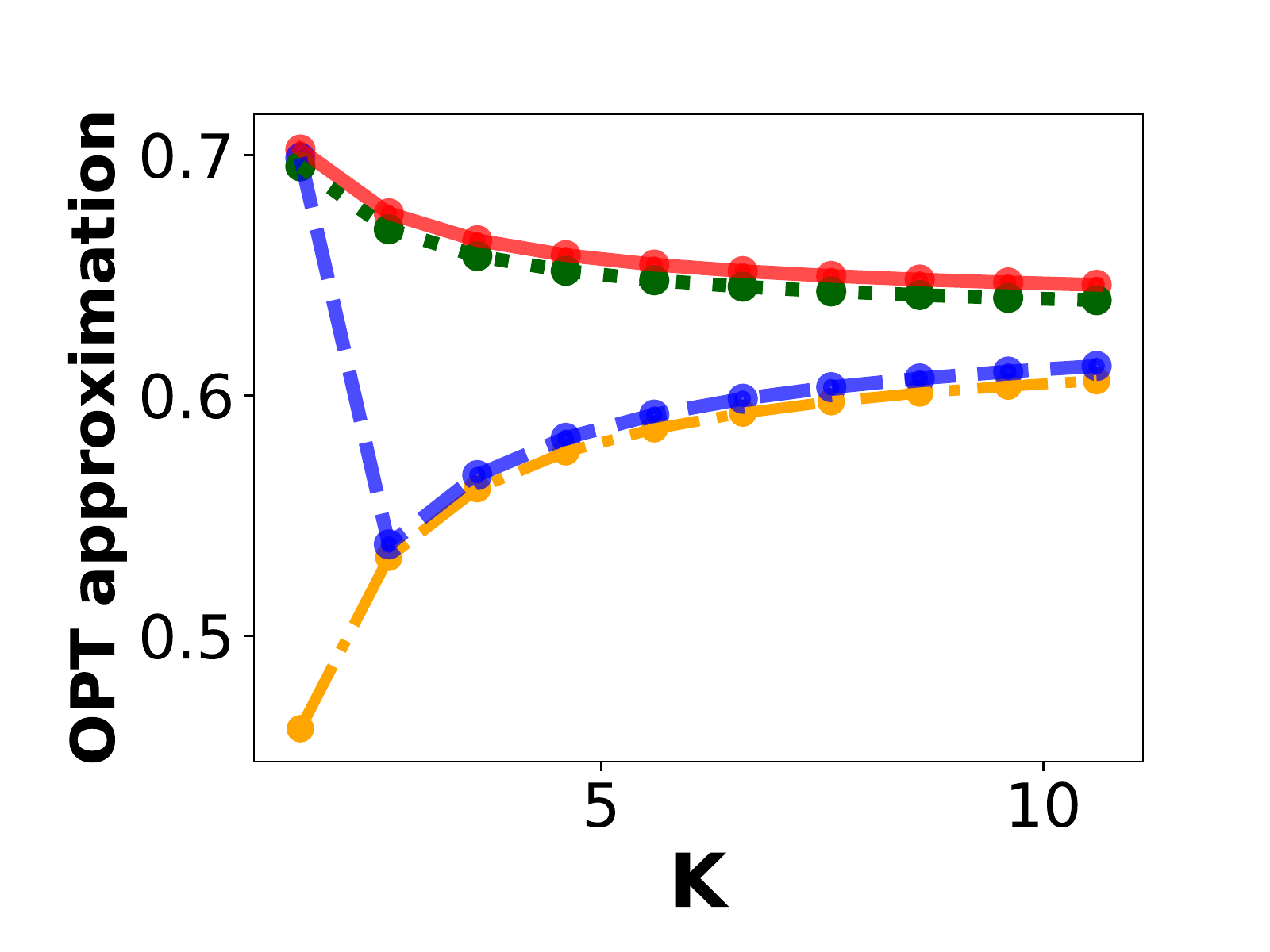}}
	\subfloat[\texttt{ml-20}]{\includegraphics[width=0.32\textwidth]{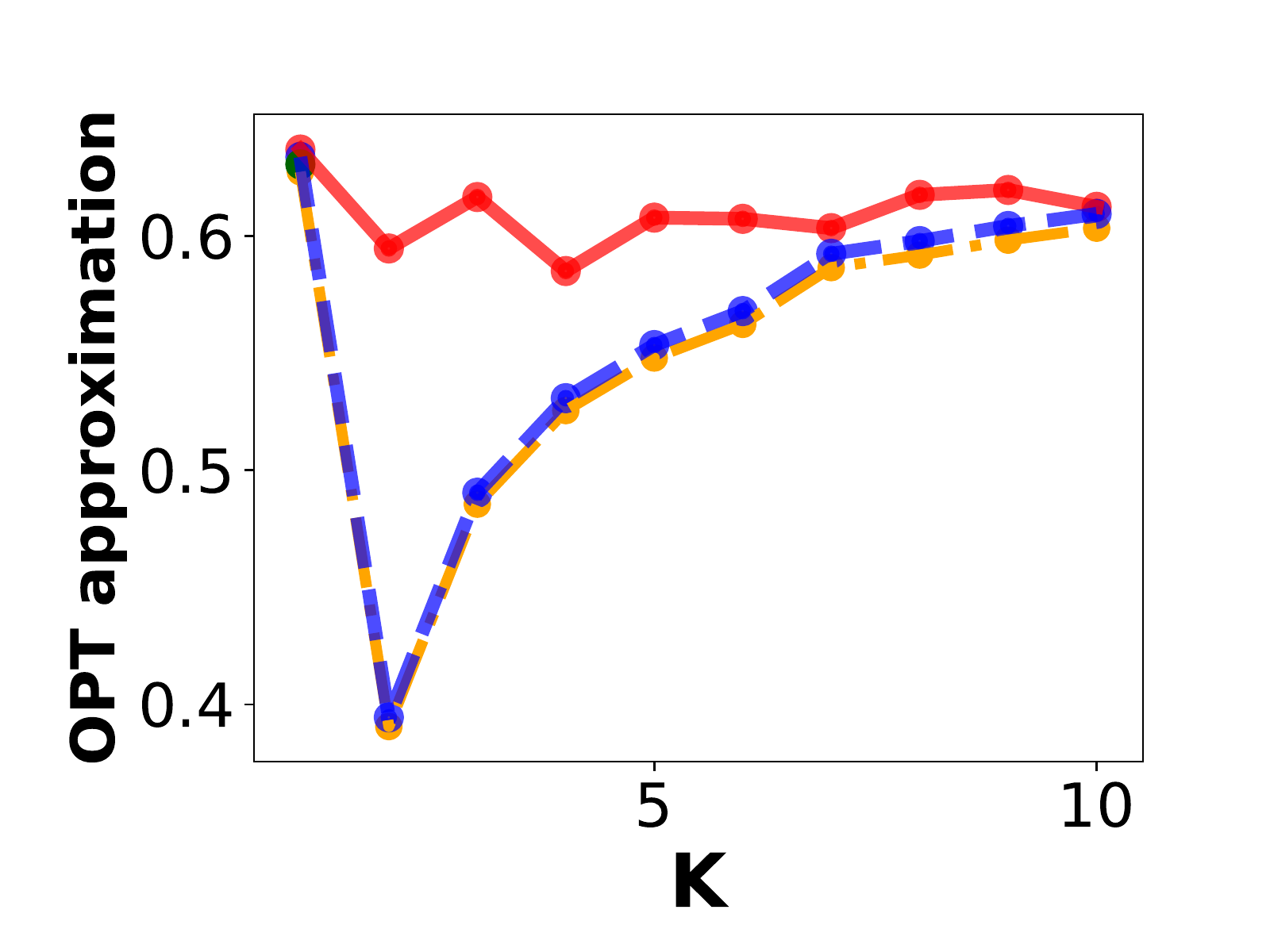}}\\
	\vspace{-0.3cm}
	\subfloat{\includegraphics[width=\textwidth]{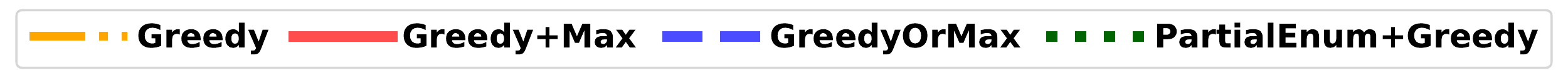}}
	\caption{Instance-specific approximations for different $K$. \gmax performs substantially better than its worst-case $\nicefrac{1}{2}$-approximation guarantee and typically beats even the $(1 - \nicefrac1e)\approx0.63$ bound. Despite much higher runtime, \partenum does not beat \gmax even on the only dataset where its runtime is feasible (\texttt{ego-Facebook}).}
	\label{fig:offline_opt_approx}
\end{figure*}
\begin{figure*}[!ht]
	\centering
	\subfloat[\texttt{com-dblp}]{\includegraphics[width=0.32\textwidth]{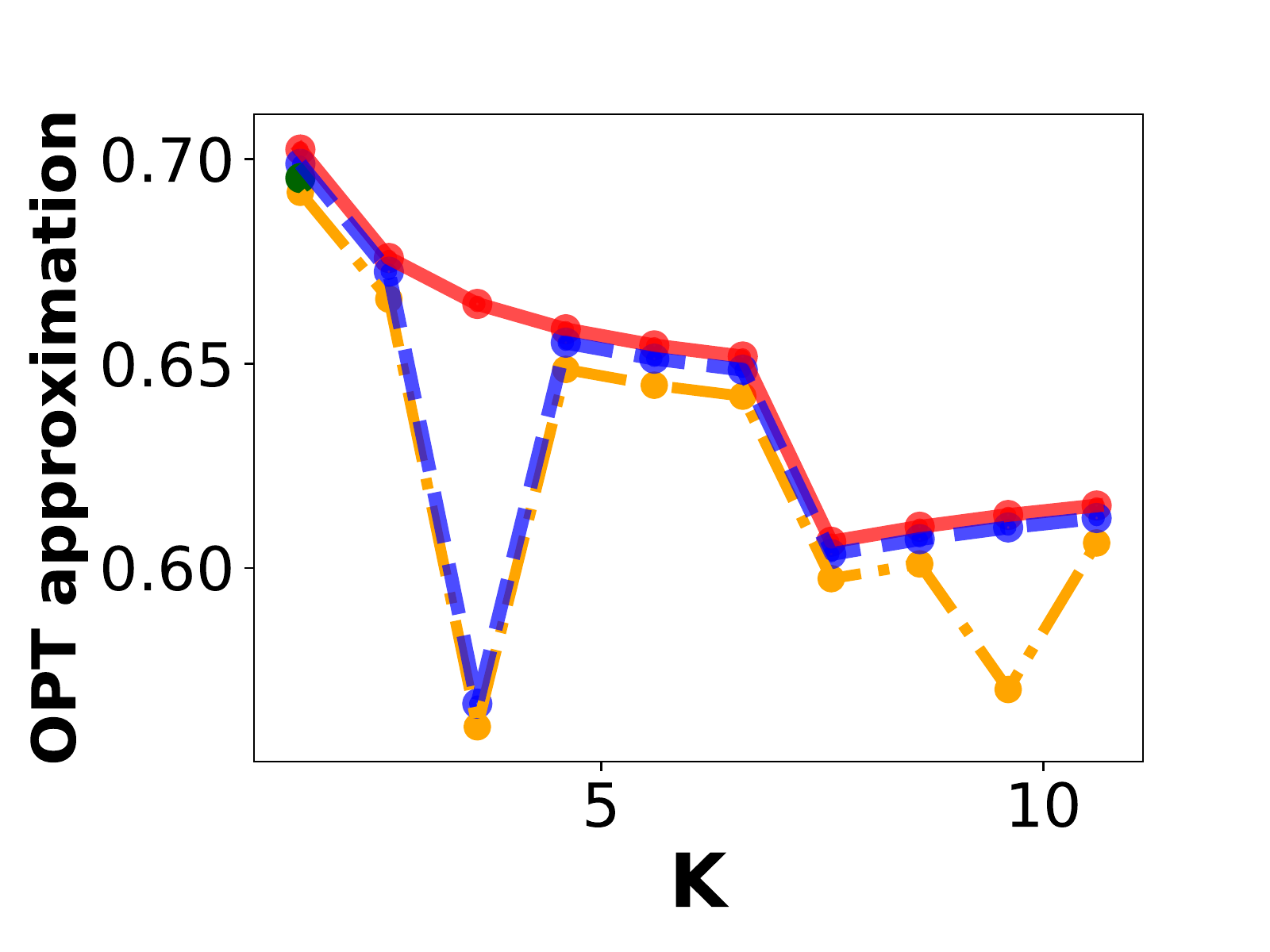}}
	\subfloat[\texttt{ego-Facebook}]{\includegraphics[width=0.32\textwidth]{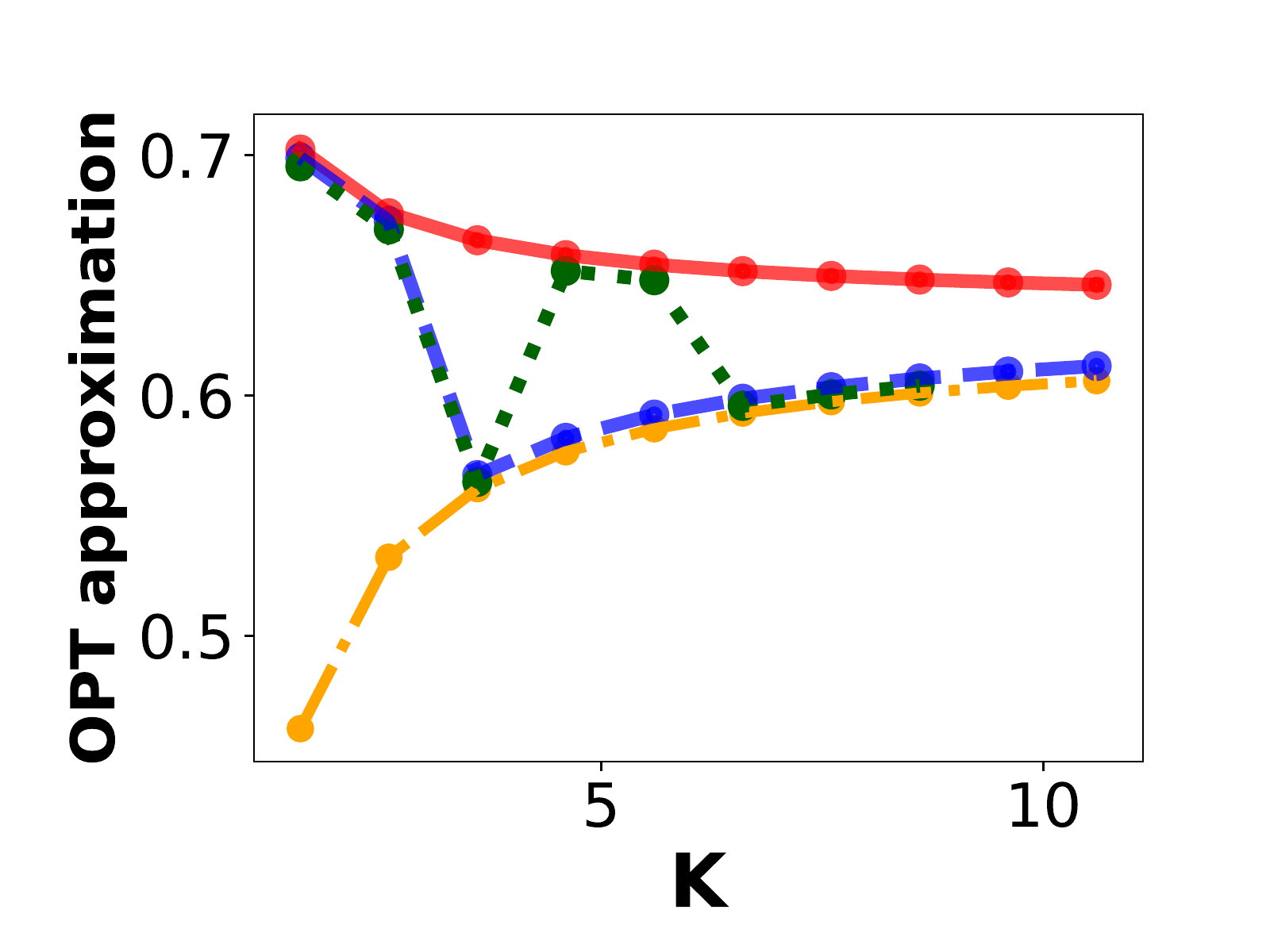}}
	\subfloat[\texttt{ml-20}]{\includegraphics[width=0.32\textwidth]{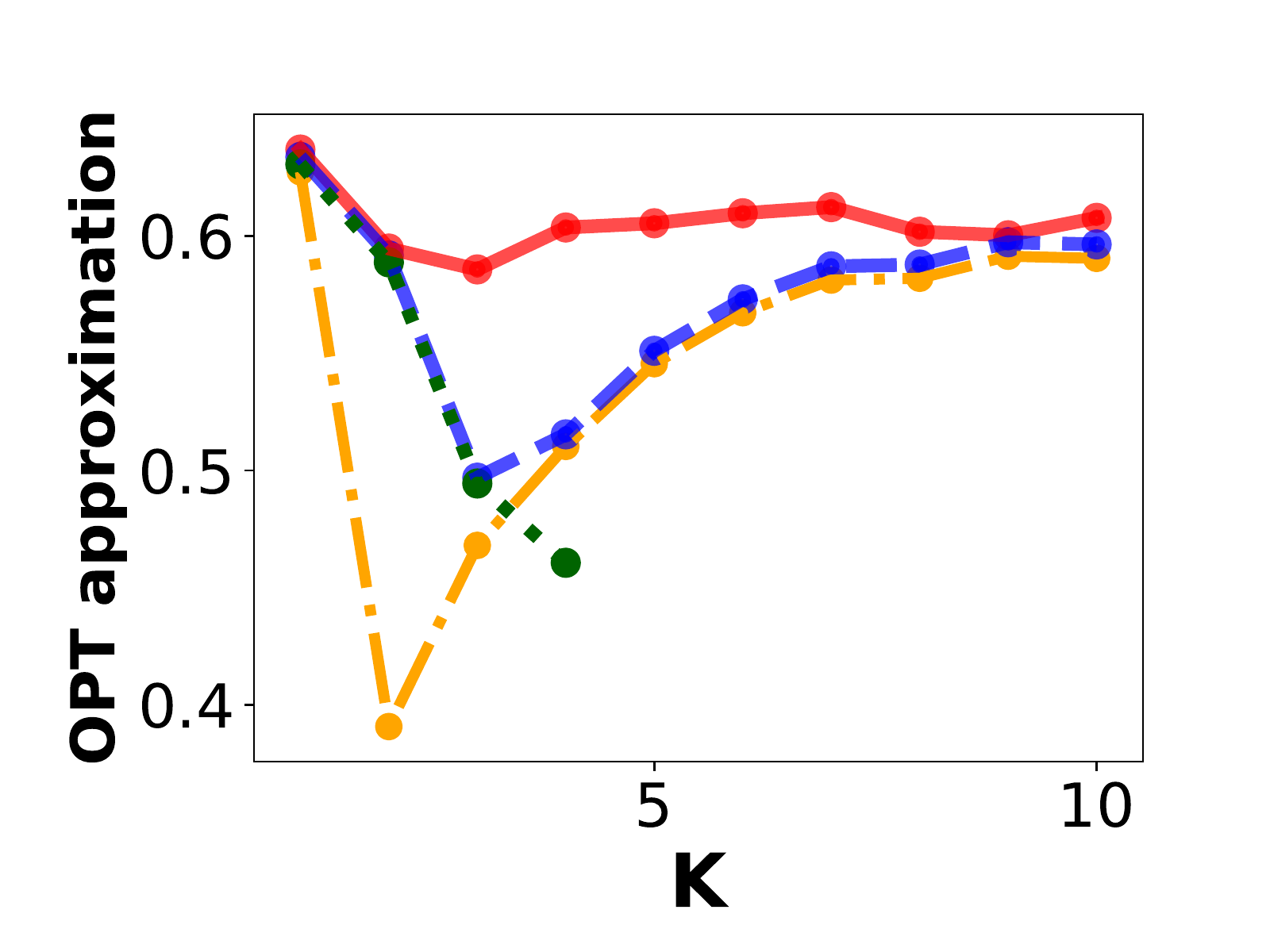}}\\
	\vspace{-0.3cm}
	\subfloat{\includegraphics[width=\textwidth]{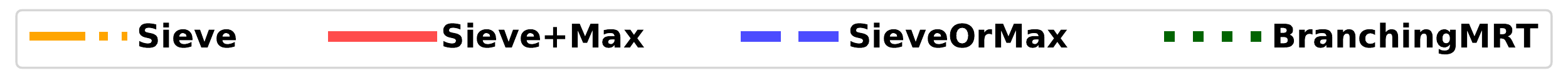}}
	\caption{Instance-specific approximations for different $K$. \smax performs substantially better than its worst-case $(\nicefrac{1}{2}-\eps)$-approximation guarantee and robustly dominates all other approaches. It can improve by up to $40\%$ upon \sieve. Despite much higher runtime, \branching does not beat \smax \arxiv{(some data points not shown for \branching as it did not terminate under a $200$-second time limit).}
	}
	\label{fig:streaming_opt_approx}
\end{figure*}

\ifarxiv	
\paragraph{Running time.} We point out that the runtimes of \gmax and \gormax algorithms are similar, being at most $20\%$ greater than the runtime of \greedy, as shown in Figure~\ref{fig:offline_times}. 
On the other hand, even though \partenum does not outperform \gmax, it is only feasible for $d=1$ and the \texttt{ego-Facebook} dataset and uses on average almost $500$ times as much runtime for $K=10$ across ten iterations of each algorithm, as shown in Figure~\ref{fig:offline_times}. 
The runtimes of \smax, \sormax, and \sieve are generally similar; however in the case of the \texttt{com-dbpl} dataset, the runtime of \smax grows with $K$.
This can be explained by the fact that oracle calls on larger sets typically require more time, and augmented sets typically contain more elements than sets encountered during execution of \sieve.
On the other hand, the runtime of \branching was substantially slower, and we did not include its runtime for scaling purposes, as for $K=5$, the runtime of \branching was already a factor 80K more than \sieve. 
Error bars for the standard deviations of the runtimes of the streaming algorithms are given in Figure~\ref{fig:streaming_times_bars}. 
\begin{figure*}[!ht]
	\centering
	\subfloat[\texttt{com-dblp}]{\includegraphics[width=0.32\textwidth]{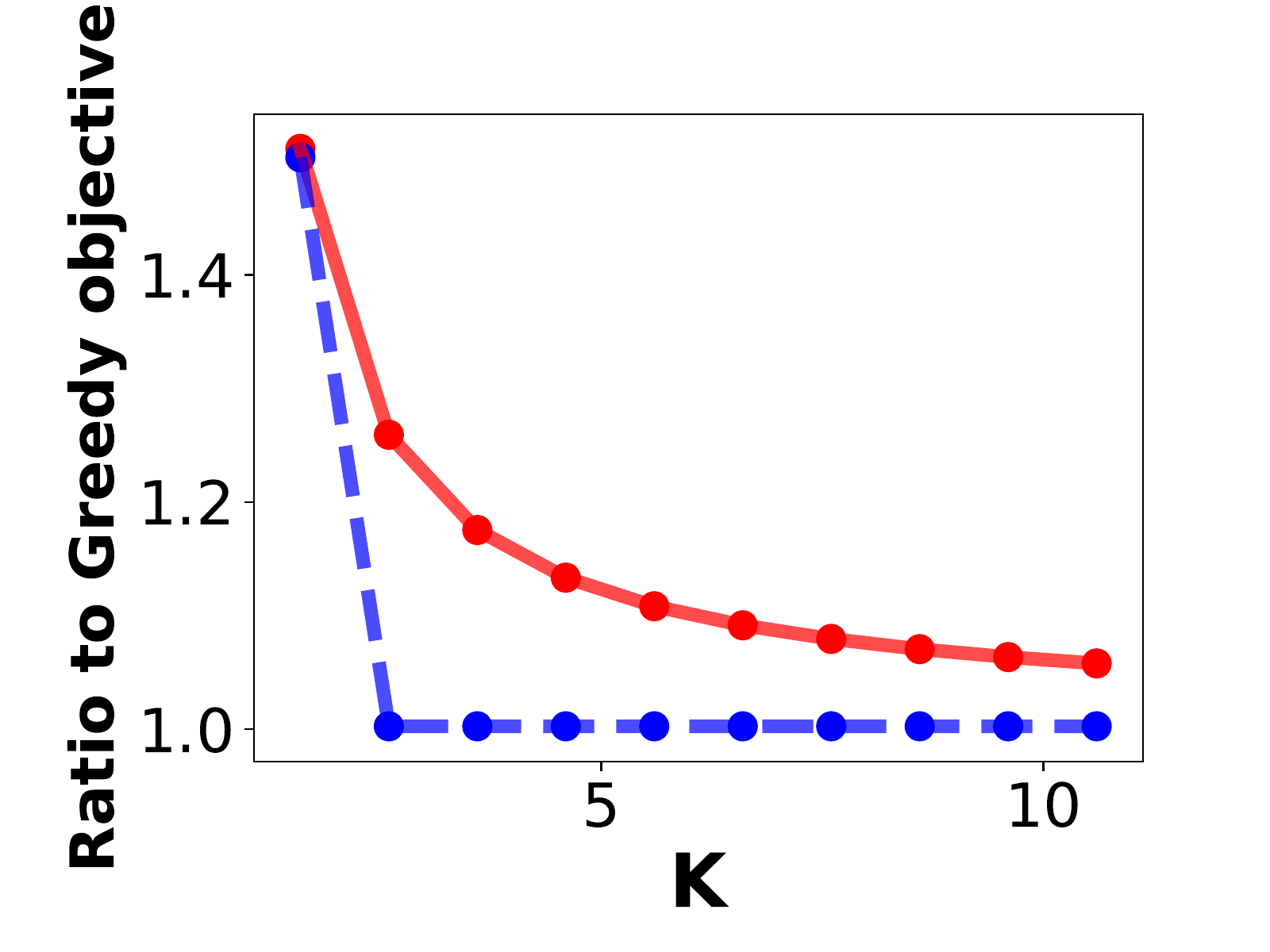}}
	\subfloat[\texttt{ego-Facebook}]{\includegraphics[width=0.32\textwidth]{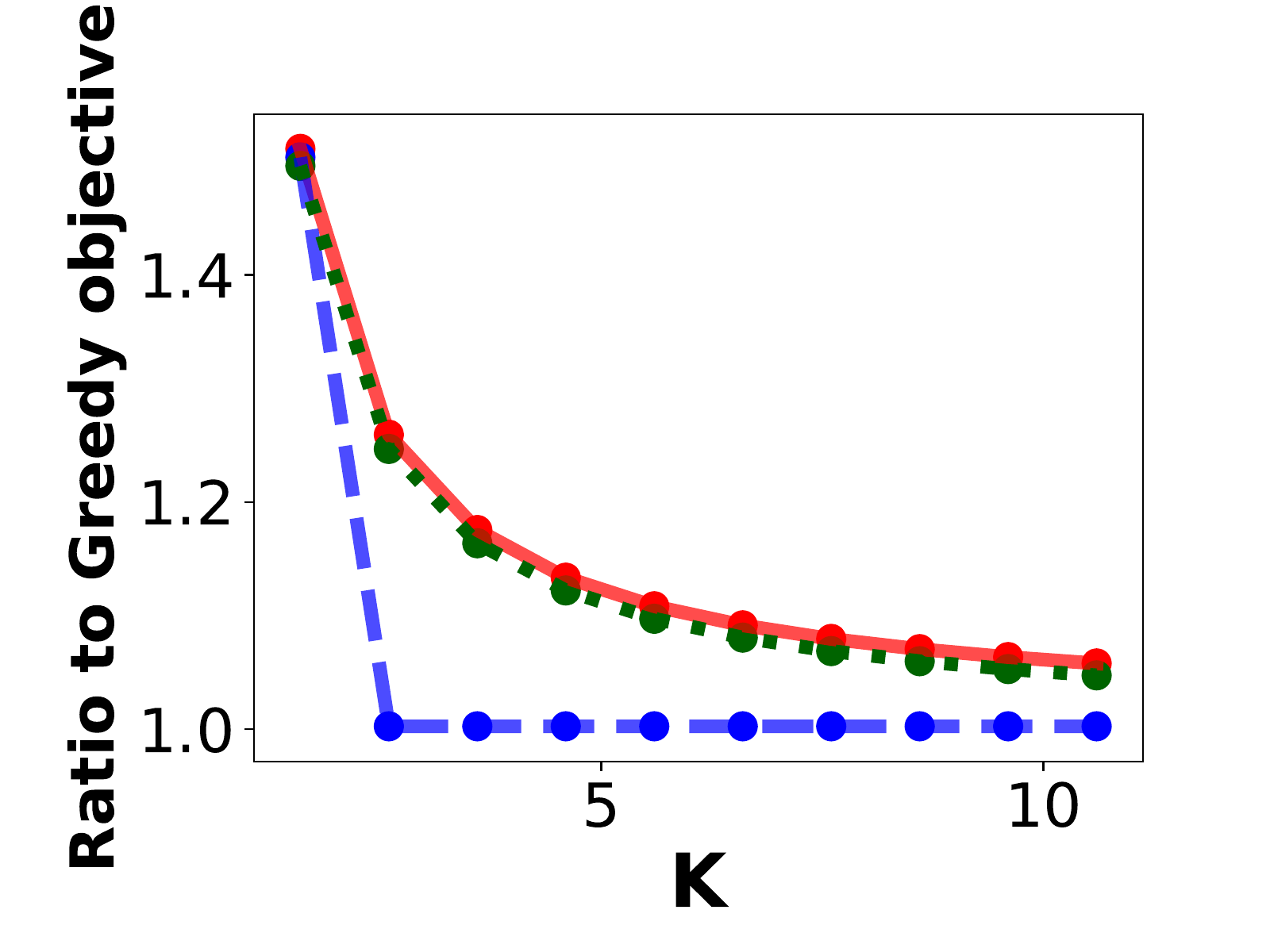}}
	\subfloat[\texttt{ml-20}]{\includegraphics[width=0.32\textwidth]{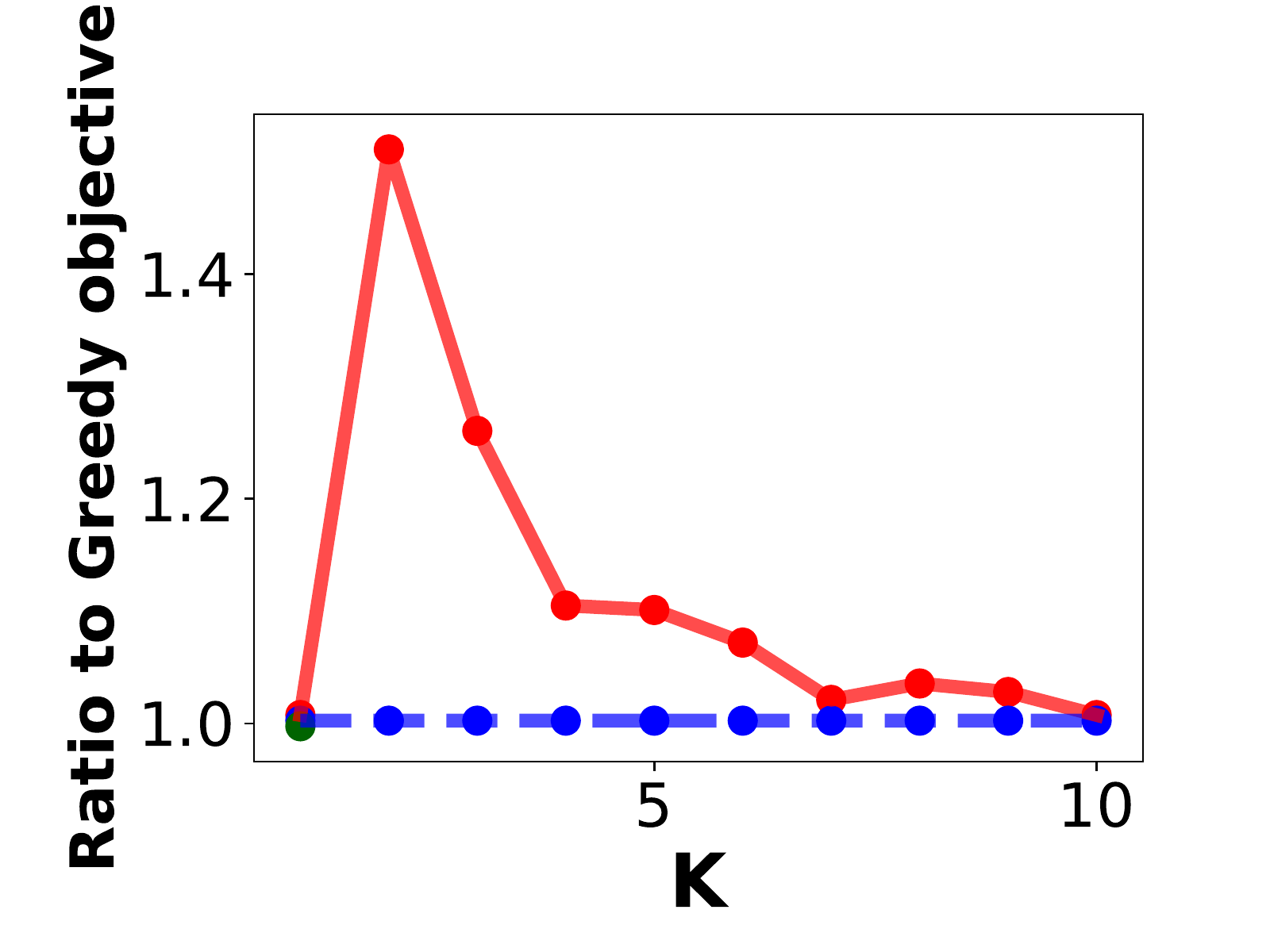}}\\
	\vspace{-0.3cm}
	\subfloat{\includegraphics[width=\textwidth]{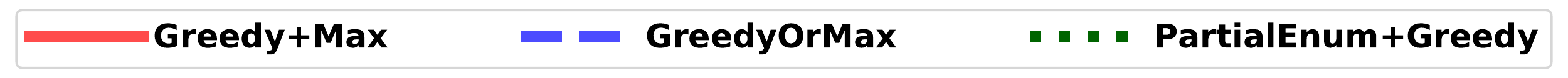}}
	\caption{Ratio of the objective of offline algorithms to the objective of \greedy for different values of $K$. \gmax can improve by almost $50\%$ upon \greedy, but by definition, \gmax and \gormax cannot perform worse than \greedy. Despite its runtime, \partenum does not outperform \gmax on the \texttt{ego-Facebook} dataset.}
	\label{fig:offline_obj}
\end{figure*}
\begin{figure*}[!ht]
	\centering
	\subfloat[\texttt{com-dblp}]{\includegraphics[width=0.32\textwidth]{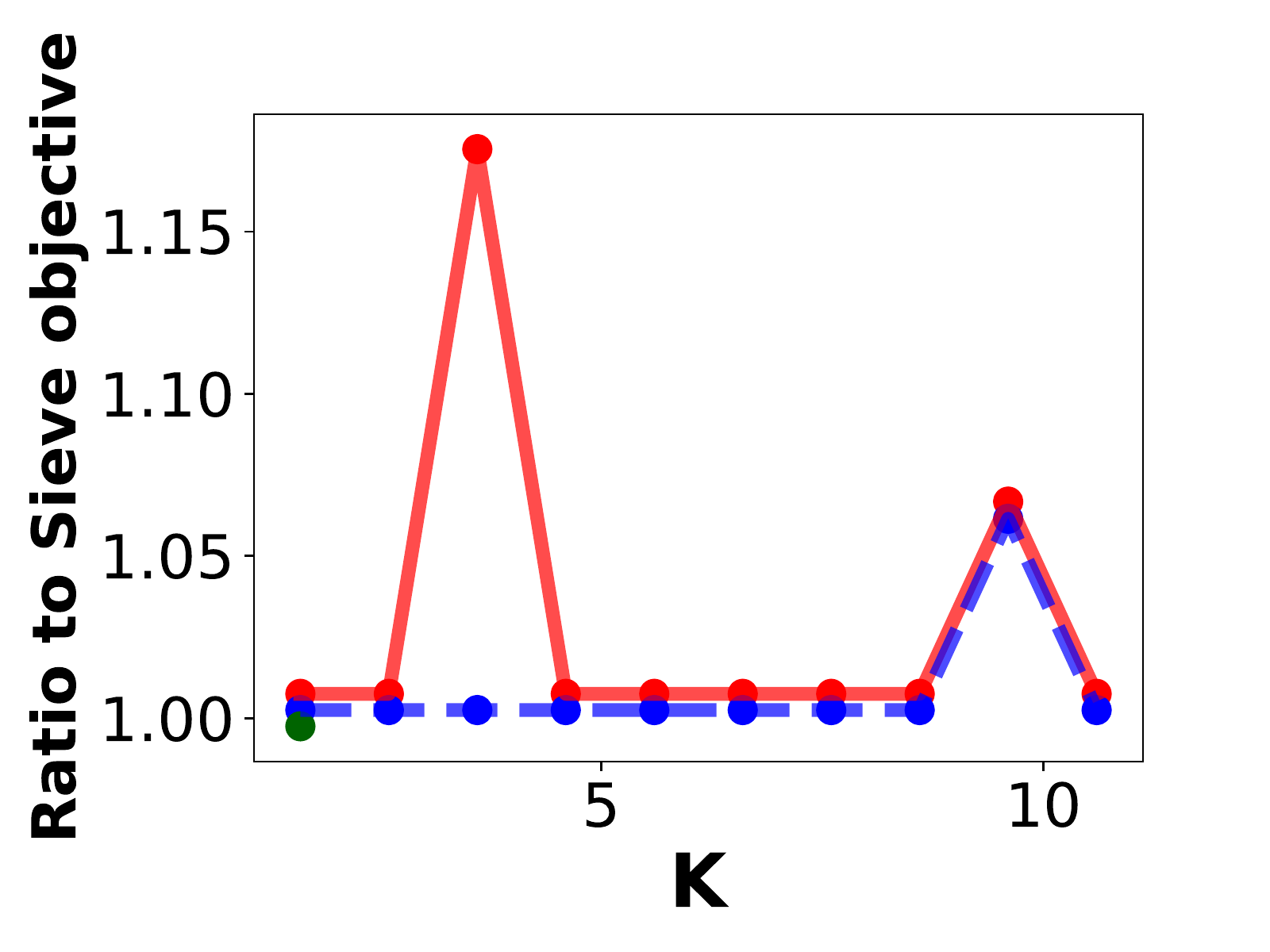}}
	\subfloat[\texttt{ego-Facebook}]{\includegraphics[width=0.32\textwidth]{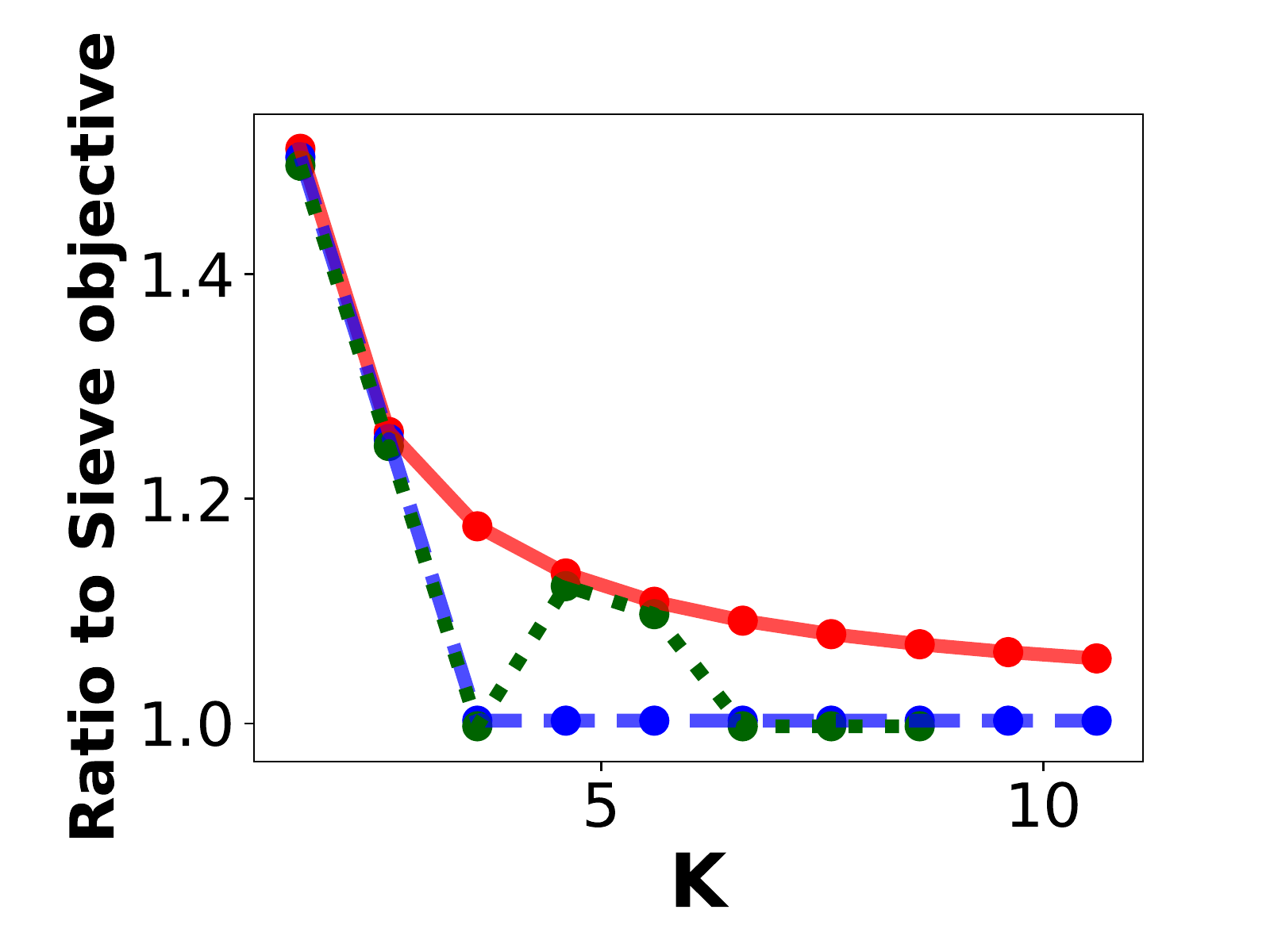}}
	\subfloat[\texttt{ml-20}]{\includegraphics[width=0.32\textwidth]{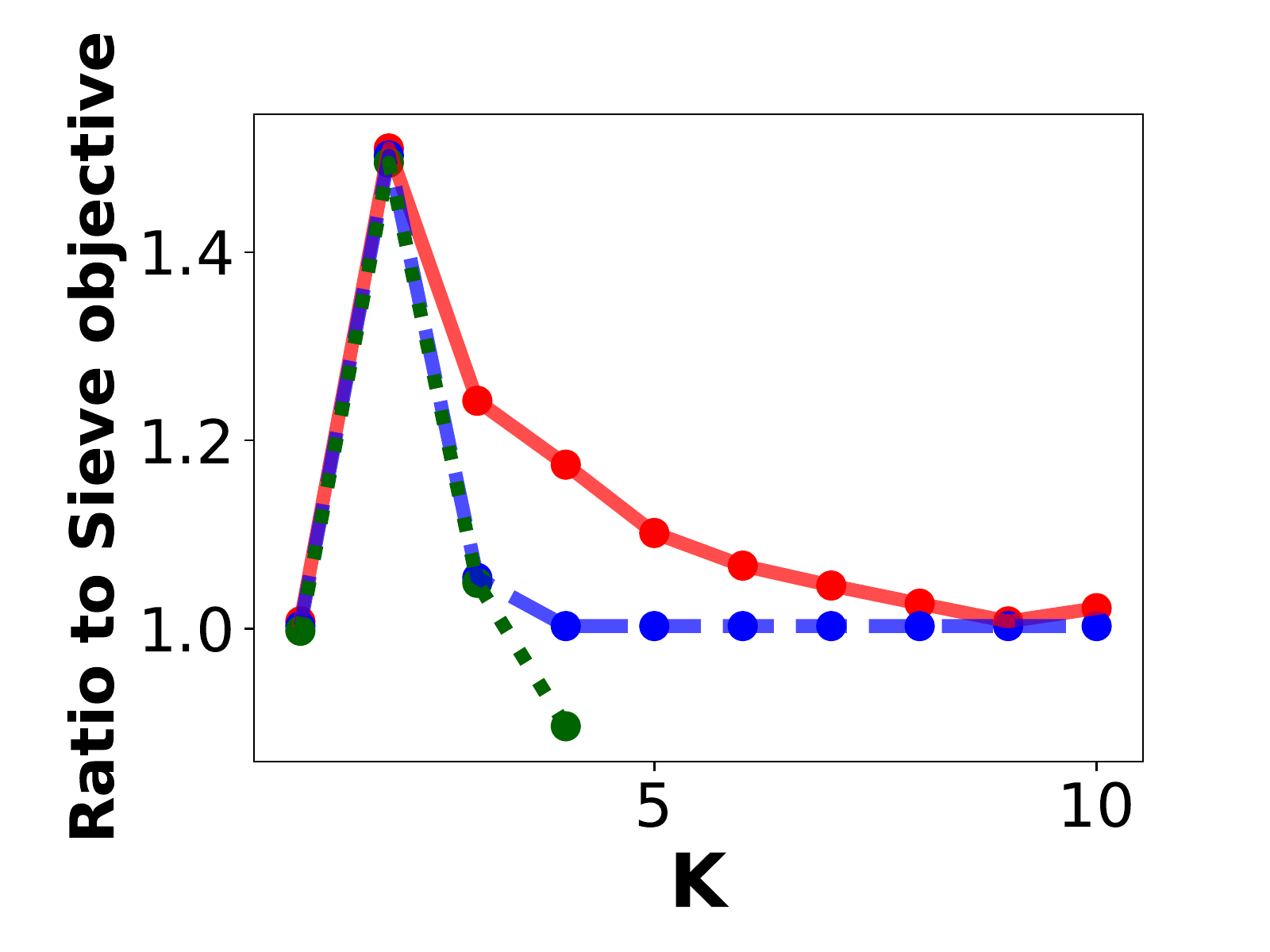}}\\
	\vspace{-0.3cm}
	\subfloat{\includegraphics[width=\textwidth]{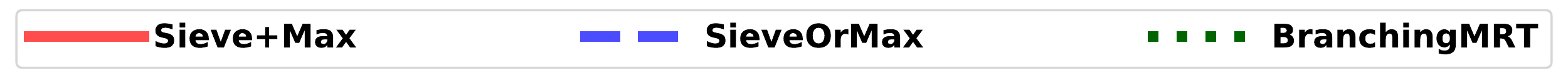}}
	\caption{Ratio of the objective of streaming algorithms to the objective of \sieve for different values of $K$. \smax can improve by almost $40\%$ upon \sieve, but by definition, \smax and \sormax cannot perform worse than \sieve. Despite its runtime, \branching does not outperform \smax.}
	\label{fig:streaming_obj}
\end{figure*}
\begin{figure*}[!ht]
	\centering
	\subfloat[\texttt{com-dblp}]{\includegraphics[width=0.32\textwidth]{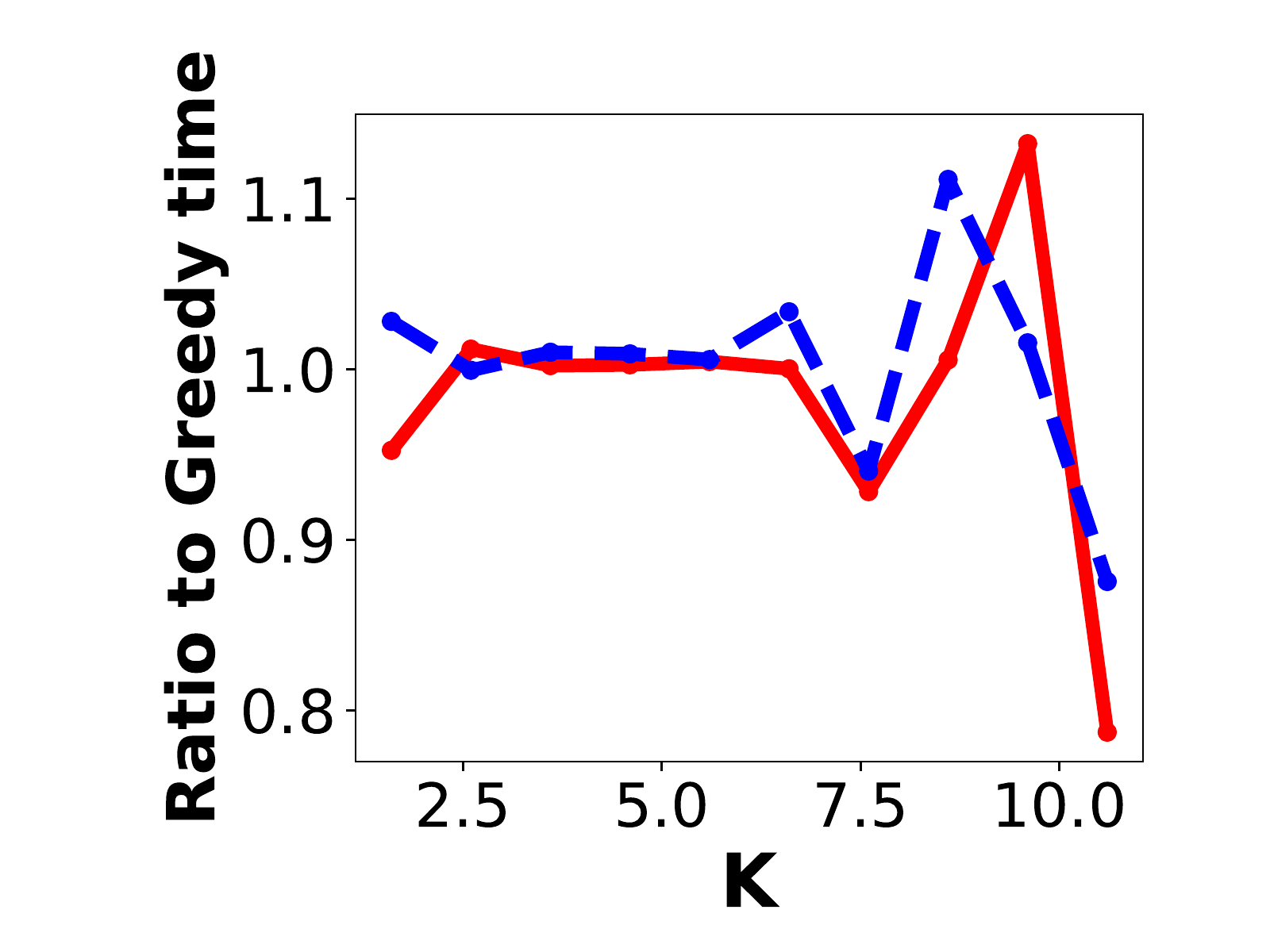}\label{fig:times_start}}
	\subfloat[\texttt{ego-Facebook}]{\includegraphics[width=0.32\textwidth]{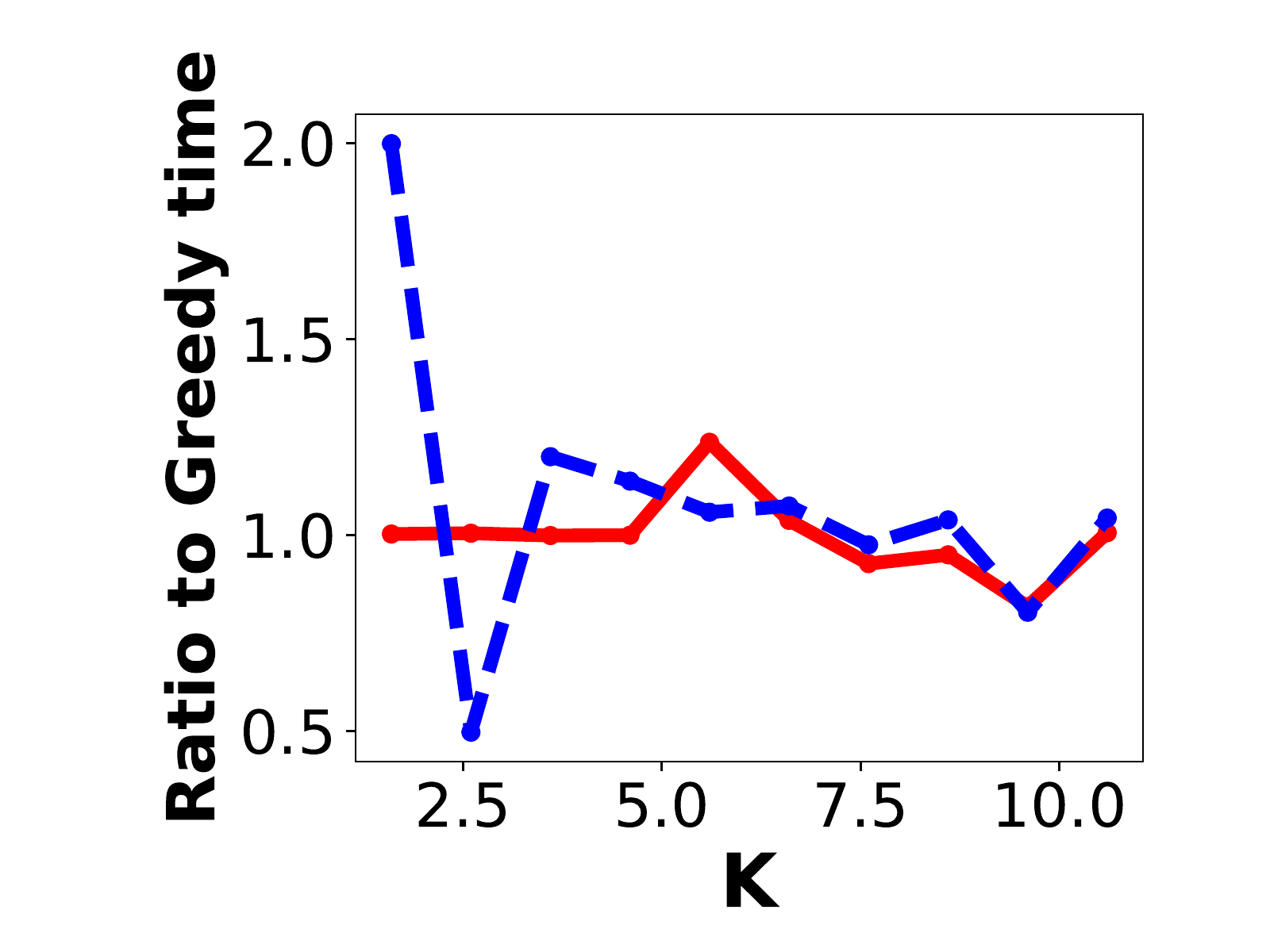}}
	\subfloat[\texttt{ml-20}]{\includegraphics[width=0.32\textwidth]{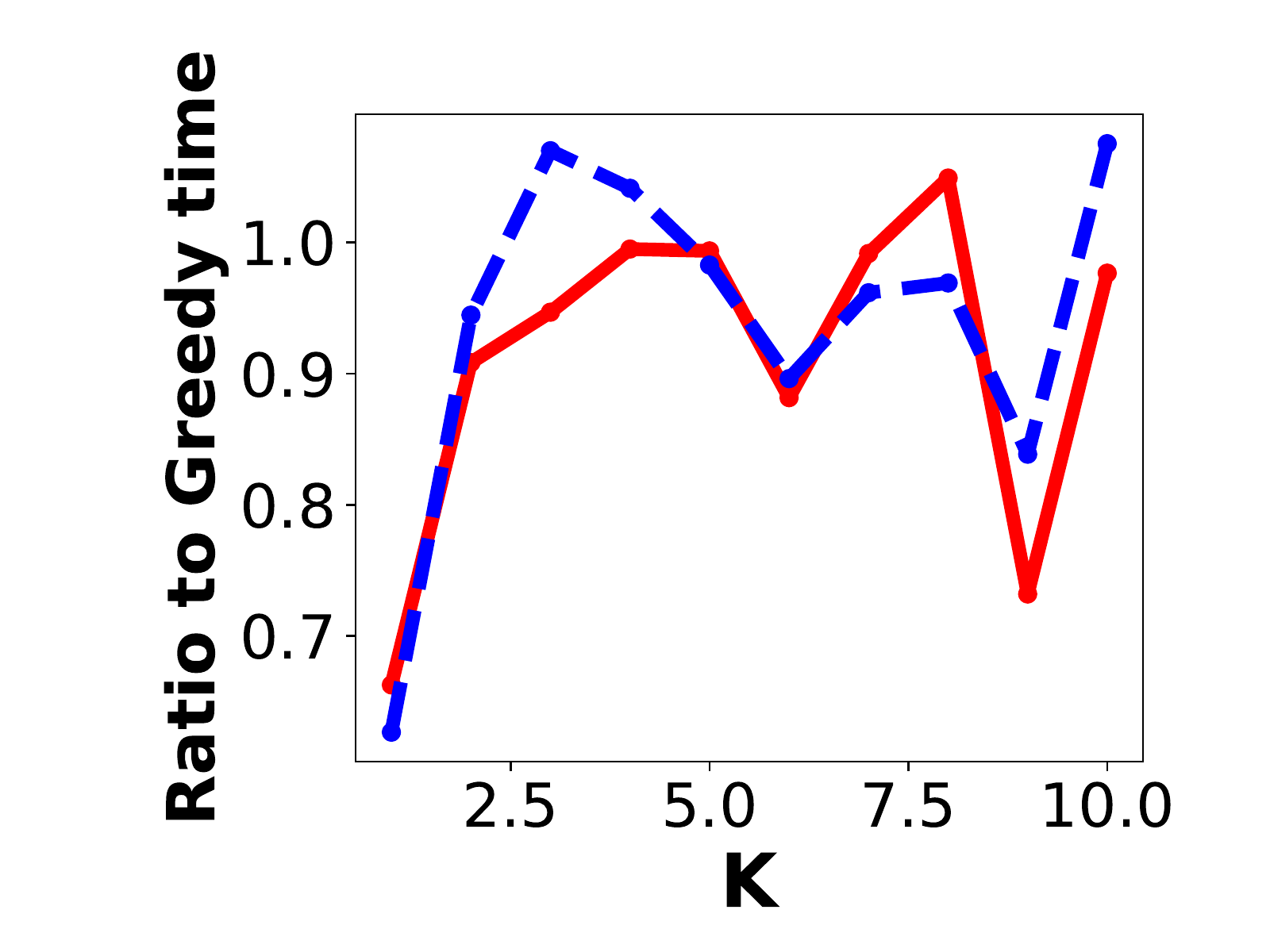}\label{fig:times_end}}\\
	\subfloat{\includegraphics[width=0.5\textwidth]{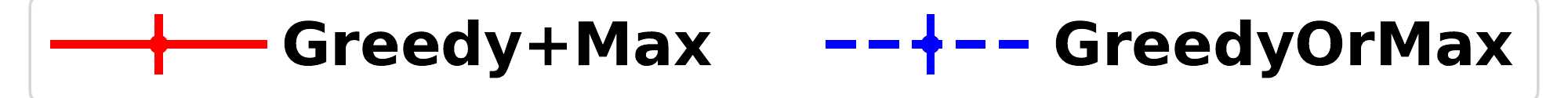}}
	\caption{Ratio of runtime of offline algorithms to the runtime of \greedy, for different values of $K$. Observe that \gmax and \gormax show similar running time, which is at most $20\%$ greater than \greedy running time. The ratio of \partenum  runtime is not displayed, due to it being several orders of magnitude larger, e.g., $1000$ times larger for $K=15$.}
	\label{fig:offline_times}
\end{figure*}
\begin{figure*}[!ht]
	\centering
	\subfloat[\texttt{com-dblp}]{\includegraphics[width=0.32\textwidth]{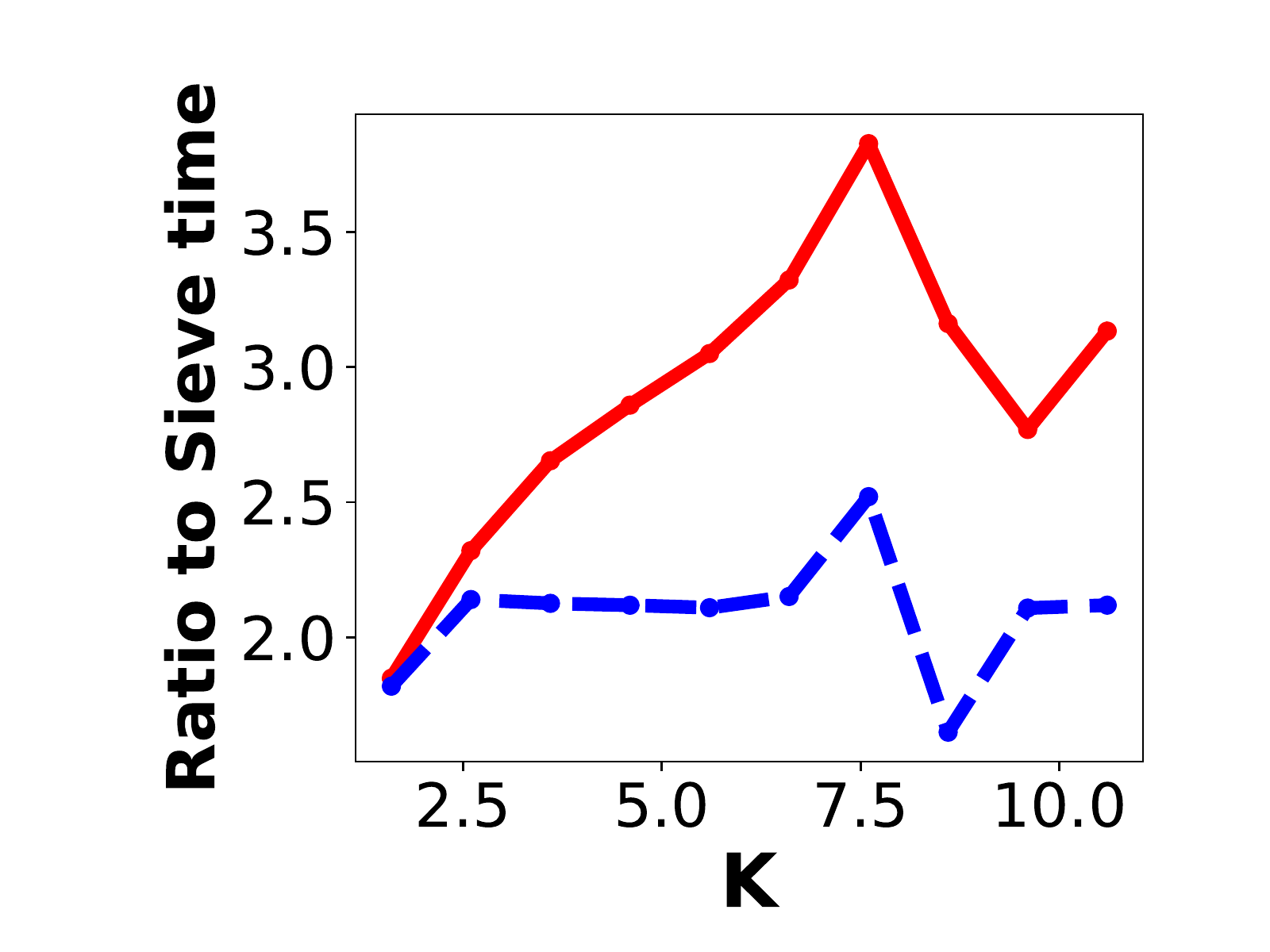}\label{fig:streaming_times_start}}
	\subfloat[\texttt{ego-Facebook}]{\includegraphics[width=0.32\textwidth]{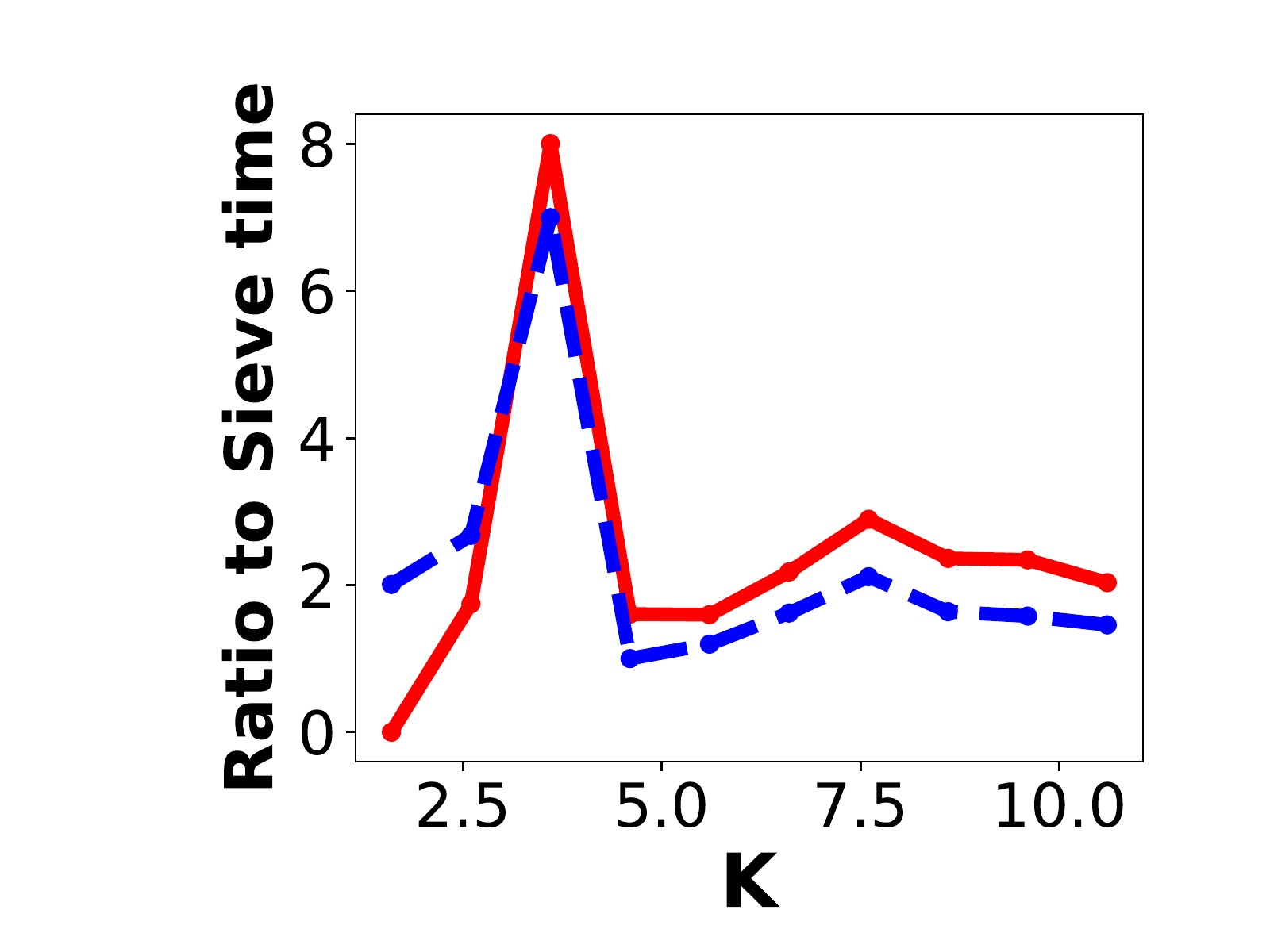}}
	\subfloat[\texttt{ml-20}]{\includegraphics[width=0.32\textwidth]{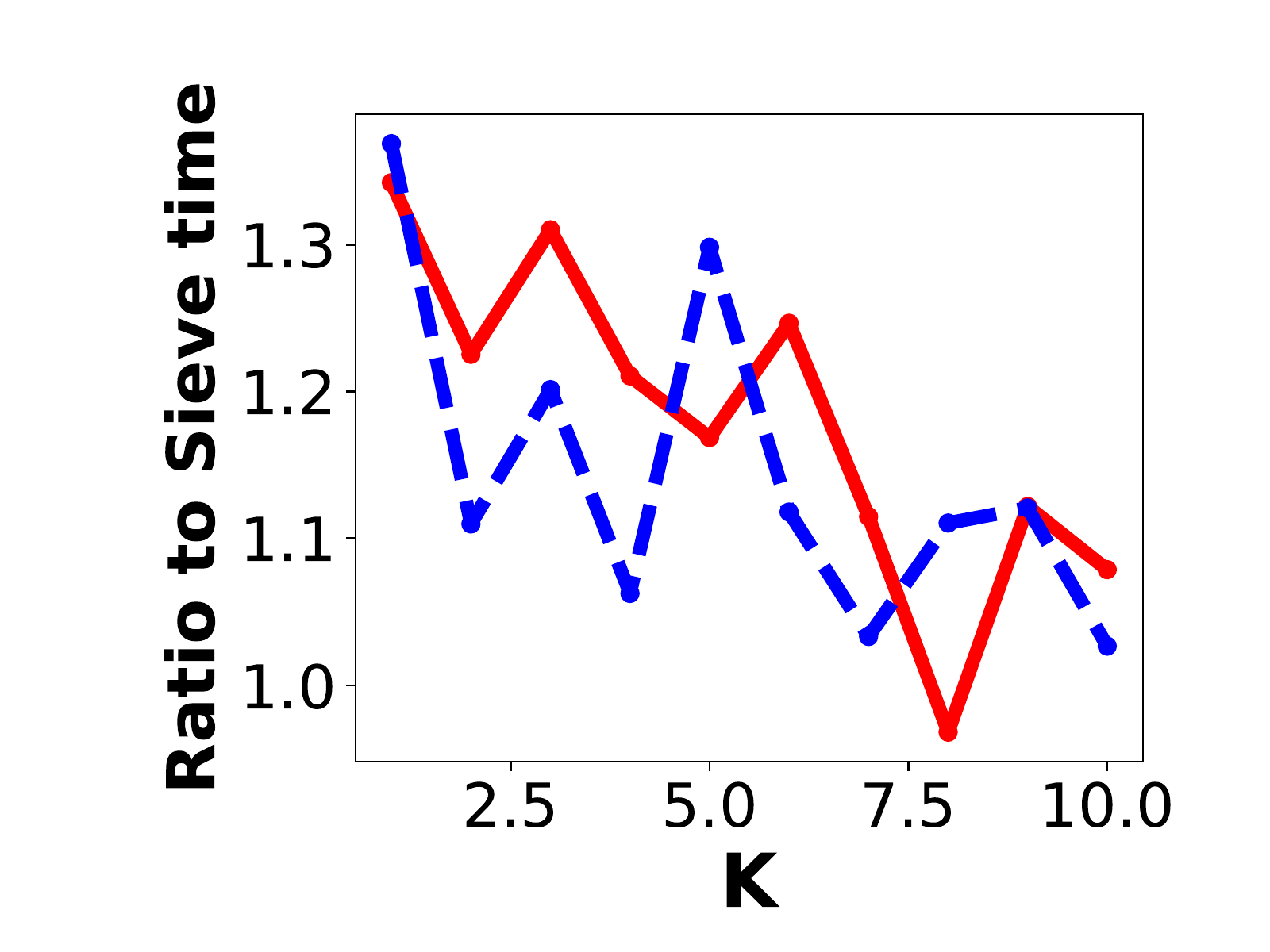}\label{fig:streaming_times_end}}\\
	\vspace{-0.3cm}
	\subfloat{\includegraphics[width=0.5\textwidth]{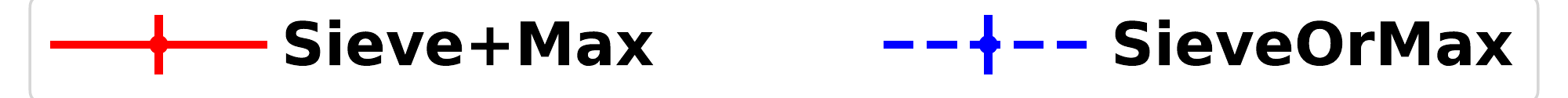}}
	\caption{Ratio of average runtime of streaming algorithms to the average runtime of \sieve for different values of $K$, across ten iterations. The larger ratios can be explained from the oracle calls made on larger sets by \smax being more expensive than the average oracle call made by \sieve. The ratio of \branching runtime is not displayed, due to being several orders of magnitude larger, e.g., 80K times larger for $K=5$.}
	\label{fig:streaming_times}
\end{figure*}
\begin{figure*}[!ht]
	\centering
	\subfloat[\texttt{com-dblp}]{\includegraphics[width=0.32\textwidth]{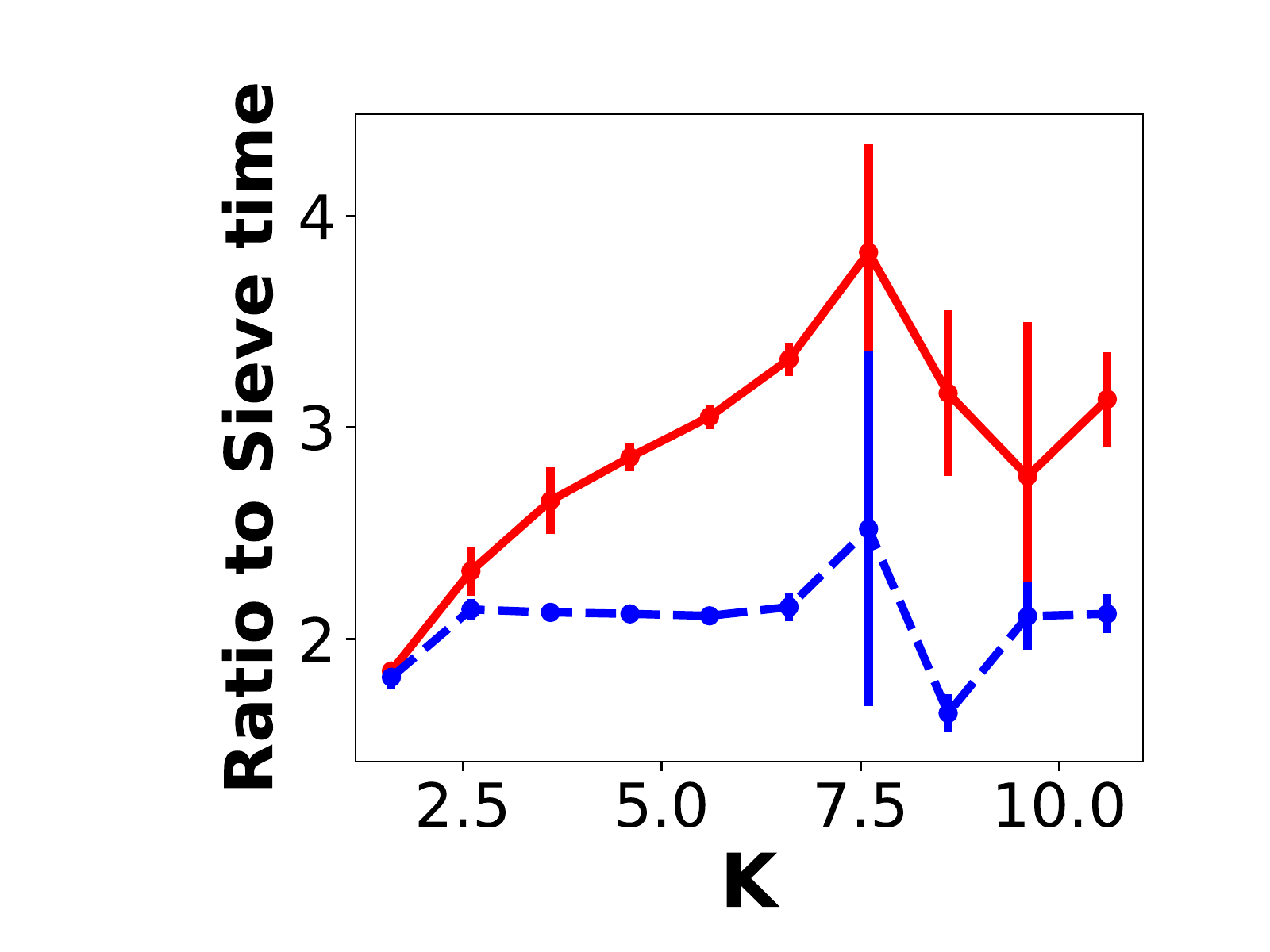}}
	\subfloat[\texttt{ego-Facebook}]{\includegraphics[width=0.32\textwidth]{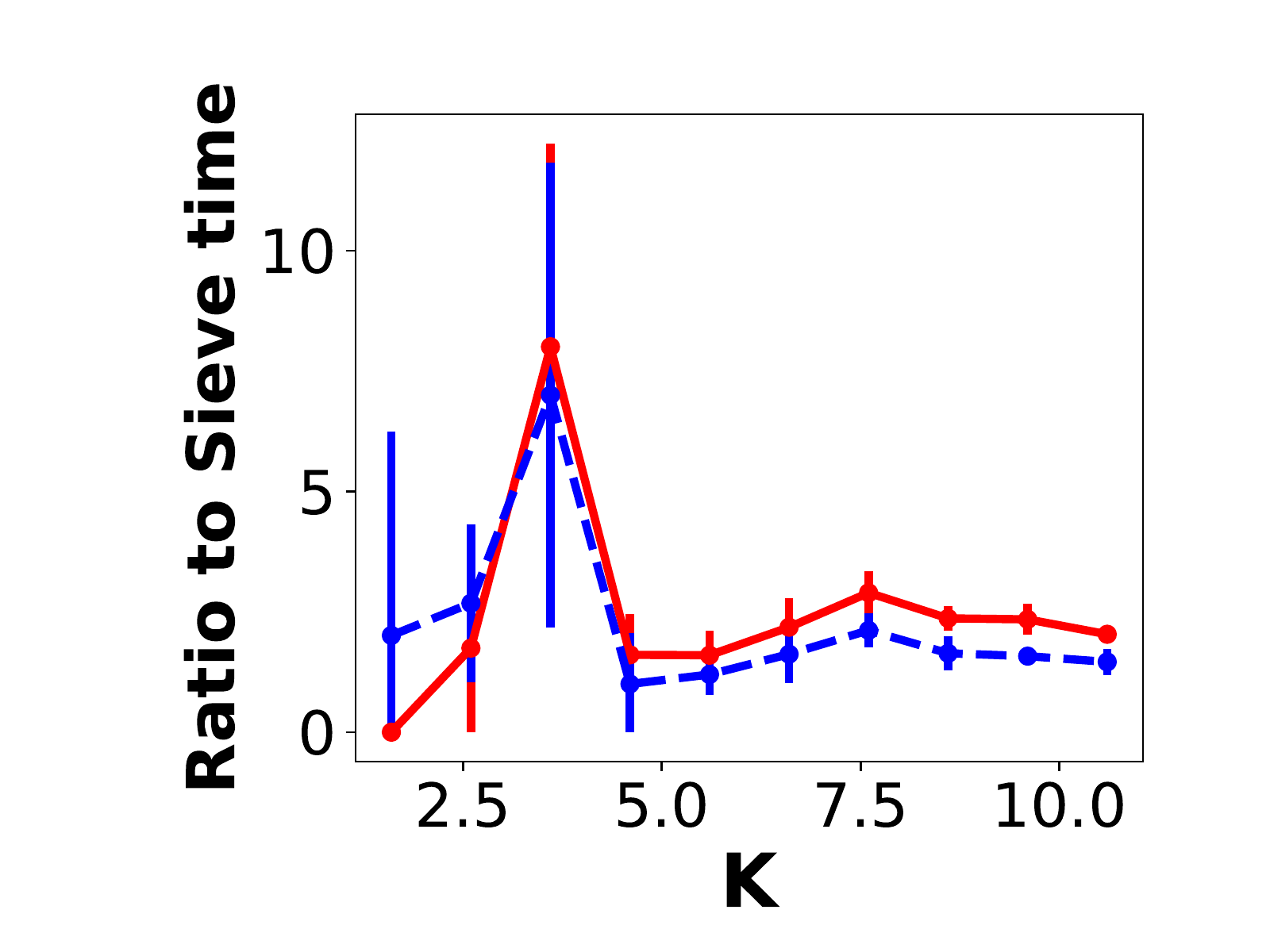}}
	\subfloat[\texttt{ml-20}]{\includegraphics[width=0.32\textwidth]{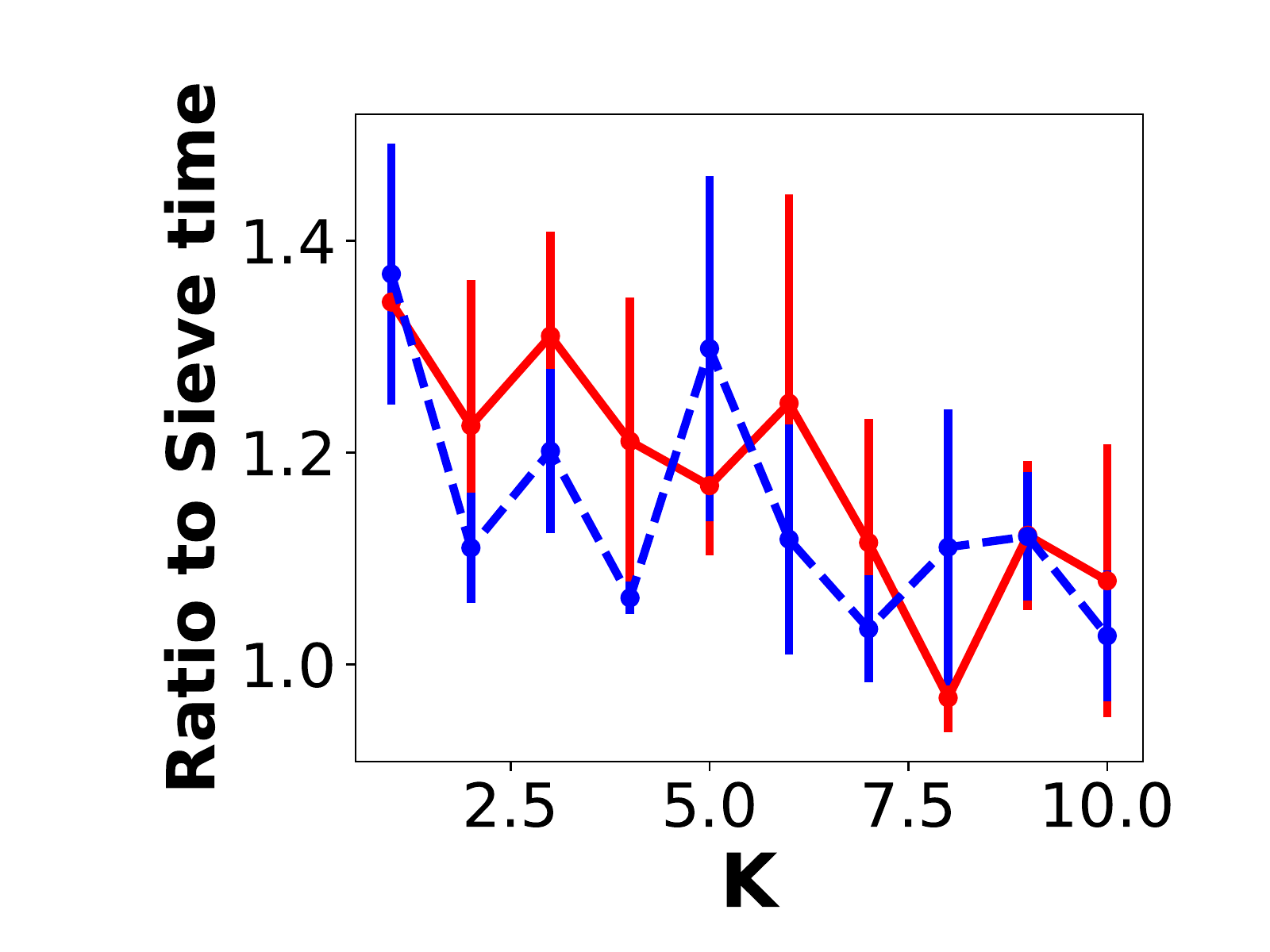}}\\
	\vspace{-0.3cm}
	\subfloat{\includegraphics[width=0.5\textwidth]{experiments/pics/legend_streaming_times.pdf}}
    \caption{Ratio of average runtime of streaming algorithms compared to the average runtime of \sieve, with error bars representing one standard deviation for each algorithm on the corresponding knapsack constraint across ten iterations.}
	\label{fig:streaming_times_bars}
\end{figure*}
\else
In the supplementary material we show that runtimes of \gmax and \gormax are similar and at most $20\%$ greater than the runtime of \greedy. 
On the other hand, even though \partenum does not outperform \gmax, it is only feasible for $d=1$ and the \texttt{ego-Facebook} dataset and uses on average almost $500$ times as much runtime for $K=10$ across ten iterations of each algorithm. 

For the streaming algorithms, Figure~\ref{fig:streaming_times_bars} shows that runtimes of \smax, \sormax, and \sieve performs generally similar; however in the case of the \texttt{com-dblp} dataset, the runtime of \smax grows faster with $K$.
This can be explained by the fact that oracle calls on larger sets typically require more time and augmenting sets typically contain more elements than sets encountered during execution of \sieve.
On the other hand, the runtime of \branching was substantially slower, and we did not include its runtime for scaling purposes. E.g. for $K=5$, the runtime of \branching was already a factor 80K more than \sieve. 
\begin{figure*}[!ht]
	\centering
	\subfloat[\texttt{com-dblp}]{\includegraphics[width=0.32\textwidth]{experiments/pics/com-dblp_streaming_times.pdf}}
	\subfloat[\texttt{ego-Facebook}]{\includegraphics[width=0.32\textwidth]{experiments/pics/ego-Facebook_streaming_times.pdf}}
	\subfloat[\texttt{ml-20}]{\includegraphics[width=0.32\textwidth]{experiments/pics/ml-20_streaming_times.pdf}}\\
	\vspace{-0.3cm}
	\subfloat{\includegraphics[width=0.5\textwidth]{experiments/pics/legend_streaming_times.pdf}}
    \caption{Ratios between average runtimes of streaming algorithms and average runtimes of \sieve over ten executions. Error bars show standard deviation across runs.}
	\label{fig:streaming_times_bars}
\end{figure*}
\fi
\ifarxiv
\paragraph{Oracle calls.} 
We also compare the number of oracle calls performed by the algorithms. 
\gmax, \gormax and \greedy require the same amount of oracle calls, since computing marginal gains and finding the best element for augmentation compute the objective on the same set.
On the other hand, \partenum requires $544$x more calls than \greedy for $K=8$.
For the streaming algorithms, the number of oracle calls made by \sieve, \smax, and \sieve, never differed by more than a factor of two, while \branching requires a factor 125K more oracle calls than \sieve for $K=8$. 
We illustrate the number of oracle calls made by these algorithms in Figure~\ref{fig:streaming_oracles}. 
\begin{figure*}[!ht]
	\centering
	\subfloat[\texttt{com-dblp}]{\includegraphics[width=0.32\textwidth]{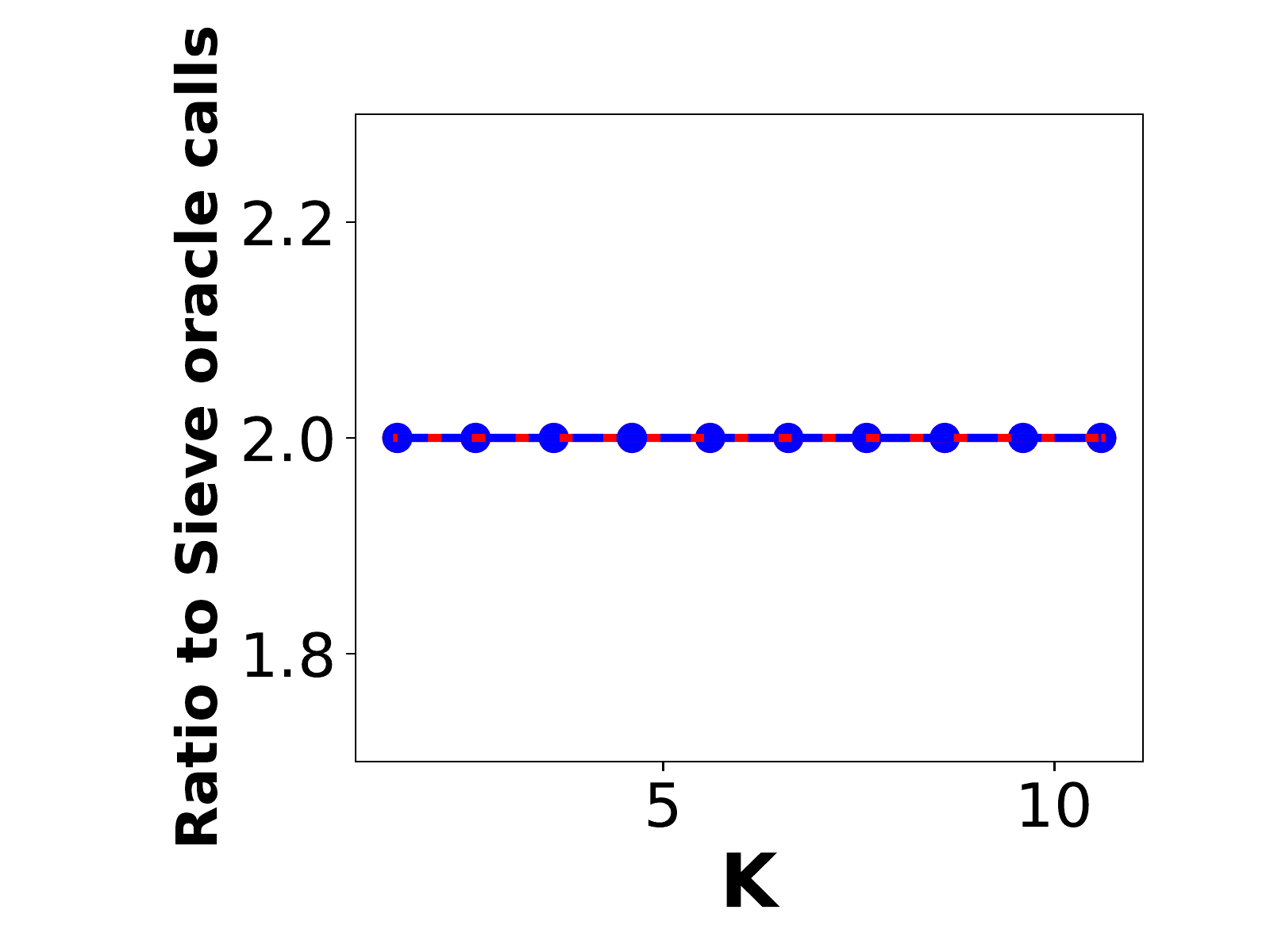}}
	\subfloat[\texttt{ego-Facebook}]{\includegraphics[width=0.32\textwidth]{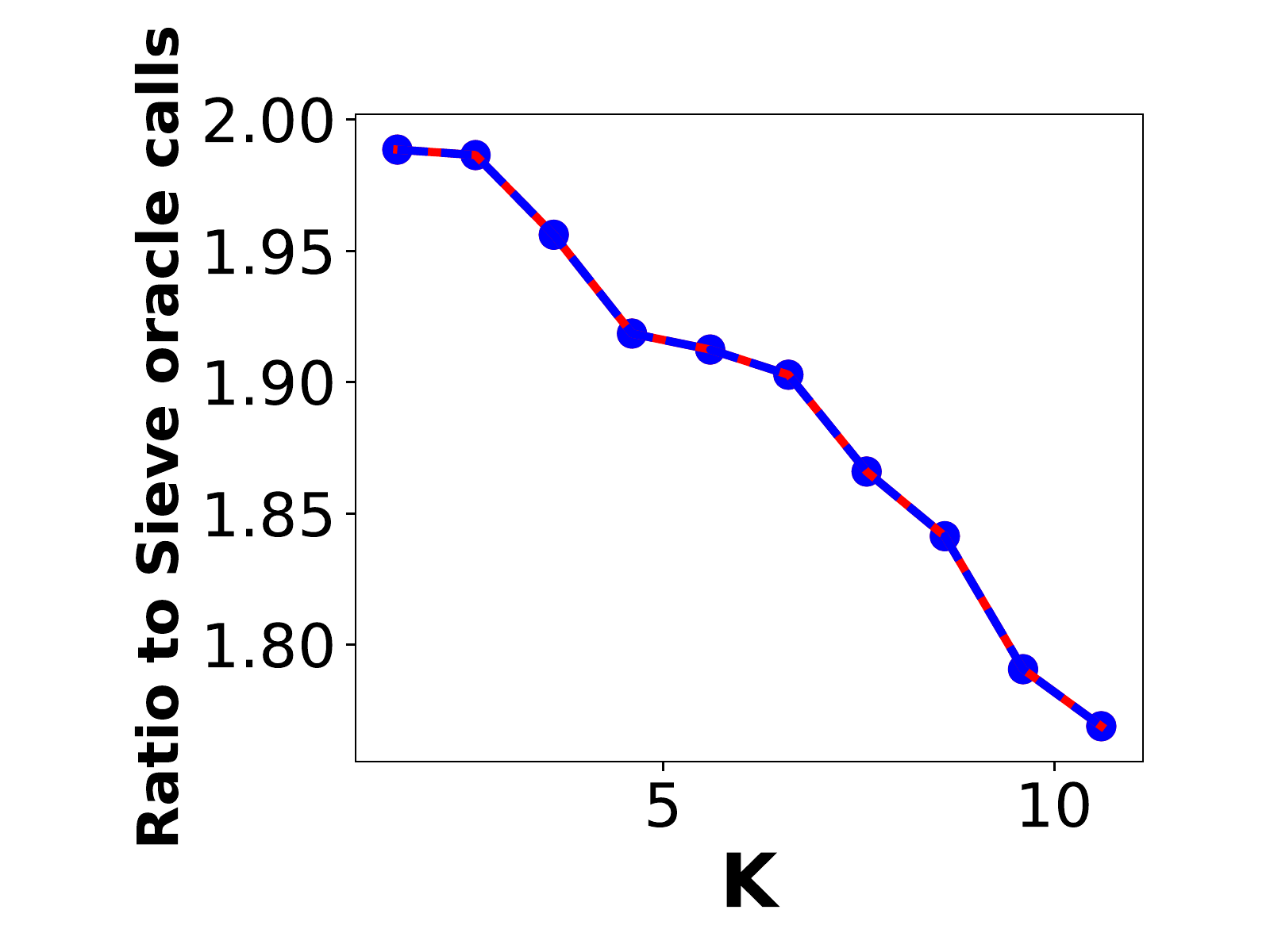}}
	\subfloat[\texttt{ml-20}]{\includegraphics[width=0.32\textwidth]{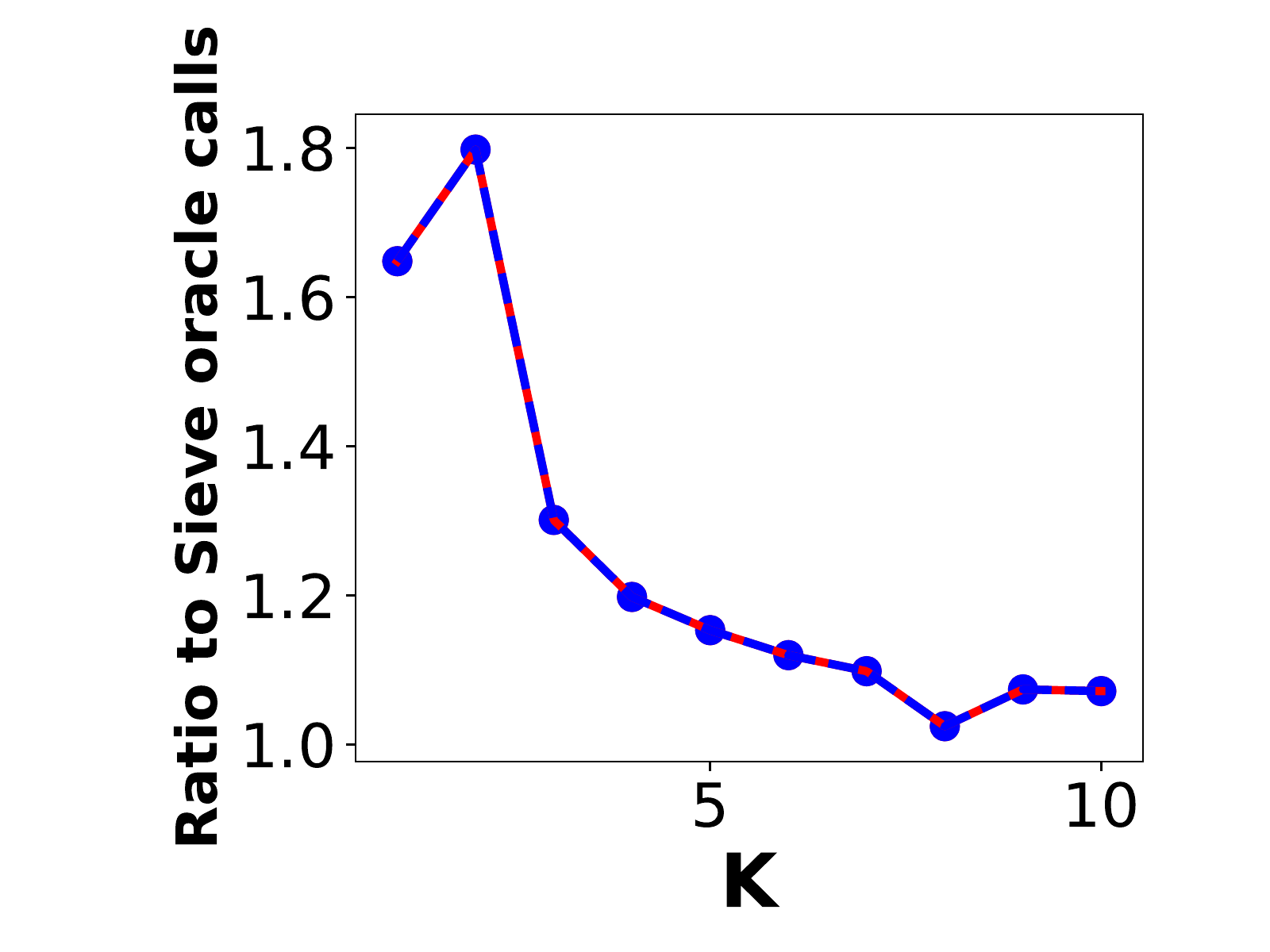}}\\
	\caption{Smoothed ratio of average number of oracle calls made by streaming algorithms compared to the average number of oracles calls made by \sieve, across ten iterations.}
	\label{fig:streaming_oracles}
\end{figure*}
\else

In the supplementary material, we also compare the number of oracle queries performed by the algorithms. \gmax, \gormax and \greedy require the same amount of oracle calls, since computing marginal gains and finding the best element for augmentation can be done using the same queries.
On the other hand, \partenum requires $544$x more calls than \greedy for $K=8$.
For the streaming algorithms, the number of oracle calls made by \sieve, \smax, and \sieve, never differed by more than a factor of two, while \branching requires a factor 125K more oracle calls than \sieve for $K=8$. 
\fi
\newpage
\bibliography{references}
\bibliographystyle{alpha}
\arxiv{
\appendix
\section{Standard greedy and thresholding inequalities}

In this section we prove the standard greedy inequality $\gval(x) \ge 1 - e^{-x}$, where $x$ is the cost of a partial greedy solution. To prove it, we first show that a differential inequality $\gval(x) + \gval'(x) \ge 1$ holds, and then integrate it using Proposition~\ref{prop:difEq}. For the thresholding algorithm a similar approximate inequality holds.

\begin{proposition}
	\label{prop:difEq}
	Let $\xi$ be a continuous and piecewise smooth function $[u, v] \to \Rp$.
	If for some $\alpha, \beta>0$ we have $\xi(x)+\alpha \xi'(x)\geq \beta$ for $u\leq x \leq v$, then
	$\xi(v)\geq  \beta+(\xi(u)-\beta)\exp{\frac{u-v}{\alpha}}$.
\end{proposition}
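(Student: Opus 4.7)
The plan is to use the standard integrating-factor trick for first-order linear ODEs, adapted to the inequality setting. First, I would rewrite the hypothesis $\xi(x) + \alpha \xi'(x) \geq \beta$ as $\xi'(x) + \tfrac{1}{\alpha}\xi(x) \geq \tfrac{\beta}{\alpha}$, which is the familiar form for an integrating factor $e^{x/\alpha}$.

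Next, I would introduce the auxiliary function $\phi(x) := (\xi(x) - \beta)\, e^{x/\alpha}$ on $[u,v]$. A direct computation (at every point where $\xi'$ exists, which is all but finitely many points by piecewise smoothness) gives
\[
\phi'(x) = e^{x/\alpha}\left(\xi'(x) + \tfrac{1}{\alpha}(\xi(x) - \beta)\right) = \tfrac{1}{\alpha}\, e^{x/\alpha}\bigl(\alpha \xi'(x) + \xi(x) - \beta\bigr) \geq 0,
\]
where the last step invokes the hypothesis. Since $\xi$ is continuous and piecewise smooth, so is $\phi$, and a continuous piecewise-smooth function with non-negative derivative wherever defined is non-decreasing on $[u,v]$. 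Hence $\phi(v) \geq \phi(u)$, i.e.
\[
(\xi(v) - \beta)\, e^{v/\alpha} \geq (\xi(u) - \beta)\, e^{u/\alpha}.
\]
Rearranging (multiplying through by $e^{-v/\alpha}$) yields $\xi(v) \geq \beta + (\xi(u) - \beta)\exp\!\bigl(\tfrac{u-v}{\alpha}\bigr)$, as claimed.

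The only subtle point is the handling of piecewise smoothness: I need to justify that pointwise non-negativity of $\phi'$ on each smooth piece, combined with continuity at the finitely many breakpoints, implies that $\phi$ is globally non-decreasing on $[u,v]$. This follows by applying the fundamental theorem of calculus on each smooth subinterval and chaining together the monotonicity inequalities across breakpoints using continuity. If the statement is interpreted with right-derivatives (as used elsewhere in the paper for $\gval'$), the argument is identical, since a continuous function whose right-derivative is non-negative wherever defined is non-decreasing. This is really the only non-routine aspect; the rest is just the integrating-factor computation carried out once.
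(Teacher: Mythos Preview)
Your proof is correct and follows essentially the same integrating-factor approach as the paper: both multiply through by $e^{x/\alpha}$ to turn the differential inequality into a monotonicity statement, then handle the piecewise-smooth case by chaining across breakpoints (the paper does this by explicit induction, you by the equivalent observation that a continuous function with non-negative derivative on each piece is non-decreasing). Your auxiliary function $\phi(x) = (\xi(x)-\beta)e^{x/\alpha}$ is a slightly cleaner packaging than the paper's direct integration of both sides, but the argument is the same.
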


\begin{proof}
    First, consider the case when $\xi$ is smooth.
	$\xi(x)+\alpha \xi'(x)\geq \beta$ implies that $\xi(x)\exp{\frac{x}{\alpha}}+\alpha \xi'(x)\exp{\frac{x}{\alpha}}\geq \beta\exp{\frac{x}{\alpha}}$ through multiplication by $\exp{\frac{x}{\alpha}}$. 
	Observe that $\xi(x)\exp{\frac{x}{\alpha}}+\alpha \xi'(x)\exp{\frac{x}{\alpha}}$ is the derivative of $\xi(x)\alpha\exp{\frac{x}{\alpha}}$. 
	Hence, $\frac{d(\xi(x)\alpha\exp{\frac{x}{\alpha}})}{dx}\geq \beta\exp{\frac{x}{\alpha}}$ implies
	\begin{align*}
	\int_{u}^{v}d(\xi(x)\alpha\exp{\frac{x}{\alpha}})
	&\geq
	\int_{u}^{v} \beta\exp{\frac{x}{\alpha}}{dx}\\
	(\xi(x)\alpha\exp{\frac{x}{\alpha}}) \Big|_{u}^{v}
	&\geq \alpha\beta\exp{\frac{x}{\alpha}}\Big|_{u}^{v}\\
	\xi(v)\alpha\exp{\frac{v}{\alpha}} - \xi(u)\alpha\exp{\frac{u}{\alpha}} 
	&\geq
	\alpha\beta\exp{\frac{v}{\alpha}} - \alpha\beta\exp{\frac{u}{\alpha}}.
	\end{align*}
	Dividing both sides by $\alpha$, 
	\begin{align*}
	\xi(v)\exp{\frac{v}{\alpha}} - \xi(u)\exp{\frac{u}{\alpha}}
	&\geq \beta\exp{\frac{v}{\alpha}} - \beta\exp{\frac{u}{\alpha}} \\
	\xi(v)
	&\geq 
	\beta+(\xi(u)-\beta)\exp{\frac{u-v}{\alpha}}.
	\end{align*}
	
	For a piecewise smooth $\xi$, let $u = x_0 < x_1 < \cdots < x_t = v$, such that $\xi$ is smooth on a segment $(x_{i}, x_{i+1})$ for any $i$.
	By induction, we prove that the inequality holds for $x_0, x_i$ for any $i$:
	$$\xi(x_i) \geq  \beta+(\xi(x_0)-\beta)\exp{\frac{x_i-x_0}{\alpha}}.$$
	The statement is true for $i=0$. Induction step:
	\begin{align*}
	    \xi(x_{i+1})
	    &\geq \beta + (\xi(x_i)-\beta)\exp{\frac{x_{i}-x_{i+1}}{\alpha}}  \\
	    &\geq \beta + (\xi(x_0)-\beta)\exp{\frac{x_0-x_i}{\alpha}}\exp{\frac{x_i-x_{i+1}}{\alpha}}  \\
	    &\geq \beta + (\xi(x_0)-\beta)\exp{\frac{x_0-x_{i+1}}{\alpha}}
	\end{align*}
\end{proof}

\begin{theorem}[Standard greedy inequality]\label{lem:gEq}
For all $x \in [0, 1-c\left(\best\right)]$, the greedy performance function $\gval$ satisfies the following differential inequality: 
\[\gval(x)+\gval'(x)\geq 1,\]
and hence also its integral version: $\gval(x) \ge 1 - e^{-x}$.
\end{theorem}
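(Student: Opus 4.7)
The plan is to follow the same scheme as the proof of Lemma~\ref{lem:hEq} (the \gmax inequality), but dropping the augmentation term involving $\best$. First I would argue it suffices to verify the inequality only at the left endpoints $x = c(\calG_{i-1})$ of the piecewise-linear pieces of $\gval$. On each interval $[c(\calG_{i-1}), c(\calG_i))$, the right derivative $\gval'$ is constant (equal to $\marden{g_i}{\calG_{i-1}}$) while $\gval$ is non-decreasing, so the sum $\gval(x) + \gval'(x)$ is non-decreasing on this interval and thus attains its minimum at the left endpoint.

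At such a point $x = c(\calG_{i-1})$ we have $\gval(x) = f(\calG_{i-1})$. Starting from $1 = f(\opt) \le f(\calG_{i-1} \cup \opt)$ by monotonicity and expanding the right-hand side by marginals, submodularity yields
\[
1 \le f(\calG_{i-1}) + \sum_{e \in \opt \setminus \calG_{i-1}} \margain{e}{\calG_{i-1}} = \gval(x) + \sum_{e \in \opt \setminus \calG_{i-1}} c(e)\,\marden{e}{\calG_{i-1}}.
\]
The key feasibility observation is that for $x \le 1 - c(\best)$ every $e \in \opt \setminus \calG_{i-1}$ still fits in the remaining capacity, since $c(e) \le c(\best) \le 1 - x$. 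Hence each such $e$ was a valid candidate at the $i$-th greedy step, so $\marden{e}{\calG_{i-1}} \le \marden{g_i}{\calG_{i-1}} = \gval'(x)$ by the greedy selection rule. Factoring this upper bound out of the sum and using the normalization $c(\opt) \le 1$ gives
\[
1 \le \gval(x) + \gval'(x) \sum_{e \in \opt \setminus \calG_{i-1}} c(e) \le \gval(x) + \gval'(x),
\]
which is the desired differential inequality at $x = c(\calG_{i-1})$; extension to all $x \in [0, 1 - c(\best)]$ follows from the monotonicity argument from the first paragraph.

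The integral form is then immediate from Proposition~\ref{prop:difEq} applied with $\alpha = \beta = 1$, $u = 0$, $v = x$, and initial condition $\gval(0) = 0$, yielding $\gval(x) \ge 1 + (0 - 1)\exp{-x} = 1 - e^{-x}$. I do not anticipate any serious obstacle: the argument is a direct simplification of the proof of Lemma~\ref{lem:hEq}, with the only minor points of care being the reduction to grid points (handled by the monotonicity of $\gval + \gval'$ on each piece) and the explicit use of the hypothesis $x \le 1 - c(\best)$ to ensure that every leftover optimum element is still a feasible greedy candidate.
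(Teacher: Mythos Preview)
Your proposal is correct and follows essentially the same approach as the paper's own proof: reduce to the left endpoints $x=c(\calG_{i-1})$, use monotonicity and submodularity to bound $1$ by $\gval(x)$ plus a sum of marginal densities, invoke $x\le 1-c(\best)$ to ensure each $e\in\opt\setminus\calG_{i-1}$ still fits so that its density is bounded by $\gval'(x)$, and conclude via $c(\opt)\le 1$ and Proposition~\ref{prop:difEq}. Your justification for the reduction to grid points and the explicit feasibility check $c(e)\le c(\best)\le 1-x$ are, if anything, slightly more detailed than the paper's.
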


\ifarxiv
\begin{proof}
Let $x\in[0, 1- c\left(\best\right)]$ and recall that by definition $\calG_{i - 1}$ is the largest set of elements selected by the greedy solution without exceeding total cost of $x$. 
Note that it suffices to show the inequality only for the left endpoints of the piecewise linear intervals of the form $[c(\calG_{i - 1}), c(\calG_i))$ as inside these intervals $\gval'$ stays constant while $\gval$ can only increase and hence the inequality holds.
Hence we can assume that $x = c(\calG_{i - 1})$ in the proof below which implies that $\gval(x) = f(\calG_{i - 1})$.

Since we normalized $f(\opt)=1$, by monotonicity: 
\begin{align*}
1&=f(\opt)\le f(\opt\cup \calG_{i - 1})\\
&= f(\calG_{i - 1})+\margain{\opt\setminus \calG_{i - 1}}{\calG_{i - 1}}.
\end{align*}
Then by submodularity and using the fact that by definition $f(\calG_{i - 1}) = g(x)$:
\begin{align*}
1 &\leq f(\calG_{i - 1})+\margain{\opt\setminus \calG_{i - 1}}{\calG_{i - 1}}\\
&\leq \gval(x) + \sum_{e\in \opt\setminus \calG_{i - 1}}\margain{e}{\calG_{i - 1}}.
\end{align*}
Since $\margain{e}{\calG_{i - 1}}=c(e)\marden{e}{\calG_{i - 1}}$:
\begin{align*}
1&\leq \gval(x) + \sum_{e\in \opt\setminus \calG_{i - 1}}\margain{e}{\calG_{i - 1}} \\
&= \gval(x) + \sum_{e\in \opt\setminus \calG_{i - 1}} c(e)\marden{e}{\calG_{i - 1}}\\ 
&\leq \gval(x) + \sum_{e\in \opt\setminus \calG_{i - 1}} c(e) \gval'(x),
\end{align*}
where the last inequality follows because greedy always picks the item with the largest marginal density and since $x \le 1 - c(\best)$ every item in $OPT \setminus \calG_{i - 1}$ can still fit into the knapsack. 
Hence,
\begin{align*}
1&\le  \gval(x) + \gval'(x)\sum_{e\in \opt\setminus \calG_{i - 1}} c(e)\\
&=  \gval(x) + \gval'(x) c(\opt\setminus \calG_{i - 1}).
\end{align*}
The desired differential inequality follows from the observation that $c(\opt\setminus \calG_{i - 1})\le c(\opt)\le 1$.
Finally, by integrating from $0$ to $x$ using the initial condition $\gval(0) = 0$, it follows that $\gval(x) \ge 1 - e^{-x}$ (by Proposition~\ref{prop:difEq}). 
\end{proof}
\fi

\begin{theorem}[Standard thresholding inequality]
\label{lem:APXgEq}
 For all $x \in [0 ,1-c(\best)]$, the thresholding performance function $\tval$ satisfies the following differential inequality: 
$$\tval(x)+(1+\epsilon)\tval'(x)\geq 1.$$
\arxiv{And hence also its integral version: $\tval(x) \ge 1 - e^{-\frac{x}{1+\eps}}$.}
\end{theorem}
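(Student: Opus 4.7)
The plan is to mirror the proof of the standard greedy inequality (Theorem~\ref{lem:gEq}), replacing the exact greedy-optimality step with a thresholding-based step that loses a factor of $(1+\eps)$. As in that proof, it suffices to verify the differential inequality only at left endpoints $x = c(\calT_{i-1})$, since $\tval$ is non-decreasing and $\tval'$ is constant on each interval $[c(\calT_{i-1}), c(\calT_i))$. Once the differential inequality is established, the integral form $\tval(x) \ge 1 - e^{-x/(1+\eps)}$ follows immediately from Proposition~\ref{prop:difEq} with $\alpha = 1+\eps$, $\beta = 1$, $u = 0$, and initial condition $\tval(0) = 0$.

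I would split the argument into two cases. For $x$ lying in the first pass, the initial threshold $\lambda/(\alpha K)$ is at least $1$ (since $\lambda \ge \alpha f(\opt) = \alpha$ after the normalization $f(\opt) = K = 1$), so $\tval'(x) \ge 1$ and the inequality is trivial. For $x = c(\calT_{i-1})$ in any subsequent pass, I would begin with monotonicity followed by submodularity:
\begin{align*}
1 \;=\; f(\opt) \;\le\; f(\calT_{i-1} \cup \opt) \;\le\; \tval(x) + \sum_{e \in \opt \setminus \calT_{i-1}} c(e)\,\marden{e}{\calT_{i-1}}.
\end{align*}
Since $x \le 1 - c(\best)$ and $\best$ has the largest cost in $\opt$, every $e \in \opt \setminus \calT_{i-1}$ individually fits into the remaining capacity and was therefore inspected during the previous pass. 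The key new ingredient is to bound $\marden{e}{\calT_{i-1}} \le (1+\eps)\tval'(x)$ for each such $e$: in the previous pass the threshold was $(1+\eps)\tau$ (where $\tau$ is the current threshold) and $e$ was not collected, so the marginal density of $e$ at the moment it was examined was strictly below $(1+\eps)\tau$, and by submodularity this quantity can only decrease as more elements are added to $T$. Summing over $e \in \opt \setminus \calT_{i-1}$ and using $c(\opt) \le 1$ then yields $\tval(x) + (1+\eps)\tval'(x) \ge 1$.

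The main obstacle will be the bookkeeping around the thresholding step: one must carefully verify that at the instant $e$ was examined in the previous pass, the partial thresholding set $T$ was a subset of $\calT_{i-1}$ so that submodularity applies in the right direction, and that $e$ was still feasible at that time (both of which follow from the fact that $c(e) \le c(\best) \le 1-x$ and from the monotone growth of $T$ across passes). Once this is settled, the rest of the calculation is essentially identical to that of Theorem~\ref{lem:gEq}.
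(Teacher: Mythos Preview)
Your proposal is correct and follows essentially the same approach as the paper's proof: both split into the first-pass case (where $\tval'(x)\ge \lambda/(\alpha K)\ge 1$ makes the inequality trivial) and the subsequent-pass case, then use monotonicity and submodularity to reduce to bounding each $\marden{e}{\calT_{i-1}}$ by $(1+\eps)\tval'(x)$, and finally apply Proposition~\ref{prop:difEq} for the integral form. Your justification of the key bound via the previous-pass threshold and submodularity is exactly the mechanism the paper uses (phrased there as ``the thresholding algorithm always selects an item whose marginal density is at least $(1+\eps)^{-1}$ times the best marginal density available''), and the bookkeeping you flag about $T'\subseteq\calT_{i-1}$ and feasibility of $e$ is precisely what is needed.
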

\ifarxiv

\begin{proof}	
Let $p \in [0,1]$ be the total cost of the elements collected by the thresholding algorithm in the first pass.
First, note that for the first pass when $x \in [0,p]$ the differential inequality follows trivially as $\tval'(x) \ge \frac{\lambda}{\alpha K} \ge 1$ since $\lambda \ge \alpha f(\opt)$ and by our normalization $f(\opt) = K = 1$.
Fix $x\in[p, 1- c\left(\best\right)]$ and recall that by definition $\calT_{i - 1}$ is the largest set of elements selected by the thresholding algorithm without exceeding total cost of $x$. Similarly to the previous proofs it suffices to consider only the left endpoints of the intervals of the form $[c(\calT_{i - 1}, \calT_i))$ so we assume $x = c(\calT_{i - 1})$.
Since we normalized $f(\opt)=1$, then by monotonicity: 
\begin{align*}
1&=f(\opt) \le f(\opt\cup \calT_{i-1}) \\
&= f(\calT_{i-1})+\margain{\opt\setminus \calT_{i-1}}{\calT_{i-1}}.    
\end{align*}
Then by submodularity and using the fact that by definition $\tval(x) = f(\calT_{i - 1})$:
\begin{align*}
1 &\le f(\calT_{i-1})+\margain{\opt\setminus \calT_{i-1}}{\calT_{i-1}}\\
&\leq \tval(x) + \sum_{e\in \opt\setminus \calT_{i-1}}\margain{e}{\calT_{i-1}}.
\end{align*}
Since $\margain{e}{\calT_{i-1}}=c(e)\marden{e}{\calT_{i-1}}$:
\begin{align*}
1&\leq \tval(x) + \sum_{e\in \opt\setminus \calT_{i-1}}\margain{e}{\calT_{i-1}}\\
&= \tval(x) + \sum_{e\in \opt\setminus \calT_{i-1}} c(e)\marden{e}{\calT_{i-1}}\\
&\le \tval(x) + \sum_{e\in \opt\setminus \calT_{i-1}} c(e) \tval'(x)(1+\eps),
\end{align*}
where the last inequality follows because after the first pass $\tval'(x) \ge \frac{\marden{e}{\calT_{i - 1}}}{1 + \eps}$ for all $e \in \opt \setminus \calT_{i - 1}$. Indeed, note that in all passes except the first one the thresholding algorithm always selects an item whose marginal density is at least $(1 + \eps)^{-1}$ times the best marginal density available. Since $t'(x)$ is the density of this item and all items in $\opt \setminus \calT_{i - 1}$ still fit (as $x \le 1 - c(\best)$) we have $(1 + \eps)\tval'(x) \ge \max_{e \in \opt \setminus \calT_{i - 1}} \marden{e}{\calT_{i - 1}}$ as desired.
Hence:
\begin{align*}
1&\le \tval(x) + \sum_{e\in \opt\setminus \calT_{i-1}} c(e) \tval'(x)(1+\eps)\\
&=  \tval(x) + (1+\eps)\tval'(x) c(\opt\setminus \calT_{i-1}).
\end{align*}
The desired differential inequality follows from the observation that $c(\opt\setminus \calT_{i-1})\le c(\opt)=1$.

For the integral version we integrate the differential inequality between $0$ and $x$ with the initial condition $t(0) =  0$ (formally, apply Proposition~\ref{prop:difEq} with $\alpha = 1+\eps, \beta = 1, u = 0, v = x$) and get $t(x) \ge 1 - e^{-\frac{x}{1+\eps}}$, as desired.
\end{proof}
\fi

\section{Omitted proofs}
\begin{fact}
\label{fact:calculation}
For all $0\le x\le 1$,
\[(1-x)e^{2x-1}\le\frac{1}{2}.\]
\end{fact}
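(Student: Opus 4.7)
\begin{proofof}{Fact~\ref{fact:calculation}}
The plan is a one-variable calculus argument: define $h(x) = (1-x)e^{2x-1}$ on $[0,1]$, locate its unique critical point, verify it is a global maximum on the interval, and check that the maximum value equals $\nicefrac{1}{2}$.

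First, I would compute the derivative using the product rule:
\begin{align*}
h'(x) = -e^{2x-1} + 2(1-x)e^{2x-1} = e^{2x-1}(1 - 2x).
\end{align*}
Since $e^{2x-1}>0$ everywhere, the sign of $h'(x)$ matches the sign of $1-2x$. Hence $h'(x) > 0$ for $x \in [0,\nicefrac{1}{2})$ and $h'(x) < 0$ for $x \in (\nicefrac{1}{2},1]$, so $h$ is strictly increasing on $[0,\nicefrac{1}{2}]$ and strictly decreasing on $[\nicefrac{1}{2},1]$. Therefore the global maximum of $h$ on $[0,1]$ is attained at $x = \nicefrac{1}{2}$.

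Finally I would evaluate $h(\nicefrac{1}{2}) = (1 - \nicefrac{1}{2})\cdot e^{2\cdot \nicefrac{1}{2}-1} = \nicefrac{1}{2}\cdot e^0 = \nicefrac{1}{2}$, which gives $h(x) \le \nicefrac{1}{2}$ for all $x \in [0,1]$, as desired. There is no real obstacle here: the only thing to watch is keeping track of signs when differentiating, and noting that the inequality is in fact tight at $x=\nicefrac{1}{2}$ (which matches the tight instance in Example~\ref{ex:tight-example}).
\end{proofof}
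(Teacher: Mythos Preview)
Your proof is correct and essentially identical to the paper's: both define the function, compute its derivative as $e^{2x-1}(1-2x)$, observe that the sign change at $x=\nicefrac{1}{2}$ makes this the global maximum on $[0,1]$, and evaluate the maximum to be $\nicefrac{1}{2}$.
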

\begin{proof}
Let $r(x)=(1-x)e^{2x-1}$ and note that $r'(x)=(1-2x)e^{2x-1}$ so that $r'(x)>0$ for $x\in\left[0,\frac{1}{2}\right]$ and $r'(x)\le0$ for $x\in\left[\frac{1}{2},1\right]$. 
Hence, it follows that $r\left(\frac{1}{2}\right)=\frac{1}{2}$ is a local maximum and so $(1-x)e^{2x-1}\le\frac{1}{2}$ for all $0\le x\le 1$. 
\end{proof}

\begin{fact}
\label{fact:second:calculation}
\[\left(1-\frac{c\left(o_1\right)}{1+\eps}\right)\exp {\frac{2 c\left(o_1\right) - 1}{1+\eps}}\le\frac{1}{2}+\eps.\]
\end{fact}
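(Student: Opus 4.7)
The plan is to reduce this $\eps$-perturbed inequality to the unperturbed Fact~\ref{fact:calculation} via a substitution. Set $x := c(o_1)/(1+\eps)$. Since $c(o_1) \in [0,1]$ (as $c(\opt) = K = 1$), we have $x \in [0, 1/(1+\eps)] \subseteq [0,1]$, so Fact~\ref{fact:calculation} applies at $x$.

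Rewriting the exponent with $y := c(o_1) = (1+\eps)x$, I would compute
\[
\frac{2y-1}{1+\eps} \;=\; 2x - \frac{1}{1+\eps} \;=\; (2x-1) + \frac{\eps}{1+\eps}.
\]
This lets the left-hand side factor as
\[
\left(1-\frac{c(o_1)}{1+\eps}\right)\exp\frac{2c(o_1)-1}{1+\eps}
\;=\; (1-x)\,e^{2x-1}\,\cdot\,\exp\frac{\eps}{1+\eps}.
\]
By Fact~\ref{fact:calculation}, the first factor is at most $\tfrac{1}{2}$, so it suffices to establish the auxiliary inequality $\exp\!\bigl(\eps/(1+\eps)\bigr) \le 1 + 2\eps$ for all $\eps \ge 0$.

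I expect this auxiliary inequality to be the main (though still routine) obstacle, since the naive bound $\eps/(1+\eps) \le \eps$ gives $e^\eps$, which is too weak for large $\eps$. Instead, I would define $F(\eps) := 1 + 2\eps - \exp(\eps/(1+\eps))$, note $F(0) = 0$, and show $F'(\eps) \ge 0$ for all $\eps \ge 0$. Computing $F'(\eps) = 2 - \exp(\eps/(1+\eps))\cdot (1+\eps)^{-2}$, the required bound $\exp(\eps/(1+\eps)) \le 2(1+\eps)^2$ is easy: for $\eps \ge 1/5$ it already follows from $\exp(\eps/(1+\eps)) < e < 2(1+\eps)^2$, while for $\eps \in [0, 1/5]$ one has $\exp(\eps/(1+\eps)) \le e^{1/6} < 2 \le 2(1+\eps)^2$. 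Combining this with the factorization above yields the claim.
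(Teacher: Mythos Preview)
Your proposal is correct and matches the paper's own proof essentially step for step: both substitute $x = c(o_1)/(1+\eps)$, factor the left-hand side as $(1-x)e^{2x-1}\cdot\exp\bigl(\eps/(1+\eps)\bigr)$, apply Fact~\ref{fact:calculation} to the first factor, and then reduce to the auxiliary bound $\exp\bigl(\eps/(1+\eps)\bigr)\le 1+2\eps$ via the derivative inequality $\tfrac{d}{d\eps}\exp\bigl(\eps/(1+\eps)\bigr)\le 2$. The only difference is cosmetic: the paper states the derivative bound directly (for $\eps\in[0,1]$) without further justification, while you spell out the verification via a case split at $\eps=1/5$ and obtain it for all $\eps\ge 0$.
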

\begin{proof}
By Fact~\ref{fact:calculation},
\[\left(1-\frac{c\left(o_1\right)}{1+\eps}\right)\exp {\frac{2 c\left(o_1\right) - 1}{1+\eps}}\le\frac{1}{2}\exp {\frac{\eps}{1+\eps}}.\]
Hence it suffices to show that $\exp {\frac{\eps}{1+\eps}}\le 1+2\eps$, which follows from the fact that $\frac{d}{dx}\exp{\frac{x}{1+x}}\le 2$ for $0\le x\le 1$.
\end{proof}
We now describe a generalization to a knapsack constraint of the algorithm of \cite{KazemiMZLK19} that computes a constant factor approximation to maximum submodular maximization under a cardinality constraint, using small space and a small number of queries. 
\begin{algorithm}
\textbf{Input}: Stream of elements $E=e_1,\ldots, e_n$, knapsack capacity $K$, cost function $c(\cdot)$, non-negative monotone submodular function $f$, and an approximation parameter $\eps>0$\;
\textbf{Output}: A set $S$ that is a $\left(\frac{1}{3}-\epsilon\right)$-approximation for submodular maximization with a knapsack constraint\;
$\tau_{\min},\Delta,\LB\gets 0$\;
\For{each item $e_i$}{
    \If{$f(e_i)>\Delta$}{
        $e\gets e_i,\Delta\gets f(e_i)$
    }
    $\tau_{\min}=\frac{\max(2\LB,2\Delta)}{3K}$\;
    Discard all sets with $S_{\tau}$ with $\tau<\tau_{\min}$\;
    \For{$\tau\in\{(1+\eps)^i|\tau_{\min}/(1+\eps)\le(1+\eps)^i\le\Delta\}$}
    {
        \If{$\tau$ is a new threshold}{
            $S_{\tau}\gets\emptyset$
        }
        \If{$c(S_{\tau})<K$ and $\marden{e}{S_{\tau}}\ge\tau$}{
            $S_{\tau}\gets S_{\tau}\cup\{e\}$ and $\LB\gets\max\{\LB,f(S_{\tau})\}$
        }
    }
}
\Return $\argmax\{f(S_{\tau}),f(e)\}$
\caption{Space efficient constant factor approximation}\label{alg:constantStream}
\end{algorithm}

\begin{theorem}
\label{thm:constant:stream}
There exists a one-pass streaming algorithm that outputs a $\left(\frac{1}{3}-\epsilon\right)$-approximation to the submodular maximization under knapsack constraint that uses $O\left(\frac{K}{\eps}\right)$ space and $O\left(\frac{n\log K}{\eps}\right)$ total queries.
\end{theorem}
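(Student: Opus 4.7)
The plan is to prove correctness by contradiction, adapting the single-pass analysis of~\cite{KazemiMZLK19} from the cardinality setting to the knapsack setting, and then read off the resource bounds by standard accounting on the multi-threshold sieve. I would assume toward contradiction that $\text{ALG} = \max(\LB, \Delta) < (\nicefrac{1}{3} - \eps) f(\opt)$ at the end of execution, and let $\tau^* = (1+\eps)^{i^*}$ where $i^*$ is the largest integer with $(1+\eps)^{i^*} \le f(\opt)/(3K)$, so that $\tau^* \in [f(\opt)/(3K(1+\eps)), f(\opt)/(3K)]$. The first step is to verify that $\tau^*$ is effectively maintained throughout the execution. On the lower side, the hypothesis forces $\tau_{\min}/(1+\eps) = 2\max(\LB, \Delta)/(3K(1+\eps)) < 2(\nicefrac{1}{3} - \eps) f(\opt)/(3K(1+\eps)) < \tau^*$ at every step. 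On the upper side, any element $e$ processed while $\tau^* > \Delta$ has $f(e) \le \Delta < \tau^*$, so its marginal density is at most $f(e)/c(e) \le f(e) < \tau^*$ (using $c(e) \ge 1$), meaning such elements would be rejected in any case; hence the final $S_{\tau^*}$ coincides with the output of a hypothetical fixed-threshold $\tau^*$ algorithm run from the start.

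Next I would case-split on whether the largest-cost OPT element $\best$ still fits into the final $S_{\tau^*}$. In the sub-case $c(S_{\tau^*}) + c(\best) \le K$, every element of $\opt$ could fit at every prior time (since $S_{\tau^*}$ only grows in cost), so all non-inclusions were density rejections, yielding $\marden{e}{S_{\tau^*}} < \tau^*$ via submodularity. Submodular telescoping gives $f(\opt) \le f(\opt \cup S_{\tau^*}) \le f(S_{\tau^*}) + \tau^* \sum_{e \in \opt} c(e) \le f(S_{\tau^*}) + \tau^* K$, hence $f(S_{\tau^*}) \ge f(\opt) - \tau^* K \ge \nicefrac{2}{3} f(\opt)$, contradicting $\LB \ge f(S_{\tau^*})$ being less than $(\nicefrac{1}{3} - \eps) f(\opt)$.

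In the complementary sub-case $c(S_{\tau^*}) > K - c(\best)$, I would combine three bounds: the density lower bound $f(S_{\tau^*}) \ge \tau^* c(S_{\tau^*})$, the singleton bound $\Delta \ge f(\best)$, and a telescoping bound $f(\opt) \le f(S_{\tau^*}) + \tau^* K + |R| \Delta$, where $R \subseteq \opt$ collects the cost-rejected OPT elements (each contributing at most $\Delta$ to the telescoping sum and each of cost strictly greater than $K - c(S_{\tau^*})$, while density-rejected elements contribute at most $\tau^* c(e)$). The structural constraint $\sum_{e \in R} c(e) \le K$ bounds $|R|$, and the weighted-average inequality $\max(f(S_{\tau^*}), \Delta) \ge (f(S_{\tau^*}) + |R| \Delta)/(|R| + 1)$ combines with the density bound to yield $\text{ALG} \ge (\nicefrac{1}{3} - \eps) f(\opt)$, contradicting the assumption. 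The hard part will be this sub-case: carefully balancing the density bound on $f(S_{\tau^*})$ against the number of cost-rejected OPT elements and the singleton value $\Delta$ is precisely what forces the constant $\nicefrac{1}{3}$ and the factor of 3 in the definition $\tau_{\min} = 2\max(\LB, \Delta)/(3K)$.

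For the resource bounds, the active threshold window $[\tau_{\min}/(1+\eps), \Delta]$ contains $O(\log K/\eps)$ thresholds at any time and $\tau_{\min}$ is non-decreasing so thresholds are only discarded, which combined with the fact that each $S_\tau$ has cost at most $K$ yields $\O{K/\eps}$ total storage by amortizing against the total cost added across all sets. Each of the $n$ streamed items is tested against at most $O(\log K/\eps)$ active thresholds, each test using a single oracle query for the marginal density, giving $\O{n \log K/\eps}$ queries in a single pass.
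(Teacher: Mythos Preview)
The paper's own proof is literally two citations: it invokes~\cite{HuangKY17} (Algorithm~2) for the $(\nicefrac{1}{3}-\eps)$-approximation guarantee and~\cite{KazemiMZLK19} (Theorem~1) for the $\O{K/\eps}$ space and $\O{n\log K/\eps}$ query bounds. Your proposal is therefore a genuinely different, self-contained argument rather than a comparison of approaches. Your threshold-maintenance step and your first sub-case (all of $\opt$ density-rejected) are correct and standard.

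Your second sub-case, however, does not close as sketched. From the telescoping bound $f(\opt)\le f(S_{\tau^*})+\tau^*K+|R|\,\Delta$ together with $\tau^*\le f(\opt)/(3K)$ you get $f(S_{\tau^*})+|R|\,\Delta\ge \tfrac{2}{3}f(\opt)$, and the weighted-average inequality then yields only $\max(f(S_{\tau^*}),\Delta)\ge \tfrac{2}{3(|R|+1)}f(\opt)$. This is $\ge \tfrac13 f(\opt)$ only when $|R|\le 1$, but nothing in your outline forces $|R|\le 1$: for instance if $c(S_{\tau^*})=\tfrac{3}{4}K$ then each cost-rejected $\opt$ element has cost $> K/4$, and two such elements (say each of cost $3K/8$) can coexist in $\opt$. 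In that regime the density bound $f(S_{\tau^*})\ge \tau^* c(S_{\tau^*})$ gives only about $f(\opt)/4$, and the weighted average gives about $\tfrac{11}{36}f(\opt)$; neither reaches $\tfrac13$. You will need an additional ingredient (for example, subadditivity over the elements of $\opt$, or a more careful accounting of $c(D)$ in the telescoping bound rather than the crude $\tau^*K$) to recover the constant. Note also that the algorithm as written in the paper checks $c(S_\tau)<K$ rather than $c(S_\tau\cup e)\le K$; under that test, ``cost-rejected'' means $c(S_\tau)\ge K$ at the time, so $f(S_\tau)\ge \tau K$ immediately and the whole sub-case is trivial --- your harder version corresponds to a different (more standard) feasibility test.

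On the resource side, your claim that amortization yields $\O{K/\eps}$ space is asserted but not argued; the naive count is $\O{(K\log K)/\eps}$ (up to $\O{\log K/\eps}$ active thresholds, each storing up to $K$ items). The improvement to $\O{K/\eps}$ in~\cite{KazemiMZLK19} relies crucially on the feedback loop through $\LB$ (each accepted element raises some $f(S_\tau)$, which raises $\LB$, which raises $\tau_{\min}$ and prunes low thresholds), and you would need to reproduce that argument rather than just name it.
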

\begin{proof}
Since Algorithm~\ref{alg:constantStream} uses the same threshold as Algorithm 2 in~\cite{HuangKY17}, it outputs a $\frac{1}{3}-\epsilon$-approximation. 
On the other hand, by Theorem 1 in~\cite{KazemiMZLK19}, Algorithm~\ref{alg:constantStream} uses space $O\left(\frac{K}{\eps}\right)$ and query complexity $O\left(\frac{n\log K}{\eps}\right)$. 
\end{proof}
Hence, by setting $\eps=\frac{1}{6}$, we obtain the following:
\begin{corollary}
There exists a one-pass streaming algorithm that outputs a $\frac{1}{6}$-approximation to the submodular maximization under knapsack constraint that uses $O(K)$ space and $O(n\log K)$ total queries.
\end{corollary}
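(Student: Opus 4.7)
The algorithm is a natural combination of two prior ideas: the one-pass thresholded sieve of~\cite{HuangKY17} that yields a $\left(\nicefrac{1}{3}-\eps\right)$-approximation under a knapsack constraint, and the dynamic pruning scheme of~\cite{KazemiMZLK19} that discards obsolete thresholds to bound the memory. I would therefore split the proof into a correctness argument, which reuses the HKY17 analysis, and a resource argument, which adapts the KMZLK19 accounting to the knapsack setting.

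\textbf{Correctness.} I would first check that the geometric family $\{(1+\eps)^i\}$ retained by Algorithm~\ref{alg:constantStream} always contains a threshold $\tau$ within a $(1+\eps)$ factor of the ideal value $\tau^\ast \approx \nicefrac{f(\opt)}{3K}$. Indeed, $\tau^\ast \ge \tau_{\min}$ at all times because $\tau_{\min}$ is bounded in terms of a current lower bound on $f(\opt)$, and $\tau^\ast \le \Delta$ since $\Delta \ge \max_e f(e) \ge f(\opt)/\minnk$. With that threshold in hand, I would follow the HKY17 case split on $S_\tau$: either $c(S_\tau)$ nearly exhausts the knapsack, so $f(S_\tau) \ge \tau\, c(S_\tau) \gtrsim \nicefrac{f(\opt)}{3}$ by the threshold condition; or every element of $\opt$ not selected has marginal density below $\tau$ with respect to $S_\tau$, in which case submodularity gives $f(S_\tau \cup \opt) - f(S_\tau) \le \tau K \le \nicefrac{f(\opt)}{3}$. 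Combining $f(S_\tau)$ with the best singleton $e$, which absorbs the single largest OPT element that might have been excluded by the fit check, yields a $\left(\nicefrac{1}{3}-\eps\right)$-approximation after accounting for the $(1+\eps)$ geometric slack.

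\textbf{Space and queries.} The key invariant is that $\tau_{\min} \ge \nicefrac{2\Delta}{3K}$ while the maximum maintained threshold is $\Delta$, so at every moment the number of simultaneously active thresholds is at most $\log_{1+\eps}\!\left(\nicefrac{3K}{2}\right) = O(\nicefrac{\log K}{\eps})$. Each stream element is probed against every active $S_\tau$ exactly once, giving the claimed $O(\nicefrac{n\log K}{\eps})$ query bound. For space, I would invoke the charging argument of~\cite{KazemiMZLK19}: an item stored in $S_\tau$ contributes cost $c(e) \ge 1$ with density at least $\tau$ to that set, so $c(S_\tau) \le K$ caps the number of items in any one set, and across the $O(\nicefrac{\log K}{\eps})$ active thresholds a telescoping/geometric argument over the thresholds (which differ by factors of $1+\eps$) bounds the total stored item count by $O(\nicefrac{K}{\eps})$.

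\textbf{Main obstacle.} The routine part is the approximation calculation, which essentially copies the HKY17 template. The delicate point is the $O(\nicefrac{K}{\eps})$ space bound, which is strictly better than the naive $O(K)\cdot O(\nicefrac{\log K}{\eps})$ obtained by multiplying ``sets'' by ``items per set.'' I would need to verify that the KMZLK19 geometric charging, originally designed for a cardinality constraint, still goes through when cost replaces cardinality. This is plausible because the threshold condition directly relates $f(S_\tau)$ to $c(S_\tau)$ in the same multiplicative way as for cardinality, but it is the one place where a cursory appeal to the prior work is not sufficient and where I would work out the details in full.
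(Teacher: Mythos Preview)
Your approach is essentially the same as the paper's: correctness from~\cite{HuangKY17}, resources from~\cite{KazemiMZLK19}. In fact the paper's proof of this corollary is a single sentence --- ``set $\eps=\tfrac{1}{6}$ in Theorem~\ref{thm:constant:stream}'' --- and Theorem~\ref{thm:constant:stream} itself is proved by two bare citations to those same papers. You are simply unpacking what those citations contain, which is fine but more than is required.

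One small omission: you sketch the $\left(\tfrac{1}{3}-\eps\right)$-approximation with $O(K/\eps)$ space and $O(n\log K/\eps)$ queries, but never perform the final specialization. You should state explicitly that choosing $\eps=\tfrac{1}{6}$ gives $\tfrac{1}{3}-\tfrac{1}{6}=\tfrac{1}{6}$, $O(K/\eps)=O(6K)=O(K)$, and $O(n\log K/\eps)=O(n\log K)$; otherwise the reader is left to connect your general bounds to the corollary's specific constants.

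Your ``main obstacle'' is well identified: the $O(K/\eps)$ space bound is indeed the only nontrivial claim, and you are right that the KMZLK19 charging argument needs to be re-examined with costs in place of unit sizes. The paper does not carry this out either --- it simply invokes Theorem~1 of~\cite{KazemiMZLK19} --- so your instinct to flag it as the step deserving care is sound, even if the paper treats it as a black box.
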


\ignore{
\subsection{Basic inequalities}
The case when $u=f(u)=0$ in Proposition~\ref{prop:difEq} is of particular interest to us.
\begin{corollary}
\label{cor:ex}
For any $\alpha, \beta>0$, $v\in [0,1]$ and any function $f:[0,1]\rightarrow \Rp$ such that $f(x)+\alpha f'(x)\geq \beta$ for $0\leq x \leq v$, we have $f(v)\geq\beta(1-\exp{-\frac{v}{\alpha}})$.
\end{corollary}
}
}

\end{document}